\newcommand{\xcl}{LR-digraph\,\,}
\newcommand{\xcls}{LR-digraphs\,\,}
\newcommand{\xclssans}{LR-digraphs}
\newcommand{\xclsans}{LR-digraph}
\newcommand{\xclit}{{\em LR-digraph\,\,}}
\newcommand{\consec}{C1P-digraph\,\,}
\newcommand{\consecsans}{C1P-digraph}
\newcommand{\consecs}{C1P-digraphs\,\,}
\newcommand{\bmp}{\begin{minipage}}
\newcommand{\emp}{\end{minipage}}
\newcommand{\f}{\begin{it} mfvs\end{it}}
\newcommand{\fa}{\begin{it} amfvs\end{it}}
\newcommand{\tnae}{\begin{it} t_{NAE}\end{it}}
\newcommand{\ts}{\begin{it} t_{s}\end{it}}
\newcommand{\pathh}{\xrightarrow{*}}
\newtheorem{thm}{Theorem}
\newtheorem{cor}{Corollary}
\newtheorem{fait}{Claim}
\newtheorem{rmk}{Remark}
\newtheorem{defin}{Definition}
\newtheorem{ex}{Example}
\newtheorem{pb}{Problem}
\newcommand{\bp}{\begin{pb}\rm}
\newcommand{\ep}{\end{pb}}
\newcommand{\br}{\begin{rmk}\rm}
\newcommand{\er}{\end{rmk}}
\newcommand{\bdefin}{\begin{defin}\rm}
\newcommand{\edefin}{\end{defin} }
\newcommand{\bex}{\begin{ex}\rm}
\newcommand{\eex}{\end{ex}}
\newcommand{\bthm}{\begin{thm}}
\newcommand{\ethm}{\end{thm}}
\newcommand{\bcor}{\begin{cor}}
\newcommand{\ecor}{\end{cor}}
\newcommand{\bfn}{\begin{fait}}
\newcommand{\efn}{\end{fait}}
\begin{document}

\begin{center}
{\large\bf Min (A)cyclic Feedback Vertex Sets and Min Ones Monotone 3-SAT}
\bigskip

Irena Rusu \footnote{Email: Irena.Rusu@univ-nantes.fr}
\medskip

LS2N, UMR 6004, Universit\'e de Nantes\\ 2 rue de
la Houssini\`ere, BP 92208, 44322 Nantes, France
\end{center}
\bigskip\bigskip

\begin{abstract}
In directed graphs, we investigate the problems of finding: 1) a minimum feedback vertex set 
(also called the {\sc Feedback Vertex Set} problem,  or {\sc MFVS}),  2) a feedback vertex set inducing
an acyclic graph (also called the {\sc Vertex 2-Coloring without Monochromatic Cycles} problem, or {\sc Acyclic} FVS)
and 3) a minimum feedback vertex set inducing an acyclic graph ({\sc Acyclic MFVS}). 

We show that these  problems are strongly related to (variants of) {\sc Monotone 3-SAT}  and {\sc Monotone NAE 3-SAT}, 
where monotone means that all literals are in positive form. As a consequence, we deduce several NP-completeness
results on restricted versions of these problems. In particular, we define the {\sc 2-Choice} version of an 
optimization problem to be its restriction where the optimum value is known to be either $D$ or $D+1$ for some integer $D$, 
and the problem is reduced to decide which of $D$ or $D+1$ is the optimum value. We show that the {\sc 2-Choice} versions 
of  {\sc MFVS}, {\sc Acyclic MFVS}, {\sc Min Ones Monotone 3-SAT} and {\sc Min Ones Monotone NAE 3-SAT} 
are NP-complete. The two latter problems are the variants of {\sc Monotone 3-SAT}  and respectively {\sc Monotone NAE 3-SAT} 
requiring that the truth assignment minimize the number of variables set to true. 

Finally, we propose two classes of directed graphs for which {\sc Acyclic} FVS is polynomially solvable,
namely  flow reducible graphs (for which MFVS is already known to
be polynomially solvable) and  \consecs (defined by an adjacency matrix with the Consecutive Ones Property). 
\end{abstract}

\section{Introduction}

A {\em feedback vertex set} (abbreviated FVS) of a directed graph (or {\em digraph}) $G=(V,E)$ is a set $S\subsetneq V$ such that 
$S$ contains at least one vertex from each cycle of $G$.  Then we say that $S$ {\em covers} the cycles in $G$.
The term of {\em cycle cutset} or simply {\em cutset} 
is also used in the literature to name $S$. The {\sc Feedback Vertex Set} problem (abbreviated {\sc MFVS})
requires to find a FVS $S$ of minimum size in $G$.  
A {\em vertex 2-coloring without monochromatic cycle} of $G$ is a coloring of the vertices in $V$ 
with two colors, such that no cycle of $G$ has all its vertices of the same color. Each
of the colors thus defines a set of vertices inducing an acyclic graph, and each of them may therefore be seen as an 
acyclic FVS of $G$. The problem of deciding whether an acyclic FVS exists for a given digraph $G$ is classically called 
{\sc Vertex 2-coloring without monochromatic cycle}. We abbreviate it as {\sc Acyclic FVS}, in order to emphasize its
relationship with feedback vertex sets.

{\sc MFVS} has applications in path analysis of flowcharts of computer programs \cite{Shamir}, deadlock recovery in operating
systems \cite{Wang},
constraint satisfaction and Bayesian inference \cite{Bar}. The NP-completeness of {\sc MFVS} in directed graphs has been 
established in \cite{Karp}, and - given the reduction from {\sc Vertex Cover} - it implies both the APX-hardness of the problem,
and an extension of the results to undirected graphs.
The NP-completeness stands even for directed graphs with indegree and outdegree upper bounded by 2, as well as for 
planar graphs with indegree and outdegree upper bounded by 3 \cite{GJ}.
 Several classes of directed graphs for which MFVS is polynomially solvable
are known, including reducible flow graphs \cite{Shamir}, cyclically reducible graphs \cite{Wang},
quasi-reducible graphs \cite{Rosen} and completely contractible graphs \cite{Levy}. The best approximation
algorithms \cite{Seymour,Even} reach an approximation factor of  $O(\min\{log \tau^* log log \tau^*, logn\, log logn\})$, 
where $\tau^*$ is the optimal fractional solution in the natural LP relaxation of the problem. In 
the undirected case,  more intensively studied, many classes of graphs admitting polynomial solutions are known, 
as for instance interval graphs \cite{Lu}, permutation graphs~\cite{Brandstadt} and cocomparability graphs \cite{Coorg}.  
The best approximation algorithms \cite{Bafna,Becker} reach an approximation factor of 2. A review on MFVS
may be found in \cite{Festa}.

{\sc Acyclic FVS} has applications in micro-economics, and more particularly in the study of rationality
of consumption behavior \cite{deb2008efficient,Nobibon,Nobibon2}. The problem has been introduced in 
\cite{deb2008acyclic,deb2008efficient} together with a proof of NP-completeness in directed graphs.
But the problem is NP-complete even in simple directed graphs with no opposite arcs \cite{Nobibon} 
(called {\em oriented graphs}). A variant requiring that $S$ covers only the cycles
of a fixed length $k$ also turns out to be NP-complete for each $k\geq 3$ \cite{Karpinski}.
Very few particular classes of directed graphs for which the
problem becomes polynomial are known: line-graphs of directed graphs \cite{Nobibon2}, oriented planar graphs of maximum
degree 4 and oriented outerplanar graphs \cite{Nobibon}. In each graph of these classes, an acyclic FVS always exists. 
Several results not related to our work here
also exist on the undirected case but are in general devoted to the {\sc Vertex $k$-Coloring with No Monochromatic
Cycle problem} which is a generalization of {\sc Acyclic FVS} (see \cite{Nobibon} for more information).
By similarity with MFVS, we introduce here the problem {\sc Acyclic MFVS}, which requires to find a minimum size 
FVS that induces an acyclic graph, if such FVS exist.

In this paper, we consider only simple directed graphs (with no loops or multiple arcs). The NP-complete\-ness results
we prove concern simple directed graphs containing no pair of opposite arcs, {\em i.e.} oriented graphs. 
We define 3c-digraphs as a class in which it is sufficient to cover the 3-cycles 
in order to cover all the cycles. Starting with a set $\mathcal{C}$ of 3-literal clauses, we 
associate with $\mathcal{C}$ a set $\mathcal{F}$ of 3-literal clauses with literals in positive form,
and we associate with $\mathcal{F}$ a {\em representative graph} denoted $G^\triangleleft(\mathcal{F})$. 
Then we show various relationships between 3-SAT problems and FVS problems, according to the 
outline below. In this diagram, a {\em one} is a variable which is set to true.
\bigskip

{\small
\noindent\begin{tabular}{lp{0.25cm}lp{0.25cm}l}
 $\mathcal{C}$&$\rightarrow$&$\mathcal{F}$&$\rightarrow$&$G^\triangleleft(\mathcal{F})$\\ \hline
 set of 3-literal &&set of 3-literal clauses&& graph defined by the clauses, \\
 clauses&& with positive literals&&in $\mathcal{F}$, whose vertex set is $U$\\ 
 on variable set $X$& &on variable set $U$&& and which is a 3c-digraph\\ \hline
 satisfies 3-SAT &iff& satisfies {\sc Not-All-Equal 3-SAT}&&\\ \hline
 && satisfies 3-SAT with a truth&iff& $S$ is a FVS \\
 && assignment whose ones define the set $S\subsetneq U$ &&\\ \hline
 && satisfies {\sc Not-All-Equal 3-SAT} with a &iff& $S$ is an acyclic FVS\\
 && truth assignment whose ones define the set $S$&&\\ \hline
 
\end{tabular}}
\bigskip

Due to these results, MFVS and {\sc Acyclic MFVS} are close to the problems
{\sc Min Ones Monotone 3-SAT} and {\sc Min Ones Monotone Not-All-Equal 3-SAT}. We use these relationships to
deduce new hardness results on these problems. In particular, we define the {\sc 2-Choice} version of an 
optimization problem to be its restriction where the optimum value is known to be either $D$ or $D+1$ for some integer $D$, 
and the problem is reduced to decide which of $D$ or $D+1$ is the optimum value. We show that the {\sc 2-Choice} versions 
of  {\sc MFVS}, {\sc Acyclic MFVS}, {\sc Min Ones Monotone 3-SAT} and {\sc Min Ones Monotone NAE 3-SAT} are NP-complete.
In addition, we also show the NP-completeness of {\sc Acyclic FVS} even for 3c-digraphs.
The second part of the paper proposes two classes of graphs
for which {\sc Acyclic FVS} is polynomially solvable. More particularly, the graphs in these classes always
have an  acyclic FVS, and it may be found in polynomial time. These are the reducible flow graphs \cite{Hecht2} and the class
of  \consecs defined by an adjacency matrix with the Consecutive Ones Property.

The paper is organized as follows. In Section \ref{sect:notations} we give the main definitions and notations, and
precisely state the problems we are interested in. Section \ref{sect:reduction} presents the properties of the main reduction,
and gives a first NP-completeness result. Section \ref{sect:hardness} contains the other hardness results.  The
polynomial cases are studied in Section \ref{sect:classes}.
Section \ref{sect:conclusion} is the
Conclusion.

\section{Definitions and notations}\label{sect:notations}

We denote a directed graph as  $G=(V,E)$, and  its subgraph induced by a set $V'\subseteq V$ as $G[V']$.
Given an arc $vw$ from $v$ to $w$, $w$ is called a {\em successor} of $v$, and $v$ is called a  {\em predecessor} of $w$.
The set of successors (resp. predecessors) of $v$ is denoted $N^+(v)$ (resp. $N^-(v)$). Moreover,
$N^+[v]$ (resp. $N^-[v]$) is the notation for $N^+(v)\cup\{v\}$ (resp. $N^-(v)\cup\{v\}$).
A {\em 3-cycle digraph} (or a {\em 3c-digraph}) is a directed graph in which each cycle has three 
vertices defining a 3-cycle. (All the cycles used here are directed). 

Focusing now on instances of satisfiability problems, let $\mathcal{C}$ be a set of 3-literal clauses over a set of 
given variables $X=\{x_1, x_2, \ldots, x_n\}$. A literal is in {\em positive form} if it equals a variable, and
in {\em negative form}  otherwise. A truth assignment of the variables in $X$ such that each clause
has at least one true literal is called a {\em standard} truth assignment. When the truth assignment is such that
each clause has at least one true and at least one false literal, it is called a {\em not-all-equal (NAE)} truth assignment. 
Let $C_r=(l_a^r\vee l_b^r\vee l_c^r)$ be any clause of $\mathcal{C}$, and assume its literals 
are ordered from left to right. We define the (non-transitive) relation $\triangleleft$ as: 

\begin{equation}
l_a^r\triangleleft l_b^r, l_b^r \triangleleft l_c^r\, \hbox{and}\, l_c^r\triangleleft l_a^r
\label{eq:-1}
\end{equation}

\noindent where we assume that all clauses contribute to the same relation  $\triangleleft$, defined over all literals in $\mathcal{C}$.

Then we define $G^\triangleleft(\mathcal{C})$ to be the {\em representative graph} of $\mathcal{C}$ associated with the relation $\triangleleft$,
whose vertices are the literals and such that  $ll'\, \hbox{is an arc}$ iff the relation
$l\triangleleft l'$ has been established by at least one clause note that there are no multiple arcs).

A cycle of $G^\triangleleft(\mathcal{C})$ is called a {\em strongly 3-covered cycle} 
if it contains three vertices that are the three literals of some clause in $\mathcal{C}$.
These vertices thus induce a 3-cycle. 
Moreover, say that a set of 3-literal clauses $\mathcal{C}$, in which the order of literals is fixed, 
is in {\em strongly 3-covered form} if all the cycles of the graph $G^\triangleleft(\mathcal{C})$ associated 
with the relation $\triangleleft$  are strongly 3-covered cycles.  Note that not  all the sets $\mathcal{C}$ of 3-literal clauses
admit a strongly 3-covered form (examples are easy to build).

We present below the list of problems we are interested in. The notation $\ts(\mathcal{C})$ (resp. $\tnae(\mathcal{C})$) represents 
the  minimum number of true variables (or {\em ones}) in a standard (resp. NAE) truth assignment of  $\mathcal{C}$, 
if such an assignment exists; otherwise,   $\ts(\mathcal{C})$  (resp. $\tnae(\mathcal{C})$) equals the number of variables plus 1. Moreover, $\f(G)$ (resp. $\fa(G)$) is the cardinality of a 
minimum FVS (respectively a minimum acyclic FVS) of $G$. The {\em restrictions} mentioned in the list below
indicate the names of the subproblems we deal with, and which are explained below.
\bigskip

\noindent\begin{minipage}[t]{0.5\textwidth}
\noindent{\sc 3-SAT}\\
\noindent {\bf Input:} A set $\mathcal{C}$ of 3-literal clauses over a set of given variables.\\ 
\noindent {\bf Question:} Is there a standard truth assignment for the variables? 
 
\end{minipage}
\begin{minipage}[t]{0.5\textwidth}
 \hspace*{1cm} \hfill {{\bf Restrictions:}\\ \hspace*{1cm} \hfill M 3-SAT}\\
 \end{minipage}
 
\bigskip
\noindent\begin{minipage}[t]{0.5\textwidth}
\noindent{\sc Min Ones 3-SAT} ({\sc Min1 3-SAT}) \\
\noindent {\bf Input:} A set $\mathcal{C}$ of 3-literal clauses over a set of given variables. A variable $k$.\\ 
\noindent {\bf Question:} Is there a standard truth assignment for the variables with no more than $k$ true variables? 
\end{minipage}
\begin{minipage}[t]{0.5\textwidth}
\hspace*{1cm} \hfill{{\bf Restrictions:}\\ \hspace*{1cm} \hfill {\sc Min1-M 3-SAT}\\ \hspace*{1cm} \hfill {\sc 2-Choice-Min1-M 3-SAT}}
\end{minipage}
\bigskip

\noindent\begin{minipage}[t]{0.5\textwidth}
\noindent{\sc NAE 3-SAT} \\
\noindent {\bf Input:} A set $\mathcal{C}$ of 3-literal clauses over a set of given variables.\\ 
\noindent {\bf Question:} Is there a NAE truth assignment for the variables?
 \end{minipage}\begin{minipage}[t]{0.5\textwidth}
 \hspace*{1cm} \hfill {{\bf Restrictions:}\\ \hspace*{1cm} \hfill M-NAE 3-SAT}\\
 \end{minipage}
 
\bigskip
\noindent\begin{minipage}[t]{0.5\textwidth}
\noindent{\sc Min Ones NAE 3-SAT} ({\sc Min1-NAE 3-SAT})\\
\noindent {\bf Input:} A set $\mathcal{C}$ of 3-literal clauses over a set of given variables. A variable $k$.\\ 
\noindent {\bf Question:} Is there a NAE truth assignment for the variables with no more than $k$ true variables? 
 \end{minipage}\begin{minipage}[t]{0.5\textwidth}
 \hspace*{1cm} \hfill {{\bf Restrictions:}\\ \hspace*{1cm} \hfill {\sc Min1-M-NAE 3-SAT}\\  \hspace*{1cm} \hfill  {\sc 2-Choice-Min1-M-NAE 3-SAT}}\\
 \end{minipage}

 \bigskip
\noindent\begin{minipage}[t]{0.5\textwidth}
\noindent{\sc  Feedback Vertex Set } ({\sc MFVS }) \\
\noindent {\bf Input:} A directed graph $G=(V,E)$. A positive integer $k$.\\ 
\noindent {\bf Question:} Is it true that  $\f(G)\leq k$? 
 \end{minipage}\begin{minipage}[t]{0.5\textwidth}
 \hspace*{1cm} \hfill {{\bf Restrictions:}\\ \hspace*{1cm} \hfill {\sc 2-Choice-MFVS}}\\
 \end{minipage}
\bigskip

\noindent\begin{minipage}[t]{0.5\textwidth}
\noindent{\sc Acyclic FVS}\\
\noindent {\bf Input:} A directed graph $G=(V,E)$.\\ 
\noindent {\bf Question:} Is there a FVS $S\subseteq V$ such that\\ $G[V\setminus S]$ is acyclic? 
 \end{minipage}
 
\bigskip
\noindent\begin{minipage}[t]{0.5\textwidth}
\noindent{\sc Min Acyclic FVS} ({\sc Acyclic MFVS }) \\
\noindent {\bf Input:} A directed graph $G=(V,E)$. A positive integer $k$.\\ 
\noindent {\bf Question:} Is it true that $\fa(G)\leq k$? 
 \end{minipage}\begin{minipage}[t]{0.5\textwidth}
 \hspace*{1cm} \hfill {{\bf Restrictions:}\\ \hspace*{1cm} \hfill {\sc 2-Choice-Acyclic MFVS}}\\
 \end{minipage}
\bigskip

For each of the four satisfiability problems above we have the {\sc Monotone} restriction, asking that the clauses in the input be made 
only of  literals in positive form. These restrictions are identified by a supplementary M in the name of the problem (see the
first subproblem in the list of restrictions of each 3-SAT problem). The {\sc 2-Choice} restriction is defined for
each minimization problem. Here the input is reduced to instances for which the minimized parameter
is known to be either $D$ or $D+1$, for a given integer $D$, and the question is - as in the initial problem - whether the 
parameter is at most $D$ (or, equivalently, equal to $D$, given the hypothesis). The {\sc 2-Choice} restriction of the
{\sc Min1-M 3-SAT} problem is therefore:

\bigskip
\noindent{\sc 2-Choice Min1-M 3-SAT}\\
\noindent {\bf Input:} A set $\mathcal{C}$ of 3-literal clauses, all in positive form, over a set of given variables. An integer $D$
such that $D\leq \ts(\mathcal{C})\leq D+1$.\\
\noindent {\bf Question:} Is it true that $\ts(\mathcal{C})\leq D$? (equivalently, is it true that  $\ts(\mathcal{C})= D$?)
\bigskip

The {\sc 2-Choice} restrictions for the other minimization problems are similarly defined.

\begin{figure}
\centering
 \includegraphics[angle=0, width=11cm]{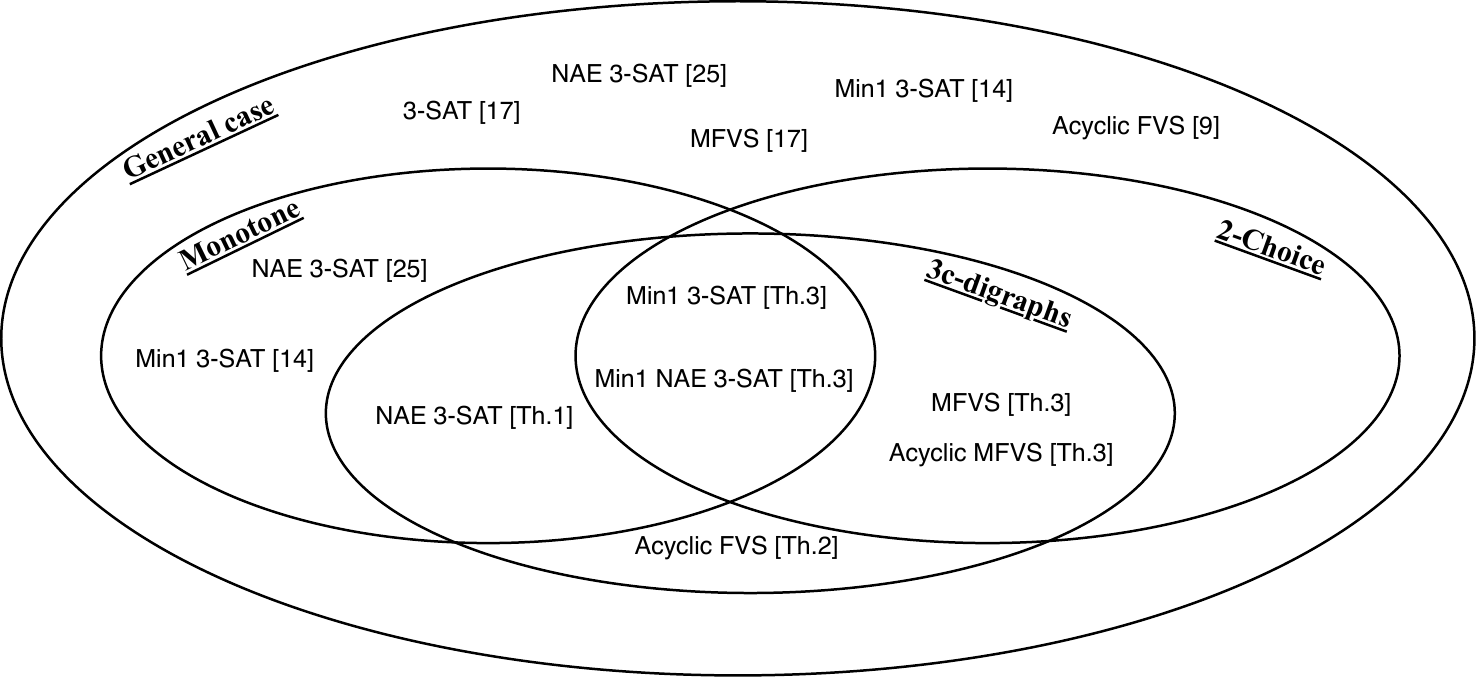}

 \caption{\label{fig:diagramme} Existing and new hardness results on the problems in our list. The external ellipse contains the problems in their most
 general form. Each internal ellipse represents a restriction: {\sc Monotone}, {\sc 2-Choice} or 3c-digraphs. Concerning
 the last restriction, it means that the input graph is a 3c-digraph for FVS problems, respectively that the representative graph 
 is a 3c-digraph for 3-SAT problems. Note that we always use the names of the initial (general) problems,
 the restrictions being deduced from inclusions into one or several ellipses.}
 \end{figure}

{\sc Fig.} \ref{fig:diagramme} indicates known NP-completeness results about the problems in our list, as well as the six 
NP-completeness results  we prove here (the six problems in the ellipse representing the restriction to 3c-digraphs). 
The NP-completeness of 3-SAT is due to Karp \cite{Karp}, whereas that of NAE 3-SAT and {\sc M-NAE 3-SAT} results from 
general theorems proved by Schaefer \cite{Schaefer}. The problem {\sc M 3-SAT} is obviously polynomial, but
{\sc Min1-M 3-SAT} is NP-complete (implying that Min1 3-SAT is NP-complete), since it is the same as 3-{\sc Hitting Set} and
{\sc Vertex Cover} for 3-Uniform Hypergraphs \cite{GJ}. As indicated above, MFVS and {\sc Acyclic} FVS are also NP-complete.

\section{From 3-SAT to M-NAE 3-SAT}\label{sect:reduction}

Our proofs are based on a classical sequence of reductions from 3-SAT to  {\sc M-NAE 3-SAT}, and on the good 
properties of the representative graph $G^\triangleleft(\mathcal{F})$ of the resulting set of clauses $\mathcal{F}$.

Let $\mathcal{C}$ be a set of $m$ 3-literal clauses, whose literals may be in positive or negative form, on the variable set 
$X=\{x_1, x_2,$ $\ldots, x_n\}$. We assume w.l.o.g. that no clause of 3-SAT 
contains the same variable twice (regardless to the positive or negative form). Otherwise, such clauses may be either removed (in
case both forms are present) or replaced with equivalent suitable clauses ($(l\vee l\vee l')$ for instance
may be replaced with $(l\vee l'\vee u)$ and $(l\vee l'\vee \overline{u})$
where $u$ is a new variable). Then in each clause $C_r=(l_i\vee l_j\vee l_k)$  we assume $i<j<k$ and
$l_u \in \{x_u, \overline{x_u}\}$, for all $u\in\{i, j, k\}$.  When  $\mathcal{C}$ is considered as an instance of  
3-SAT,  each clause $C_r=(l_i\vee l_j\vee l_k)$ is successively replaced with
sets of clauses, as in Table \ref{table:red}, so as to successively reduce 3-SAT to 
NAE 4-SAT, to NAE 3-SAT and to M-NAE 3-SAT. Note that NAE 4-SAT is stated similarly to NAE 3-SAT except that
the clauses have four literals.

\begin{table}[t]

\centering {\footnotesize
\begin{tabular}{|l|l|l|l|}
\hline
{\bf Reduction}&{\bf New variable set}&{\bf Set of clauses replacing clause}&{\bf Precisions} \\
&&\hspace*{1cm}$C_r=(l_i\vee l_j\vee l_k)$&\\ \hline
3-SAT to NAE 4-SAT& $U_1=\{y_1, y_2, \ldots, y_n,z\}$&$(h_i\vee h_j\vee h_k\vee z)$&$h_u=y_u$ if $l_u=x_u$\\
&&&$h_u=\overline{y_u}$ if $l_u=\overline{x_u}$\\ \hline
NAE 4-SAT to NAE 3-SAT& $U_2=\{y_1, y_2, \ldots, y_n,z,$ &$(h_i\vee h_j\vee w_r), (\overline{w_r}\vee h_k\vee z)$&\\ 
&\hspace*{1cm}$ w_1, w_2, \ldots, w_m\}$&&\\ \hline
NAE 3-SAT to& $U_3=\cup_{g\in U_2\setminus\{z\}}\{\alpha_g, \beta_g,  $&(Basic clauses)&$\gamma_u=\alpha_{y_u}$ if $h_u=y_u$\\
 M-NAE 3-SAT&$a_g, b_g, c_g\}\cup \{z\}$ &$(\gamma_i\vee \gamma_j\vee \alpha_{w_r}),(\beta_{w_r}\vee \gamma_k\vee z),$ &$\gamma_u=\beta_{y_u}$ if $h_u=\overline{y_u}$\\
 &&(Consistency clauses)&\\
& & $\cup_{g\in U_2\setminus\{z\}}\{(\alpha_g\vee \beta_g\vee a_g),$& One occurrence. \\
&& $ (\alpha_g\vee \beta_g\vee b_g), (\alpha_g\vee \beta_g\vee c_g),$& (do not duplicate \\
&& $ (a_g\vee b_g\vee c_g)\}$& for each clause)\\ \hline

\end{tabular}
}

\caption{\label{table:red} Successive reductions from 3-SAT to M-NAE 3-SAT. Explanations are given in the text.}
\end{table}

These reductions are explained as follows. In the first one (3-SAT to NAE 4-SAT), each variable $y_u$ is related with $x_u$   
in the sense that $x_u$ is assigned the value true in the 3-SAT instance iff the truth assignment in 
NAE 4-SAT is such that $y_u\neq z$. Equivalently, $l_u$ is true in the 3-SAT instance iff $h_u\neq z$ in the NAE 4-SAT instance. 
Furthermore, a new variable $w$ is added for each clause when the transition from
4-literal clauses  to 3-literal clauses is performed, making that each 4-literal clause is replaced with an equivalent 
pair of 3-literal clauses. Finally, in order to eliminate literals in negative form in the reduction from NAE 3-SAT to M-NAE 3-SAT,
each variable $g\in U_2\setminus\{z\}$ (since $z$ is already present only in positive form) is replaced with two variables $\alpha_g$ and $\beta_g$, 
respectively representing its literals in positive and negative form. Then,
each clause from NAE 3-SAT is replaced with its corresponding clause, a literal in positive (resp. negative) form being replaced with
its corresponding $\alpha_g$ (resp. $\beta_g$)  literal in positive form. 
The consistency clauses added for each variable guarantee that each pair of variables $\alpha_g$ and $\beta_g$ must have different values
in a NAE truth assignment. 

The final set $\mathcal{F}$ of 3-literal clauses (whose literals are all in positive form) associated to the initial
set $\mathcal{C}$ of 3-literal clauses is then:

\begin{multline}
\label{eq:1}
\mathcal{F}=\cup_{C_r=(l_i\vee l_j\vee l_k), C_r\in\mathcal{C}}\{\underbrace{(\gamma_i\vee \gamma_j\vee \alpha_{w_r})}_{F_r},\underbrace{(\beta_{w_r}\vee \gamma_k\vee z)}_{F'_r}\}\,\, \cup\\
\hspace*{2cm}\cup_{g\in U_2\setminus\{z\}}\{\underbrace{(\alpha_g\vee \beta_g\vee a_g)}_{F_{1g}}, \underbrace{(\alpha_g\vee \beta_g\vee b_g)}_{F_{2g}}, \underbrace{(\alpha_g\vee \beta_g\vee c_g)}_{F_{3g}}, \underbrace{(a_g\vee b_g\vee c_g)\}}_{F_{4g}}
\end{multline}
\medskip

\noindent where each $\gamma_u$ is either $\alpha_{y_u}$ or $\beta_{y_{u}}$ ($1\leq u\leq n$) according to the rules in
Table \ref{table:red}. In order to fix the terminology,
the four consistency clauses associated with a variable $g$ in $U_2\setminus\{z\}$ are denoted $F_{1g}$, $F_{2g}$, $F_{3g}$, $F_{4g}$ from left to right,
whereas the two basic clauses $(\gamma_i\vee \gamma_j\vee \alpha_{w_r})$ and $(\beta_{w_r}\vee \gamma_k\vee z)$ for some fixed $r\in\{1, 2, \ldots,m\}$ 
are respectively denoted $F_r$ and $F'_r$. We also denote by $N=|U_3|$ the total number of variables in $\mathcal{F}$,
which therefore satisfies $N=5(n+m)+1$.

The set $\mathcal{F}$ of clauses resulting from $\mathcal{C}$ in this way is said to be the {\em M-NAE version} of $\mathcal{C}$.
The successive equivalences between problems in Table \ref{table:red} (with their corresponding inputs)  are not very
difficult to show (see \cite{3sat} or \cite{moore} for a proof) and imply that:

\begin{fait}
 3-SAT with input $\mathcal{C}$ has a solution  iff M-NAE 3-SAT with input $\mathcal{F}$ has a solution.
\label{claim:SATred}
 \end{fait}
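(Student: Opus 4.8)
The plan is to verify Claim \ref{claim:SATred} by composing the three reductions in Table \ref{table:red}, checking that each arrow preserves solvability. First I would establish the equivalence ``3-SAT on $\mathcal{C}$ is satisfiable iff NAE 4-SAT on the clause set $\{(h_i\vee h_j\vee h_k\vee z) : C_r\in\mathcal{C}\}$ is satisfiable''. For the forward direction, given a standard assignment of $\mathcal{C}$, set $z$ to false and $y_u$ so that $h_u$ matches the truth value of $l_u$; then each NAE clause has a true literal among $h_i,h_j,h_k$ and the literal $z$ is false, so the clause is not-all-equal. Conversely, given an NAE assignment, if $z$ is true we may flip every variable (an NAE assignment stays NAE under global complementation), so we may assume $z$ is false; then each clause must have a true literal among $h_i,h_j,h_k$, giving a standard assignment of $\mathcal{C}$ via $l_u$ true iff $h_u\neq z$.

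Next I would handle NAE 4-SAT to NAE 3-SAT: the single 4-clause $(h_i\vee h_j\vee h_k\vee z)$ is replaced by $(h_i\vee h_j\vee w_r)$ and $(\overline{w_r}\vee h_k\vee z)$. The standard argument is that for any assignment of $h_i,h_j,h_k,z$ that is not-all-equal on those four literals, there is a choice of $w_r$ making both 3-clauses not-all-equal, and conversely any NAE assignment of the two 3-clauses forces the four original literals to be not-all-equal (if $h_i,h_j,h_k,z$ were all equal, then the first clause forces $w_r$ opposite to $h_i,h_j$, but then $\overline{w_r}$ equals $h_k=z$, contradiction). I would then treat NAE 3-SAT to M-NAE 3-SAT: replacing each variable $g$ by $\alpha_g,\beta_g$ and adding the consistency clauses $F_{1g},\dots,F_{4g}$. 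The key lemma here is that in any NAE assignment of $\{F_{1g},F_{2g},F_{3g},F_{4g}\}$ one must have $\alpha_g\neq\beta_g$: if $\alpha_g=\beta_g=$ true, the clause $(\alpha_g\vee\beta_g\vee a_g)$ already has two trues so $a_g$ must be false, similarly $b_g,c_g$ false, but then $(a_g\vee b_g\vee c_g)$ is all-false; symmetrically if $\alpha_g=\beta_g=$ false then $a_g,b_g,c_g$ all true and $(a_g\vee b_g\vee c_g)$ is all-true. Conversely, given $\alpha_g\neq\beta_g$ one sets $a_g,b_g,c_g$ not all equal (e.g.\ two of one value) to satisfy all four. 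Having $\alpha_g\neq\beta_g$ exactly mirrors the behaviour of the single variable $g$ and its negation, so the basic clauses $F_r,F'_r$ are NAE-satisfiable iff the corresponding NAE 3-SAT clauses are.

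Chaining these three equivalences gives Claim \ref{claim:SATred}. I would note that the only subtlety worth spelling out is the use of global complementation (invariance of NAE-satisfiability under flipping all variables) in the first step to normalize $z$ to false, and the consistency-gadget lemma forcing $\alpha_g\neq\beta_g$ in the last step; everything else is the textbook chain of reductions and could be cited (as the excerpt does, to \cite{3sat,moore}) rather than re-derived. The main obstacle, such as it is, is purely bookkeeping: making sure the renaming conventions ($h_u$ vs.\ $y_u$, $\gamma_u$ vs.\ $\alpha_{y_u}/\beta_{y_u}$) line up consistently across the three layers so that a literal true in $\mathcal{C}$ corresponds unambiguously to a variable set to true in $\mathcal{F}$ on the $\alpha$-side; this is exactly the correspondence that the later, harder results (relating ones of $\mathcal{F}$ to feedback vertex sets of $G^\triangleleft(\mathcal{F})$) will rely on, so I would state the variable correspondence explicitly even though the bare satisfiability claim does not strictly need it.
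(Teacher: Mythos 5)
Your proposal is correct and follows exactly the route the paper takes: the chain of three reductions in Table \ref{table:red}, whose step-by-step equivalences the paper merely asserts (citing \cite{3sat,moore}) and which you verify in detail — the normalization of $z$ via global complementation, the $w_r$-splitting of the 4-clauses, and the consistency-gadget lemma forcing $\alpha_g\neq\beta_g$ are all sound. No gaps.
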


Or, equivalently, $\mathcal{C}$ has a standard truth assignment iff $\mathcal{F}$ has a NAE truth assignment.
Now, we fix  for each clause of $\mathcal{F}$ the order of literals used in Equation (\ref{eq:1}) (from left to right, in each clause). 
Define the relation $\triangleleft$ according to Equation(\ref{eq:-1}) and consider the representative graph $G^\triangleleft(\mathcal{F})$. 
Note that, since $\mathcal{F}$ has only literals in positive form, the set of vertices of $G^\triangleleft(\mathcal{F})$ 
is $U_3$.

\begin{fait}
Let $\mathcal{C}$ be any set of 3-literal clauses on a variable set, and let $\mathcal{F}$ be its 
M-NAE version. The representative graph  $G^\triangleleft(\mathcal{F})$ is oriented, and each of its cycles is strongly 3-covered. 
\label{claim:tiny}
\end{fait}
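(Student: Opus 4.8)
The plan is to establish the two claimed properties of $G^\triangleleft(\mathcal{F})$ separately, in both cases by a careful case analysis on the clause structure of $\mathcal{F}$ as displayed in Equation~(\ref{eq:1}). First I would fix notation for the arc set: by the definition of the representative graph and the relation $\triangleleft$ in Equation~(\ref{eq:-1}), each clause $(\ell_a\vee\ell_b\vee\ell_c)$ of $\mathcal{F}$ contributes exactly the three arcs $\ell_a\ell_b$, $\ell_b\ell_c$, $\ell_c\ell_a$, i.e. a directed 3-cycle on its three (distinct, positive) literals. So $G^\triangleleft(\mathcal{F})$ is the union of the directed triangles coming from the clauses $F_r$, $F'_r$ ($1\le r\le m$) and $F_{1g},F_{2g},F_{3g},F_{4g}$ ($g\in U_2\setminus\{z\}$).

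To show $G^\triangleleft(\mathcal{F})$ is oriented, I must check that no pair of opposite arcs $\ell\ell'$ and $\ell'\ell$ both arise. An arc $\ell\ell'$ records that $\ell$ immediately precedes $\ell'$ (cyclically) in some clause; an opposite pair would require two clauses of $\mathcal{F}$ in which the same unordered pair $\{\ell,\ell'\}$ occurs in opposite cyclic orders. Here I would classify the variables of $U_3$ by role — the $\alpha_g,\beta_g$ (for $g=y_u$ or $g=w_r$), the $a_g,b_g,c_g$, and $z$ — and check which pairs of literals ever co-occur in a clause and in what order. The $a_g,b_g,c_g$ variables appear only in the four clauses $F_{1g},\dots,F_{4g}$ attached to a single $g$; within that gadget one verifies directly (it is the standard ``force $\alpha_g\ne\beta_g$'' gadget) that no unordered pair repeats with reversed orientation. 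The pairs involving $\alpha_g/\beta_g$ with $\alpha_{w_r}/\beta_{w_r}/z$ occur only in the basic clauses $F_r,F'_r$, and since a given $\gamma_u$ (hence a given $\alpha_{y_u}$ or $\beta_{y_u}$) may appear in several basic clauses, the key point is that its cyclic neighbours there are always of the form (another $\gamma$, then $\alpha_{w_r}$) or ($\beta_{w_r}$, then $z$) — one checks the orientation is consistent across all clauses. This bookkeeping is the first place where something could go wrong, so I would organize it as a short finite table of ``which literal pairs co-occur, and in which of the two cyclic orders.''

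For the second, main part — every cycle of $G^\triangleleft(\mathcal{F})$ is strongly 3-covered, i.e. is itself one of the triangles $F_r,F'_r,F_{ig}$ — the strategy is to show that the only cycles present at all are these triangles. Since $G^\triangleleft(\mathcal{F})$ is a union of arc-disjoint (by the orientedness argument, and because distinct clauses of $\mathcal{F}$ are distinct as ordered triples) triangles, any cycle that is not one of these triangles must pass through a vertex $v$ using an in-arc from one triangle and an out-arc from another — so $v$ is a \emph{cut-vertex-like} meeting point of $\ge 2$ triangles. I would argue that the ``gadget'' variables $a_g,b_g,c_g$ have all their incident arcs inside the single $\{F_{1g},\dots,F_{4g}\}$ block and that block, as a digraph on $\{\alpha_g,\beta_g,a_g,b_g,c_g\}$, contains no cycle other than its four defining triangles (a $5$-vertex check). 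Hence any longer cycle lives in the subgraph spanned by the basic clauses together with the triangles $\{F_{ig}\}$ only through their $\alpha_g,\beta_g$ vertices; there one shows, by tracking the strict roles ``$\gamma$-type / $\alpha_{w_r}$ / $\beta_{w_r}$ / $z$'' and the fixed cyclic order within $F_r=(\gamma_i\vee\gamma_j\vee\alpha_{w_r})$ and $F'_r=(\beta_{w_r}\vee\gamma_k\vee z)$, that the successor relation is too rigid to close up into anything but a single $F_r$ or $F'_r$ triangle — e.g. $z$ has out-neighbour only $\beta_{w_r}$ for the various $r$ and in-neighbour only the $\gamma_k$'s, and $\alpha_{w_r}$ connects back only to $\gamma_i$, etc., so no alternation between triangles can be sustained. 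I expect this rigidity argument — ruling out inter-triangle cycles through the shared $\alpha/\beta$ vertices — to be the main obstacle, and I would present it as a lemma: in $G^\triangleleft(\mathcal{F})$, for every vertex $v$, the in-arcs and out-arcs at $v$ that belong to two different defining triangles cannot be consecutive on a cycle, which forces every cycle to be a single defining triangle, hence strongly 3-covered. Combining the two parts gives the claim.
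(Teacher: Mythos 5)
Your plan for the first half (orientedness) is essentially the paper's: classify which pairs of literals co-occur in clauses of $\mathcal{F}$ and check that no unordered pair ever arises in both cyclic orders; that part is sound. The second half, however, rests on a misreading of the definition and on a false structural claim. A \emph{strongly 3-covered} cycle is not required to \emph{be} one of the defining triangles; it is only required to \emph{contain}, among its vertices, the three literals of some clause. Your route --- ``show that the only cycles present at all are these triangles'' --- tries to prove a strictly stronger statement, and that stronger statement is false. Concretely, inside a single consistency gadget the clauses $F_{1g}=(\alpha_g\vee\beta_g\vee a_g)$, $F_{2g}=(\alpha_g\vee\beta_g\vee b_g)$ and $F_{4g}=(a_g\vee b_g\vee c_g)$ contribute the arcs $\alpha_g\beta_g$, $\beta_g a_g$, $a_gb_g$ and $b_g\alpha_g$, so $\alpha_g\to\beta_g\to a_g\to b_g\to\alpha_g$ is a $4$-cycle, and $\alpha_g\to\beta_g\to a_g\to b_g\to c_g\to\alpha_g$ is a $5$-cycle. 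Your proposed ``$5$-vertex check'' that the gadget contains no cycle other than its four defining triangles therefore fails, as does the assertion that the defining triangles are arc-disjoint ($F_{1g},F_{2g},F_{3g}$ all share the arc $\alpha_g\beta_g$; orientedness only forbids \emph{opposite} arcs). These longer cycles are harmless for the actual statement --- the $4$-cycle above contains all three literals of $F_{1g}$ and is therefore strongly 3-covered --- but your argument, as designed, would have to declare them nonexistent and cannot be completed.

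What is actually needed, and what the paper does, is an argument about a hypothetical \emph{shortest cycle that is not strongly 3-covered}: such a cycle cannot pass through any $a_g,b_g,c_g$ or through $z$ (entering that territory forces it to collect all three literals of some consistency or basic clause), it can contain at most one $\gamma$-type vertex, and a short chase through the rigid successor structure of $F_r=(\gamma_i\vee\gamma_j\vee\alpha_{w_r})$ and $F'_r=(\beta_{w_r}\vee\gamma_k\vee z)$ then contradicts the ordering convention $i<j<k$ of the original clauses. Your observation about the rigidity of the roles $\gamma$ / $\alpha_{w_r}$ / $\beta_{w_r}$ / $z$ is the right ingredient for that final chase, but it must be deployed to rule out cycles that avoid being 3-covered, not to rule out all non-triangle cycles.
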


\begin{proof}  Recall that $i<j<k$ in each initial clause $C_r\in\mathcal{C}$.
We prove several affirmations, which lead to the conclusion.

\medskip

(A1) {\em $G^\triangleleft(\mathcal{F})$ contains no pair of opposite arcs and no loop.}
\medskip

As the variables involved in each clause of $\mathcal{C}$ (and thus of $\mathcal{F}$) are distinct, $G^\triangleleft(\mathcal{F})$ has no loop.
Furthermore - due to the local role of $a_g, b_g, c_g$ - the only arcs defined by the consistency clauses that could share both
endpoints with other arcs are $\alpha_g\beta_g$. But no basic clause contains literals with the same index $g$ (by the
same previous assumption), therefore
the arc $\alpha_g\beta_g$ cannot have an opposite arc. Focusing now exclusively on arcs defined by basic clauses,
note that basic clauses too contain literals with a local role - namely $\alpha_{w_r}, \beta_{w_r}$ - and therefore
only the arcs $\gamma_i\gamma_j$ and $\gamma_k z$ could possibly have opposite arcs. But for the former one this is impossible by the
assumption that $i<j$ in each clause, whereas the latter is impossible since the only arcs with source $z$ have destination
$\beta_{w_r}$ ($r=1, 2, \ldots, m$) and these literals are distinct from literals $\gamma_u$ ($u=1, 2, \ldots, n$).
\medskip

(A2) {\em Assume $Q=q_1q_2\ldots q_d$ is a shortest cycle of $G^\triangleleft(\mathcal{F})$ which is not
strongly 3-covered. Then we have 
\begin{itemize}
\item[(i)] $Q$ cannot contain any of the vertices $a_g, b_g$ or $c_g$ ($g\in U_2\setminus\{z\}$). 
\item[(ii)] the arcs of $Q$ possibly defined by consistency clauses are  $\alpha_g\beta_g$, $g\in U_2\setminus\{z\}$.
\item[(iii)] $Q$ cannot contain $z$.
\end{itemize}}

To show {\em (i)}, we notice that  the (arcless) subgraph of $G^\triangleleft(\mathcal{F})$ induced by  $a_g, b_g$ and $c_g$, 
for a fixed $g$, has ingoing arcs only from $\beta_g$ and outgoing arcs only to $\alpha_g$, implying that the cycle
should necessarily contain $\alpha_g$ and $\beta_g$, additionally to $a_g, b_g$ or $c_g$. Therefore the three literals of
a clause among $F_{1g}, F_{2g}$ and $F_{3g}$ would be contained in $Q$. This is in contradiction with the assumption that $Q$ is not strongly 3-covered.
Affirmation {\em (ii)} is an easy consequence of {\em (i)}. Finally, if Affirmation {\em (iii)} was false, then the successor
of $z$ would be some $\beta_{w_t}$  (arc defined by clause $F'_{t}$) since
the only successors of $z$ are of this form  and thus - taking into account that $Q$ cannot contain all literals of $F'_t$- we would have that the next literal
along the cycle is $a_{w_{t}}$ or $b_{w_{t}}$ or $c_{w_{t}}$ (from the consistency clauses $F_{1t}, F_{2t}$ or $F_{3t}$) 
thus contradicting {\em (i)}. 

\medskip

(A3) {\em If a  cycle of  $G^\triangleleft(\mathcal{F})$ contains at least two distinct literals
$\gamma_e\in \{\alpha_e,\beta_e\}$ and $\gamma_f\in\{\alpha_f,\beta_f\}$ with $e,f\in\{1, 2, \ldots, n\}$, then it is a
strongly 3-covered cycle.}
\medskip

Note that, by Affirmation (A1), such a cycle has at least three vertices. By contradiction, assume that $Q=q_1q_2\ldots q_d$ ($d\geq 3$)
is a shortest cycle of $G^\triangleleft(\mathcal{F})$ which is not
a strongly 3-covered cycle and contains $\gamma_e\in \{\alpha_e,\beta_e\}$ and $\gamma_f\in\{\alpha_f,\beta_f\}$ with $\gamma_e\neq \gamma_f$. In the case
where vertices $\gamma_e$ and $\gamma_f$ with $e\neq f$ exist on $Q$, choose $\gamma_e$ and $\gamma_f$
such that $e< f$ and the path $P$ from $\gamma_f$ to $\gamma_e$ along the cycle $Q$ is as short as possible. Then $P$
contains no other vertex $\gamma_h\in\{\alpha_h,\beta_h\}$, $h\in\{1, 2, \ldots, n\}$, since otherwise
$\gamma_e$ or $\gamma_f$ would have been differently chosen. Moreover, $P$ has at least two arcs, otherwise 
an arc $\gamma_f\gamma_e$ would exist with $f>e$ and no clause in $\mathcal{F}$ allows to build such an
arc, because of the convention that $i<j<k$ in each clause of $\mathcal{C}$. In the opposite case,
{\em i.e.} when only two vertices $\gamma_e$ and $\gamma_f$ with $e=f$ exist on $Q$, we necessarily
have that $\{\gamma_e,\gamma_f\}=\{\alpha_e,\beta_e\}$ and $\alpha_e\beta_e$ is an arc of $Q$, otherwise $Q$ is not as short as possible.  Then we
denote $P$ the path from $\beta_e$ (denoted $\gamma_f$ for homogenization reasons) to $\alpha_e$ (denoted $\gamma_e$).
We have again that $P$ contains no other vertex $\gamma_h\in\{\alpha_h,\beta_h\}$, and $P$ contains at least two
arcs (since $d\geq 3$).

Thus in all cases we have a subpath $P$ of $Q$ with at least two arcs, joining $\gamma_f$ to $\gamma_e$,
such that $e\leq f$ and there is no other $\gamma_h$, $h=1, 2, \ldots, n$, on $P$.
We may assume that $\gamma_f=q_p$ and $\gamma_e=q_s$, with $p<s$. Then, recalling that $P$ has at least two arcs,
we deduce that $q_{p+1}=\alpha_{w_r}$ (due to some basic clause $F_r=(\gamma_i\vee \gamma_f\vee \alpha_{w_r})$ with an appropriate $i$)  or $q_{p+1}=z$ (due
to some basic clause $F'_{s}=(\beta_{w_s}\vee \gamma_f\vee z)$). The latter case is impossible by Affirmation (A2.{\em iii}).
In the former case ($q_{p+1}=\alpha_{w_r}$), we deduce that $q_{p+2}=\beta_{w_r}$ (the only other successor of $\alpha_{w_r}$ is $\gamma_i$ from
the same clause $F_r$, but then $Q$ would be a strongly 3-covered cycle) and according to (A2.{\em ii}) $q_{p+3}$ cannot be defined by a consistency clause 
indexed $w_r$. Therefore
$q_{p+3}=\gamma_k$ from the clause $F'_r=(\beta_{w_r}\vee \gamma_k\vee z)$ and therefore $\gamma_k=\gamma_e$ (since there is no other literal of this form on 
the path from $\gamma_f$ to $\gamma_e$). As $F_r$ and $F'_r$ express the initial 3-SAT clause 
$(l_i\vee l_f\vee l_k)$, we deduce $i<f<k$. With $k=e$ we then have $f<e$ and this contradicts our choice above.
\medskip

(A4) {\em Every cycle in $G^\triangleleft(\mathcal{F})$ is a strongly 3-covered cycle.}
\medskip

By contradiction, assume that $Q=q_1q_2\ldots q_d$ ($d\geq 3$, by (A1)) is a shortest cycle of $G^\triangleleft(\mathcal{F})$ which is not
a strongly 3-covered cycle. By affirmation (A2.{\em iii}), $z$ does not belong to $Q$, and thus the basic clauses $F'_r$ can only
contribute to $Q$ with arcs of type $\beta_{w_r}\gamma_k$, for some $r$ and $k$. By affirmation (A3) which
guarantees that at most one vertex $\gamma_h$ belongs to $Q$, we deduce
that basic clauses $F_s$ can only contribute with arcs of type $\gamma_j\alpha_{w_s}$ or $\alpha_{w_s}\gamma_i$, for  
some $i, j, s$. We also have by (A3) that exactly 0 or 2 arcs, among all arcs of these three types - namely $\beta_{w_r}\gamma_k$,
$\gamma_j\alpha_{w_s}$ and $\alpha_{w_s}\gamma_i$, for all possible $r, i, j, k, s$ - exist in $Q$, since 0 or 1 occurrence
of a literal of type $\gamma_e$ is possible on $Q$. In the case 0 arc is accepted, then no arc from basic clauses is admitted on $Q$ and therefore
$Q$ cannot exist (since consistency clauses form only strongly 3-covered cycles). In the case two arcs are accepted, exactly one
vertex $\gamma_e$ exists in $Q$. One of the two arcs incident with it is 
$\gamma_e\alpha_{w_u}$ (for some clause $F_u=(\gamma_i\vee\gamma_e\vee \alpha_{w_u})$) since this is the only type of admitted arcs outgoing from $\gamma_e$. The arc ingoing
to $\gamma_e$  is either  $\beta_{w_v}\gamma_e$ (for some clause $F'_v=(\beta_{w_v}\vee \gamma_e\vee z)$) or $\alpha_{w_z}\gamma_e$ (for 
some clause $F_z=(\gamma_e\vee\gamma_j\vee \alpha_{w_z})$). All the other arcs are from consistency clauses.
The successor of $\alpha_{w_u}$ is then necessarily $\beta_{w_u}$ and Affirmation (A2.{\em ii}) implies we have to use
an arc outgoing from $\beta_{w_u}$ and defined by a basic clause. The only solution avoiding to use a third arc from basic clauses
is that the predecessor of $\gamma_e$ on $Q$ is $\beta_{w_v}$ and $v=u$ (so that $\beta_{w_u}=\beta_{w_v}$). But then $\gamma_e$
appears both in $F_u$ and in $F'_u$, and this is impossible since $F_u$ and $F'_u$ correspond to the same initial 3-SAT clause,
which has three distinct literals.
\end{proof}
\bigskip

As a consequence, $G^\triangleleft(\mathcal{F})$ is a 3c-digraph on the set of variables $U_3$, with strongly 3-covered cycles.
Let us give the full name {\sc Strongly 3-Covered Monotone NAE 3-SAT} to the problem {\sc M-NAE 3-SAT}
reduced to instances where the set of clauses admits a strongly 3-covered form. Then, the representative graph of the input
is a 3c-digraph. We have:

\begin{thm}
{\sc Strongly 3-Covered Monotone NAE 3-SAT } is NP-complete.
\end{thm}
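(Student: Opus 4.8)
The plan is to prove the statement in the usual two parts: membership in NP, then NP-hardness via the reduction already assembled in this section.

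Membership in NP is immediate. A nondeterministically guessed assignment of the $N = 5(n+m)+1$ variables of $\mathcal{F}$ can be checked in polynomial time to be a NAE truth assignment, namely that every clause contains at least one true and at least one false literal; the extra restrictions (monotone, strongly 3-covered) constrain only the input, not the certificate. Since \textsc{Strongly 3-Covered Monotone NAE 3-SAT} is a subproblem of \textsc{NAE 3-SAT}, it lies in NP.

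For NP-hardness I would reduce from \textsc{3-SAT}, which is NP-complete by Karp \cite{Karp}. Given an instance $\mathcal{C}$ of \textsc{3-SAT}, first preprocess it (as described before Table \ref{table:red}) so that no clause repeats a variable, then build its M-NAE version $\mathcal{F}$ by the successive substitutions of Table \ref{table:red}. This is a polynomial-time construction: $\mathcal{F}$ has $N=5(n+m)+1$ variables and $2m+4(n+m)$ clauses, each with three literals all in positive form ($z$ included), and fixing the left-to-right order of literals in each clause as in Equation (\ref{eq:1}) is immediate. By Claim \ref{claim:tiny}, the representative graph $G^\triangleleft(\mathcal{F})$ built from that order is oriented and every one of its cycles is strongly 3-covered; hence $\mathcal{F}$, equipped with this order, is in strongly 3-covered form, so it is a legitimate instance of \textsc{Strongly 3-Covered Monotone NAE 3-SAT}. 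Finally, by Claim \ref{claim:SATred}, $\mathcal{C}$ has a standard truth assignment (a yes-instance of \textsc{3-SAT}) if and only if $\mathcal{F}$ has a NAE truth assignment (a yes-instance of \textsc{Strongly 3-Covered Monotone NAE 3-SAT}). Combining the two parts gives the claim.

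There is no real obstacle left: the combinatorial core — that the construction of Table \ref{table:red} produces an instance whose representative graph is a 3c-digraph with only strongly 3-covered cycles — is exactly Claim \ref{claim:tiny}, already proved, and the equivalence of the two satisfiability questions is Claim \ref{claim:SATred}. The only points needing a word of care are that $\mathcal{F}$ is monotone by construction, and that ``admits a strongly 3-covered form'' is an existential condition on the literal ordering, here witnessed by the ordering of Equation (\ref{eq:1}); both are routine.
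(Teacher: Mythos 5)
Your proof is correct and follows essentially the same route as the paper's: NP-membership as a restriction of \textsc{M-NAE 3-SAT}, then a reduction from \textsc{3-SAT} via the M-NAE version $\mathcal{F}$, invoking Claim~\ref{claim:tiny} for the strongly 3-covered form and Claim~\ref{claim:SATred} for the equivalence. The extra remarks on polynomiality of the construction and on the ordering of Equation~(\ref{eq:1}) witnessing the strongly 3-covered form are fine elaborations of what the paper leaves implicit.
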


\begin{proof} The problem is in NP since it is a particular case of {\sc M-NAE 3-SAT}. The reduction
is from 3-SAT, by associating to a set $\mathcal{C}$ of 3-literal clauses its M-NAE version $\mathcal{F}$. 
The set $\mathcal{F}$ is in strongly 3-covered form, by Claim~\ref{claim:tiny}. By Claim~\ref{claim:SATred},
the theorem follows. \end{proof}
\bigskip

\section{Hardness results}\label{sect:hardness}

This section first presents results showing the relationships between solutions of the various 3-SAT problems
and the FVS problems (Subsections \ref{subsect:standard} and \ref{subsect:nae}). Then these results are used
to deduce the aforementioned NP-completeness results (Subsection \ref{subsect:NP}).

\subsection{Properties of standard truth assignments}\label{subsect:standard}

\begin{fait}
 Let $S$ be a set of variables from $U_3$. Then $S$ is the set of true variables in a standard truth assignment 
 of $\mathcal{F}$ iff $S$ is a FVS of $G^\triangleleft(\mathcal{F})$ or $S=U_3$.
 \label{claim:standard}
\end{fait}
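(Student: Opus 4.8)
The plan is to reduce both directions of the equivalence to a single covering statement about the 3-cycles of $G^\triangleleft(\mathcal{F})$, and then use Claim~\ref{claim:tiny} to pass between ``every 3-cycle coming from a clause is hit'' and ``every cycle is hit''.

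First I would set up the dictionary between subsets and truth assignments. To a subset $S\subseteq U_3$ associate the truth assignment $\tau_S$ of $\mathcal{F}$ that sets exactly the variables in $S$ to true; conversely any standard truth assignment is $\tau_S$ for $S$ its set of true variables, so the statement is really about which subsets $S$ occur. Since every literal of $\mathcal{F}$ is in positive form, $\tau_S$ is a standard truth assignment iff every clause of $\mathcal{F}$ contains at least one variable of $S$. By the construction of the representative graph, the three literals of a clause $F\in\mathcal{F}$ (pairwise distinct, since no clause of $\mathcal{F}$ repeats a variable) induce a 3-cycle of $G^\triangleleft(\mathcal{F})$; call these the \emph{clause 3-cycles}. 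So the key reformulation is: \emph{$\tau_S$ is standard iff $S$ meets every clause 3-cycle of $G^\triangleleft(\mathcal{F})$}. For the ``if'' direction I would then argue: if $S$ is a FVS of $G^\triangleleft(\mathcal{F})$ or $S=U_3$, then in either case $S$ meets every cycle of $G^\triangleleft(\mathcal{F})$ (vacuously when $S=U_3$), in particular every clause 3-cycle, so $\tau_S$ is standard and $S$ is, by definition of $\tau_S$, its set of true variables. For the ``only if'' direction: if $S$ is the set of true variables of a standard truth assignment, and $S\neq U_3$, I must show $S$ covers every cycle $Q$ of $G^\triangleleft(\mathcal{F})$; this is where Claim~\ref{claim:tiny} is used, since it guarantees $Q$ is strongly 3-covered, hence its three vertices are exactly the literals of some clause $F\in\mathcal{F}$, i.e. $Q$ is a clause 3-cycle, and standardness of $\tau_S$ gives a true literal in $F$, i.e. a vertex of $Q$ in $S$. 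Thus $S$ is a FVS.

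I do not expect a genuine obstacle here, since the heavy lifting is done in Claim~\ref{claim:tiny}: the clause 3-cycles are literally all the cycles of $G^\triangleleft(\mathcal{F})$ — each clause contributes a 3-cycle by construction, and every cycle is such a clause 3-cycle by the claim. The only point needing care is the bookkeeping around the case $S=U_3$: the paper reserves the name ``feedback vertex set'' for proper subsets $S\subsetneq V$, so although $S=V$ vacuously meets every cycle, it must be carried along as the separate alternative ``$S=U_3$'' in both directions rather than being folded into ``$S$ is a FVS''.
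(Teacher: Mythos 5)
Your proof is correct and takes essentially the same route as the paper's: both directions reduce to the observation that clauses of $\mathcal{F}$ correspond to 3-cycles of $G^\triangleleft(\mathcal{F})$ and that, by Claim~\ref{claim:tiny}, hitting these suffices to hit every cycle. One small imprecision: ``strongly 3-covered'' means a cycle \emph{contains} the three literals of some clause, not that it literally is such a 3-cycle, but your argument only uses the former, so nothing breaks.
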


\begin{proof} By Claim~\ref{claim:tiny}, the set $S$ of true variables in a standard truth assignment for $\mathcal{F}$ allows us to
cover all the cycles in $G^\triangleleft(\mathcal{F})$. If $S\neq U_3$, we have that $S$ is a FVS.
Conversely, if $S$ is a FVS of $G$ or $S=U_3$, $S$ covers all the 3-cycles and thus  contains at least one
literal in each clause. Thus the truth assignment that sets to true all the variables in $S$ is a standard
truth assignment for $\mathcal{F}$. \end{proof}

\begin{rmk}
 Since $\mathcal{F}$ has only literals in positive form, a standard truth assignment for $\mathcal{F}$ always exists.
 \label{rmk:pos}
\end{rmk}

\begin{fait}
Consider a standard truth assignment of $\mathcal{F}$ with a minimum number of true variables. Then:

\begin{enumerate}
 \item[i)] for each $g\in U_2\setminus\{z\}$ at least one of $\alpha_g$ and $\beta_g$ is true.
 \item[ii)] for each $g\in U_2\setminus\{z\}$, exactly one of the variables $a_g, b_g$ and $c_g$ is true. 
\end{enumerate}

\noindent Therefore $\frac{2(N-1)}{5}\leq \ts(\mathcal{F})\leq \frac{2(N-1)}{5}+1$.
\label{claim:mintruth}
\end{fait}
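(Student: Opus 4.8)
The plan is to analyze a minimum standard truth assignment of $\mathcal{F}$ clause-group by clause-group, using the special structure of the consistency clauses $F_{1g}, F_{2g}, F_{3g}, F_{4g}$ attached to each $g\in U_2\setminus\{z\}$. First I would prove (i). Suppose some minimum standard truth assignment sets both $\alpha_g$ and $\beta_g$ to false. Then, to satisfy $F_{1g}=(\alpha_g\vee\beta_g\vee a_g)$, $F_{2g}=(\alpha_g\vee\beta_g\vee b_g)$ and $F_{3g}=(\alpha_g\vee\beta_g\vee c_g)$, all three of $a_g, b_g, c_g$ must be true, which already costs three ``ones'' among the local variables of $g$. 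But setting instead $\alpha_g$ true and leaving $a_g, b_g, c_g$ free would let us satisfy $F_{1g}, F_{2g}, F_{3g}$ with $\alpha_g$ alone and then satisfy $F_{4g}=(a_g\vee b_g\vee c_g)$ with a single one among $a_g, b_g, c_g$ — total cost $2$ on the local block, and $\alpha_g$ may also help cover basic clauses. I would make this an exchange argument: modify the assignment by turning $\alpha_g$ true and turning two of $a_g, b_g, c_g$ false (keeping one true for $F_{4g}$); every clause remains satisfied (basic clauses only gain a true literal; $F_{4g}$ still has one true literal) and the number of ones strictly decreases, contradicting minimality. Hence at least one of $\alpha_g, \beta_g$ is true.

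Next I would prove (ii). The clause $F_{4g}=(a_g\vee b_g\vee c_g)$ forces at least one of $a_g, b_g, c_g$ true, so ``exactly one'' reduces to showing at most one is true in a minimum assignment. By part (i) at least one of $\alpha_g, \beta_g$ is true, so $F_{1g}, F_{2g}, F_{3g}$ are already satisfied regardless of $a_g, b_g, c_g$; thus the only clause among all of $\mathcal{F}$ in which $a_g$ (resp. $b_g$, $c_g$) appears and which might need it is $F_{4g}$ — here I use the ``local role'' of $a_g, b_g, c_g$ noted in the proof of Claim~\ref{claim:tiny}, namely that these variables occur only in the four consistency clauses of $g$. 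Consequently, if two or three of $a_g, b_g, c_g$ were true, we could flip the extras to false, keep $F_{4g}$ (and everything else) satisfied, and reduce the number of ones — again contradicting minimality. So exactly one of $a_g, b_g, c_g$ is true.

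Finally I would assemble the bound on $\ts(\mathcal{F})$. Recall $N=|U_3|=5(n+m)+1$, so $|U_2\setminus\{z\}| = n+m = (N-1)/5$, and $U_3$ partitions into $\{z\}$ together with, for each $g\in U_2\setminus\{z\}$, the five variables $\alpha_g, \beta_g, a_g, b_g, c_g$. By (i) each such five-block contributes at least one true variable (from $\{\alpha_g,\beta_g\}$) plus exactly one more (from $\{a_g,b_g,c_g\}$ by (ii)), i.e.\ at least $2$ ones per block, giving $\ts(\mathcal{F})\ge 2(N-1)/5$. For the upper bound I would exhibit an explicit standard assignment with $2(N-1)/5+1$ ones: set $z$ true, and for each $g$ set $\alpha_g$ true, $\beta_g$ false, $a_g$ true, $b_g=c_g$ false. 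This satisfies every consistency clause (the $F_{ig}$ via $\alpha_g$, and $F_{4g}$ via $a_g$) and every basic clause ($F_r$ contains $\alpha_{w_r}$ which is true; $F'_r$ contains $z$ which is true), and its number of ones is $1 + 2(n+m) = 2(N-1)/5+1$. Since a minimum assignment can only do at least as well, $\ts(\mathcal{F})\le 2(N-1)/5+1$, completing the proof.

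The main obstacle is getting the exchange argument in part (i) exactly right: one must check that turning $\alpha_g$ on and $a_g, b_g, c_g$ (mostly) off cannot break a basic clause — which holds because adding a true literal never unsatisfies a clause and removing $a_g, b_g, c_g$ only affects $F_{4g}$, which we deliberately keep satisfied — and that the net change in the count of ones is genuinely negative (we add one true variable $\alpha_g$ and remove at least two, namely two of $a_g, b_g, c_g$). Everything else is bookkeeping with the partition of $U_3$ and the value $N=5(n+m)+1$ established earlier.
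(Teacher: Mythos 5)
Your proof is correct and follows essentially the same strategy as the paper: local exchange arguments on the consistency gadget $F_{1g},F_{2g},F_{3g},F_{4g}$ (exploiting that $a_g,b_g,c_g$ occur nowhere else), followed by the per-block count for the lower bound and the same explicit assignment ($z$, $\alpha_g$, $a_g$ true) for the upper bound. The only difference is the order of deduction — you establish (i) first and derive (ii) from it, while the paper first bounds the number of true literals in $F_{4g}$ and then infers (i) — which is immaterial.
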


\begin{proof} Let $ta()$ be a standard truth assignment of $\mathcal{F}$ with a minimum number of true variables.

Then no clause of type $(a_g\vee b_g\vee c_g)$, for some $g$, contains more than one true literal. 
If, by contradiction, such a clause has all its variable set to true, then we can define a new truth assignment
 $tb()$ as a variant of $ta()$ where variables $b_g$ and $c_g$ are set to false and $\alpha_g$ is set to true (if
 it is not already true). Then
 $tb()$ has less true variables than $ta()$, a contradiction. And in the case where there are exactly two true literals
 in the clause $(a_g\vee b_g\vee c_g)$, the false literal implies that at least one of the variables $\alpha_g, \beta_g$ needs to be true so as
to satisfy all the clauses $F_{1g}, F_{2g}$ and $F_{3g}$. But then one of the two true literals in $(a_g\vee b_g\vee c_g)$ 
may be set to false, implying that $ta()$ does not have a minimum number of true literals, a contradiction.

Then $ta()$ sets exactly one of the variables  $a_g, b_g, c_g$ to true, resulting into $\frac{N-1}{5}$ true variables. Moreover,
at least one of the variables $\alpha_g, \beta_g$ needs to be true so as to satisfy all the clauses $F_{1g}, F_{2g}$ and $F_{3g}$. 
This yields $\frac{N-1}{5}$ supplementary true variables, and fixes the lower bound of $\frac{2(N-1)}{5}$.
In order to show the upper bound,  consider the truth assignment that defines  $z$ and, for all 
$g\in U_2\setminus\{z\}$, variables $\alpha_g, a_g$ as true. All the other
variables are false. This is a standard truth assignment for $\mathcal{F}$ with $\frac{2(N-1)}{5}+1$ true variables. \end{proof}
\bigskip

\begin{fait}
  $\frac{2(N-1)}{5}\leq \f(G^\triangleleft(\mathcal{F}))=\ts(\mathcal{F})\leq \frac{2(N-1)}{5}+1$.
  \label{claim:ts=f}
\end{fait}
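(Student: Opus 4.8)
The strategy is to read off the central equality $\f(G^\triangleleft(\mathcal{F}))=\ts(\mathcal{F})$ from Claim~\ref{claim:standard}, and then obtain the two flanking inequalities for free from Claim~\ref{claim:mintruth}. The whole argument is essentially bookkeeping around Claim~\ref{claim:standard}; the only point needing a moment's care is to rule out the exceptional alternative ``$S=U_3$'' appearing there.

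First I would prove $\f(G^\triangleleft(\mathcal{F}))\le \ts(\mathcal{F})$. Fix a standard truth assignment of $\mathcal{F}$ with a minimum number of true variables, and let $S\subseteq U_3$ be its set of true variables, so $|S|=\ts(\mathcal{F})$. By Claim~\ref{claim:mintruth} we have $|S|\le \frac{2(N-1)}{5}+1$, and since $N=5(n+m)+1$ this bound equals $2(n+m)+1$, which is strictly less than $N=5(n+m)+1$ (assuming, as we may, that $\mathcal{C}$ is nonempty, so $n+m\ge 1$). Hence $S\neq U_3$, and Claim~\ref{claim:standard} then forces $S$ to be a FVS of $G^\triangleleft(\mathcal{F})$. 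In particular a FVS exists, so $\f(G^\triangleleft(\mathcal{F}))$ is well defined, and $\f(G^\triangleleft(\mathcal{F}))\le |S|=\ts(\mathcal{F})$.

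For the reverse inequality $\ts(\mathcal{F})\le \f(G^\triangleleft(\mathcal{F}))$, let $S^\star$ be a minimum FVS of $G^\triangleleft(\mathcal{F})$ (it exists by the previous step); by the definition of a FVS we have $S^\star\subsetneq U_3$, so in Claim~\ref{claim:standard} the branch ``$S=U_3$'' is irrelevant and $S^\star$ is the set of true variables of some standard truth assignment of $\mathcal{F}$. Therefore $\ts(\mathcal{F})\le |S^\star|=\f(G^\triangleleft(\mathcal{F}))$. Combining the two inequalities yields $\f(G^\triangleleft(\mathcal{F}))=\ts(\mathcal{F})$, and plugging this into the bounds $\frac{2(N-1)}{5}\le \ts(\mathcal{F})\le \frac{2(N-1)}{5}+1$ from Claim~\ref{claim:mintruth} gives the displayed chain. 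I do not expect any genuine obstacle here; the subtle step is simply verifying that a minimum standard truth assignment leaves at least one variable false, which is what makes Claim~\ref{claim:standard} usable in both directions.
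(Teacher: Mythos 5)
Your proof is correct and follows essentially the same route as the paper: derive the equality $\f(G^\triangleleft(\mathcal{F}))=\ts(\mathcal{F})$ from Claim~\ref{claim:standard} and the bounds from Claim~\ref{claim:mintruth}. The only difference is that you explicitly dispose of the $S=U_3$ branch of Claim~\ref{claim:standard} via the upper bound $\frac{2(N-1)}{5}+1<N$, a detail the paper's proof leaves implicit; this is a harmless and slightly more careful rendering of the same argument.
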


\begin{proof} By Claim~\ref{claim:standard},  a minimum FVS $S$ corresponds to the minimum set of true variables in a  standard truth 
assignment of $\mathcal{F}$, and viceversa. Therefore 
$\f(G^\triangleleft(\mathcal{F}))=\ts(\mathcal{F})$.
By Claim~\ref{claim:mintruth},  $\ts(\mathcal{F})$ has the required bounds. \end{proof}

\subsection{Properties of NAE truth assignments}\label{subsect:nae}

Let $G=(V,E)$ be an oriented graph, and $S$ an acyclic FVS of it. We start this section with a result which allows us 
to deal simultaneously with the acyclicity of $G[V\setminus S]$ and of $G[S]$. Say that a graph is an  \xclit
if there exists a linear order $\prec$ on the vertices of $G$ such that, for each vertex $v$, its successors are either 
all before $v$ ({\em i.e.} on its left side) or all after $v$ (on its right side) in the order $\prec$. 
Vertex $v$ is called a {\em left vertex} in the former case, and a 
{\em right vertex} in the latter case. The order is called an {\em LR-order}. The LR-order is {\em non-trivial} if it admits at
least one right vertex and at least one left vertex. An LR-order 
may be modified so as to move all the right vertices towards left (without changing their relative order) and all the left vertices towards right
(again, without changing their relative order)  after all the right vertices. The result is still an LR-order, that we call a {\em standard LR-order}.
The subgraph induced by each type of vertices is acyclic, since all the arcs outgoing from a right (resp. left) vertex  
are oriented towards right (resp. left).

The following claim is now easy:

\begin{fait}
$G$ has a non-empty acyclic FVS $S$ iff $G$ admits a non-trivial LR-order whose set of right vertices is $S$.
\label{claim:FVSxcl}
\end{fait}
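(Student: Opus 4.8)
The plan is to prove both directions of the equivalence directly from the definitions, using the decomposition of an LR-order into its right part and its left part and the observation (already recorded in the paragraph preceding the claim) that each of these two induced subgraphs is acyclic.

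First I would prove the forward direction. Suppose $G$ has a non-empty acyclic FVS $S$, so $G[V\setminus S]$ and $G[S]$ are both acyclic (the former by definition of FVS and the latter because $S$ is an acyclic FVS). Since a digraph is acyclic iff it admits a topological order, fix a topological order $\prec_1$ of $G[S]$ and a topological order $\prec_2$ of $G[V\setminus S]$. Now build $\prec$ on $V$ by placing the vertices of $S$ first, in the reverse of the order $\prec_1$, followed by the vertices of $V\setminus S$ in the order $\prec_2$. I claim this is an LR-order in which every vertex of $S$ is a right vertex and every vertex of $V\setminus S$ is a left vertex. Indeed, take $v\in S$: any arc $vw$ with $w\in S$ goes ``forward'' in $\prec_1$, hence ``backward'' in the reversed block, so $w$ is to the right of $v$; any arc $vw$ with $w\in V\setminus S$ lands in the second block, again to the right of $v$; so all successors of $v$ are to its right and $v$ is a right vertex. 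Take $v\in V\setminus S$: since $S$ is a FVS there is no arc from $V\setminus S$ into $S$ — wait, that is false in general, so instead I argue as follows. Hmm: an arc $vw$ with $v\in V\setminus S$ and $w\in S$ would go from the second block to the first block, i.e. to the left of $v$, which is fine for a left vertex; and an arc $vw$ with $w\in V\setminus S$ goes forward in $\prec_2$, hence to the right of $v$ — that is bad. So I must be more careful: to make every vertex of $V\setminus S$ a left vertex I should instead reverse the $V\setminus S$ block and not reverse the $S$ block, or rather, I should put $V\setminus S$ first in topological order $\prec_2$ and then $S$ in the \emph{reverse} of $\prec_1$. Then for $v\in V\setminus S$, arcs to $V\setminus S$ go backward (left) and arcs into $S$ also go into the later block — to the right — bad again. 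The resolution is that we cannot in general make the FVS side the left side and the complement the right side; but we are free to choose which colour class we call ``$S$''. The claim only asserts existence of a non-trivial LR-order whose right-vertex set is $S$, so I take $S$ to be the right block (reverse-topological on $G[S]$) and $V\setminus S$ the left block placed \emph{after} it with its topological order \emph{reversed} as well — then for $v\in V\setminus S$, arcs to $V\setminus S$ go backward = left, good; arcs to $S$ go to the earlier (right) block, to the left of $v$... no, $S$ is the right block placed \emph{before} $V\setminus S$, so that is to the left — still good! Let me just state it cleanly: put the $V\setminus S$ vertices last, in \emph{reverse} topological order of $G[V\setminus S]$, and the $S$ vertices first, in \emph{reverse} topological order of $G[S]$; then every $v\in S$ has all its successors to its right (within $S$ by reverse-topological, or in the later $V\setminus S$ block), and every $v\in V\setminus S$ has all its successors to its left (within $V\setminus S$ by reverse-topological, or in the earlier $S$ block). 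This is an LR-order, it is non-trivial since $S$ and $V\setminus S$ are both non-empty ($S$ non-empty by hypothesis, $V\setminus S$ non-empty since $S$ is a \emph{proper} subset in the definition of FVS), and its right-vertex set is exactly $S$. I would write this out as the clean construction and drop the false starts.

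Next, the converse. Suppose $G$ admits a non-trivial LR-order $\prec$ with set of right vertices $S$ and set of left vertices $V\setminus S$. Both $S$ and $V\setminus S$ are non-empty by non-triviality, so in particular $S\ne\emptyset$ and $S\subsetneq V$. As noted in the text, $G[S]$ is acyclic because every arc out of a right vertex points right, so a cycle inside $S$ would be an infinite strictly increasing sequence in a finite order; likewise $G[V\setminus S]$ is acyclic. Hence $S$ is a FVS (it meets every cycle of $G$, since $V\setminus S$ induces no cycle) and $G[V\setminus S]$ acyclic means $S$ is an \emph{acyclic} FVS; it is non-empty. This closes the equivalence.

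I do not expect any real obstacle here — the whole statement is essentially a reformulation of ``a 2-colouring of $V$ with no monochromatic cycle = a partition into two acyclic induced subgraphs'', packaged through the LR-order language. The only point requiring a moment's care, which is where I stumbled above, is getting the orientation bookkeeping right so that the FVS class becomes precisely the right-vertex class (it matters whether each block is listed in topological or reverse-topological order, and in which of the two blocks each class goes); once the construction is pinned down the verification is a one-line check on each arc. I would present just the final, correct construction in the writeup.
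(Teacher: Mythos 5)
Your overall route is the same as the paper's (split $V$ into $S$ and $V\setminus S$, use topological orders of the two acyclic induced subgraphs to build the LR-order, and read the converse off the left/right decomposition), and your converse direction is fine apart from one slip: it is $G[S]$ acyclic, not $G[V\setminus S]$ acyclic, that upgrades ``FVS'' to ``acyclic FVS'' (the latter is what makes $S$ a FVS in the first place).

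The forward direction, however, ends on an incorrect construction. You settle on: $S$ first in \emph{reverse} topological order of $G[S]$, then $V\setminus S$ in reverse topological order of $G[V\setminus S]$. In a reversed topological order every arc points from a later position to an earlier one, i.e.\ each successor sits to the \emph{left} of its source. So for $v\in S$ with a successor $w\in S$, $w$ lands to the left of $v$, while any successor of $v$ in $V\setminus S$ lands in the later block, to the right: such a $v$ has successors on both sides and the order is not an LR-order at all; and a $v\in S$ whose successors all lie in $S$ becomes a \emph{left} vertex, so the right-vertex set is not $S$. The telltale sign is in your own justification, where ``reverse-topological'' is invoked once to put successors to the right (for $S$) and once to put them to the left (for $V\setminus S$) --- it cannot do both. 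The correct bookkeeping, which is what the paper writes, is: $S$ first in a \emph{plain} topological order of $G[S]$ (within-$S$ successors go right, successors in $V\setminus S$ are in the later block, also right), followed by $V\setminus S$ in \emph{reversed} topological order (within-block successors go left, successors in $S$ are in the earlier block, also left). With that single correction your argument matches the paper's proof.
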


\begin{proof} If $G$ has an acyclic FVS $S$, then $S\neq V$ and a topological order of $G[S]$ followed by a
reversed topological order of $G[V\setminus S]$ yields a non-trivial standard LR-order.
Conversely, let $\prec$ be a non-trivial LR-order and $S$ be the set of right vertices. Then $S\neq V$ and $G[S]$
is acyclic, since all the arcs in $S$ are oriented from left to right. The graph $G[V\setminus S]$
is also acyclic, since $V\setminus S$ is the set of left vertices and all arcs are oriented from right to left. \end{proof}
\bigskip

As a consequence, it is equivalent to look for an acyclic FVS in $G$, and to show that $G$ is an  \xclsans.
In order to be as close as possible to the problems we defined, and which have been defined in previous works,
we state the results in terms of acyclic FVS. However, in the proofs we merely look for an LR-order of $G$.

\begin{fait}
 Let $S$ be a set of variables from $U_3$. Then $S$ is the set of true variables in a NAE truth assignment of $\mathcal{F}$
 iff $S$ is an acyclic FVS of $G^\triangleleft(\mathcal{F})$.
 \label{claim:acyclic}
\end{fait}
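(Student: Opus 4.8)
The plan is to prove both directions by exploiting the structure of $G^\triangleleft(\mathcal{F})$, in particular the fact (Claim~\ref{claim:tiny}) that it is a 3c-digraph whose only cycles are the 3-cycles induced by the three literals of the clauses in $\mathcal{F}$. A NAE truth assignment of $\mathcal{F}$ is precisely an assignment in which each such triple of literals contains at least one true and at least one false variable; equivalently, writing $S$ for the set of true variables, every clause-triangle has at least one vertex in $S$ and at least one vertex outside $S$. So the combinatorial content to extract is: a set $S\subseteq U_3$ meets every 3-cycle of $G^\triangleleft(\mathcal{F})$ and misses at least one vertex of every 3-cycle if and only if $S$ is an acyclic FVS. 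The forward-looking idea is to translate "acyclic FVS" into "non-trivial LR-order with right-set $S$" via Claim~\ref{claim:FVSxcl}, and then build such an LR-order explicitly from the NAE assignment (and conversely read off a NAE assignment from the LR-order).

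For the direction "$S$ is the true-set of a NAE assignment $\Rightarrow$ $S$ is an acyclic FVS", first note $S$ is a FVS of $G^\triangleleft(\mathcal{F})$ by Claim~\ref{claim:standard} (a NAE assignment is a standard assignment), and $S\neq U_3$ since every clause has a false literal, so $S\subsetneq U_3$; symmetrically $U_3\setminus S$ is also a FVS because every clause has a true literal, hence every 3-cycle is hit by $U_3\setminus S$ too. Since every cycle of $G^\triangleleft(\mathcal{F})$ is a 3-cycle, both $G^\triangleleft(\mathcal{F})[S]$ and $G^\triangleleft(\mathcal{F})[U_3\setminus S]$ are acyclic (a cycle inside $S$ would be a 3-cycle entirely in $S$, contradicting that $U_3\setminus S$ is a FVS; symmetrically for $U_3\setminus S$). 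Taking a topological order of $G^\triangleleft(\mathcal{F})[S]$ followed by a reversed topological order of $G^\triangleleft(\mathcal{F})[U_3\setminus S]$ would give a non-trivial LR-order whose right-set is $S$, so by Claim~\ref{claim:FVSxcl} $S$ is an acyclic FVS. Actually, since Claim~\ref{claim:FVSxcl} already packages this, I would just say: $S$ is an acyclic FVS because $S\subsetneq U_3$ covers all cycles and induces an acyclic subgraph.

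For the converse "$S$ acyclic FVS $\Rightarrow$ $S$ is the true-set of a NAE assignment": since $S$ is a FVS and $S\subsetneq U_3$, Claim~\ref{claim:standard} gives that setting exactly the variables of $S$ to true is a standard truth assignment, i.e. each clause of $\mathcal{F}$ has a true literal. It remains to check each clause also has a false literal, i.e. no clause-triangle lies entirely inside $S$. But a clause-triangle inside $S$ would be a 3-cycle in $G^\triangleleft(\mathcal{F})[S]$, contradicting acyclicity of the induced subgraph on $S$. Hence the assignment is NAE. The main (very mild) obstacle is the edge case $S=U_3$ versus $S\subsetneq U_3$: one must observe that an acyclic FVS by definition satisfies $S\subsetneq V$, so this case is excluded, and symmetrically that $S=\emptyset$ cannot happen unless $G^\triangleleft(\mathcal{F})$ is already acyclic — in which case the empty assignment is trivially NAE and there is nothing to prove. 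I expect the proof to be short: all the real work was done in Claims~\ref{claim:tiny}, \ref{claim:standard} and \ref{claim:FVSxcl}, and here one simply combines "covers every 3-cycle" with "induces no 3-cycle" $=$ "NAE".
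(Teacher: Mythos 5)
Your proof is correct, but it takes a genuinely different and more elementary route than the paper. The paper's forward direction builds an auxiliary precedence digraph $\Gamma$ (orienting each arc of $G^\triangleleft(\mathcal{F})$ according to the truth value of its source), proves $\Gamma$ acyclic through a case analysis on how cycles of $\Gamma$ relate to cycles of $G^\triangleleft(\mathcal{F})$, and reads off an LR-order whose right vertices are the true variables; the backward direction likewise goes through the LR-order of Claim~\ref{claim:FVSxcl}, locating a left and a right vertex on each clause triangle. You bypass LR-orders entirely: from a NAE assignment, every cycle of $G^\triangleleft(\mathcal{F})$ contains a clause triangle and hence both a true and a false vertex, so $S$ and $U_3\setminus S$ each cover all cycles, making both induced subgraphs acyclic; conversely, ``$S$ covers every clause triangle'' plus ``$G^\triangleleft(\mathcal{F})[S]$ contains no clause triangle'' is exactly the NAE condition. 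This is shorter and arguably cleaner; what the paper's construction buys is an explicit LR-order, which it reuses as the organizing device throughout Section~\ref{sect:classes}. Two small corrections to your write-up: Claim~\ref{claim:tiny} does \emph{not} say that the only cycles are the clause 3-cycles, only that every cycle \emph{contains} the three literals of some clause (longer cycles are permitted); your argument survives because it only needs this weaker statement, but the phrases ``whose only cycles are the 3-cycles'' and ``a cycle inside $S$ would be a 3-cycle entirely in $S$'' should be replaced by ``every cycle contains a clause triangle, hence a vertex of $U_3\setminus S$.'' Also, you justify ``$U_3\setminus S$ is a FVS'' by ``every clause has a true literal''; it should read ``every clause has a false literal'' (the true/false roles are swapped in that sentence).
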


\begin{proof} Recall that $G^\triangleleft(\mathcal{F})$ has vertex set $U_3$ of cardinality $N$, and has only 
strongly 3-covered cycles, by Claim~\ref{claim:tiny}.
An arbitrary clause of $\mathcal{F}$ is denoted  $(u_i^r\vee u_j^r\vee u_k^r)$ meaning
that literal $u_a^r$ is the occurrence of variable $u_a$ from $U_3$ (necessarily in positive form) in the $r$-th clause of $\mathcal{F}$. 
By definition,
in each clause, the order $\triangleleft$ defining the strongly 3-covered form is given by the order $u_i^r,u_j^r,u_k^r$
of the variables in the clause.

$\Rightarrow$: Consider a NAE truth assignment for the variables in $U_3$, whose true variables form the set $S$. 
For each arc $u_au_c$ of $G^\triangleleft(\mathcal{F})$ define the relation $u_a\blacktriangleleft u_c$ (intuitively: $u_a$ before $u_c$)
if $u_a$ is true, and 
$u_c\blacktriangleleft u_a$ if $u_a$ is false.   Now, build the directed graph $\Gamma$ with vertex set 
$U_3^\Gamma=\{u_1^\Gamma, \ldots, u_N^\Gamma\}$ and arc  set $\{u_e^\Gamma u_f^\Gamma| u_e\blacktriangleleft u_f\}$ 
(do not make use of transitivity). This graph collects the precedence relations defined by $\blacktriangleleft$,
and which require that true variables
be placed before their successors in $G^\triangleleft(\mathcal{F})$ and that false variables be placed after their successors in $G^\triangleleft(\mathcal{F})$.
More precisely:

\vspace*{-0.5cm}

\begin{multline}
 u_e\ \blacktriangleleft u_f\,  
\hbox{iff}\,\, \hbox{either literal}\, u_e\, \hbox{is true and}\, u_eu_f\, \hbox{is an arc of}\, G^\triangleleft(\mathcal{F})\, \\
\hbox{or literal}\, u_f\, 
\hbox{is false and}\, u_fu_e\, \hbox{is an arc of}\, G^\triangleleft(\mathcal{F}).
\label{eq:blackt}
\end{multline}

We show that $\Gamma$ is acyclic, so that any topological order of its vertex results into the sought LR-order $\prec$,
in which the true variables ({\em i.e.} those in $S$) are right vertices.

By contradiction, assume $\Gamma$ is not acyclic.
\medskip

(B1) {\em Let $q_1^\Gamma q_2^\Gamma \ldots q_d^\Gamma$ be a cycle in $\Gamma$. Then $q_1q_2\ldots q_d$ or $q_dq_{d-1}\ldots q_1$ 
is a cycle in  $G^\triangleleft(\mathcal{F})$.}
\medskip

Note first that $d\geq 3$, since otherwise the two arcs of a cycle of length two in $\Gamma$ would be defined by
two arcs in $G^\triangleleft(\mathcal{F})$ between the two same vertices, and $G^\triangleleft(\mathcal{F})$ has at most one arc between two given vertices
(Claim~\ref{claim:tiny}).
We make the convention that $q_{d+1}=q_1$, $q_0=q_{d}$ and similarly for the vertices in $\Gamma$. The reasoning is again 
by contradiction. Assume first that $q_1q_2$ is an arc of  $G^\triangleleft(\mathcal{F})$. Since by contradiction $q_1 q_2 \ldots q_d$ 
is not a cycle in $G^\triangleleft(\mathcal{F})$, let $h$ be the smallest 
index   such that $q_{h+1}q_h$ is an arc of  $G^\triangleleft(\mathcal{F})$. Then there
must also exist an index $l\geq h+1$ such that $q_{l}q_{l-1}$ and $q_lq_{l+1}$ are arcs of  $G^\triangleleft(\mathcal{F})$. Now, in $\Gamma$ the orientations 
of the two arcs with endpoint $q^\Gamma_l$ are defined by the truth value of $q_l$, and these arcs are both ingoing to $q_l^\Gamma$ 
(if $q_l$ is false) or both outgoing from $q_l^\Gamma$  (if $q_l$ is true).
In both cases,   we deduce that $q_1^\Gamma q_2^\Gamma \ldots q_d^\Gamma$ is not a cycle in $\Gamma$, contradicting the hypothesis. 
The reasoning is similar  in the case where $q_2q_1$ is an arc of  $G^\triangleleft(\mathcal{F})$.
\medskip

(B2) {\em No cycle $q_1q_2\ldots q_d$ in  $G^\triangleleft(\mathcal{F})$ induces a cycle in $\Gamma$.}
\medskip

Every cycle in $G^\triangleleft(\mathcal{F})$ is a strongly 3-covered cycle, since $\mathcal{F}$ is in strongly 3-covered form (Claim~\ref{claim:tiny}). 
Among the three literals of the same clause present 
in $q_1q_2\ldots q_d$, 
at least one (say $q_a$) is true and another one (say $q_b$) is false, according to the truth assignment. Then the arcs of $\Gamma$
with endpoints $q_a^\Gamma, q_{a+1}^\Gamma$ respectively  $q_b^\Gamma, q_{b+1}^\Gamma$ have opposite orientations, and the 
cycle of $G^\triangleleft(\mathcal{F})$ does not define a cycle in $\Gamma$.

\medskip

(B3) {\em $\Gamma$ has no cycle and any topological order of $\Gamma$ is an LR-order of $G^\triangleleft(\mathcal{F})$.}
\medskip

By (B1) a cycle in $\Gamma$ could only be defined by a cycle in $G^\triangleleft(\mathcal{F})$. However, by (B2) cycles in $G^\triangleleft(\mathcal{F})$
cannot define cycles in $\Gamma$. So $\Gamma$ is acyclic. Any topological order $\prec$ of $\Gamma$ extends the
relation $\blacktriangleleft$, and therefore satisfies: 

\begin{multline}
 u_a\prec u_b\,  \hbox{and there is an arc with endpoints}\,  u_a, u_b\,  \hbox{in}\,  G^\triangleleft(\mathcal{F})\, 
\hbox{iff}\\ \hbox{either literal}\, u_a\, \hbox{is true and}\, u_au_b\, \hbox{is an arc of}\, G^\triangleleft(\mathcal{F})\,
\hbox{or literal}\, u_b\, 
\hbox{is false and}\, u_bu_a\, \hbox{is an arc of}\, G^\triangleleft(\mathcal{F}).
\label{eq:prec}
\end{multline}

Now, let $u_c$ be a vertex of $G^\triangleleft(\mathcal{F})$. Then either literal $u_c$ is true and for each of its successors $u_d$ we have that
(by (\ref{eq:prec})) $u_c\prec u_d$ so that $u_c$ is a right vertex in $G^\triangleleft(\mathcal{F})$; or literal $u_c$ is false and for each 
of its successors $u_d$ we have that (by (\ref{eq:prec})) $u_d\prec u_c$, so that $u_c$ is a left vertex of $G^\triangleleft(\mathcal{F})$.
\bigskip

$\Leftarrow$: Assume now that $G^\triangleleft(\mathcal{F})$ admits an acyclic FVS $S$. Then $S\neq \emptyset$ since $G^\triangleleft(\mathcal{F})$ has cycles. 
By Claim~\ref{claim:FVSxcl},  $G^\triangleleft(\mathcal{F})$
admits an LR-order $\prec$ in which the right vertices are defined by $S$. Assign the 
value true to every variable that defines a right vertex in $G^\triangleleft(\mathcal{F})$ and the value false to every variable
that defines a left vertex in $G^\triangleleft(\mathcal{F})$. 

Every clause $(u_i^r\vee u_j^r\vee u_k^r)$ of  $\mathcal{F}$ defines
a cycle in $G^\triangleleft(\mathcal{F})$ with three vertices $u_i, u_j$ and $u_k$. In the LR-order $\prec$ of $G^\triangleleft(\mathcal{F})$, at
least one of these three vertices is a left vertex (the largest according to $\prec$) and at least one of 
them is a right vertex (the lowest according to $\prec$). Thus by assigning the value true to
all the vertices in $S$ ({\em i.e.} the right vertices of $G^\triangleleft(\mathcal{F})$), we define a NAE truth assignment for  
$\mathcal{F}$ whose true variables are exactly those in $S$.\end{proof}
\bigskip

Recall that $\tnae(\mathcal{F})$
denotes the minimum number of true variables in a NAE truth assignment of $\mathcal{F}$, if such an 
assignment exists.

\begin{fait}
Each NAE truth assignment for $\mathcal{F}$ (if such an assignment exists) with minimum number of true variables satisfies:
\begin{enumerate}
 \item[i)] for each $g\in U_2\setminus\{z\}$ exactly one of $\alpha_g$ and $\beta_g$ is true.
 \item[ii)] for each $g\in U_2\setminus\{z\}$, exactly one of the variables $a_g, b_g$ and $c_g$ is true. 
\end{enumerate}

\noindent Therefore $\frac{2(N-1)}{5}\leq \tnae(\mathcal{F})\leq \frac{2(N-1)}{5}+1$, with the minimum (resp. maximum) reached iff $z$ is false (resp. is true).

\label{claim:2N}
\end{fait}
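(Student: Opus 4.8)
The plan is to mimic closely the structure of the proof of Claim~\ref{claim:mintruth} (the standard-truth-assignment analogue), but using the extra rigidity that the NAE constraint on the consistency clauses forces. First I would fix a NAE truth assignment $ta()$ of $\mathcal{F}$ with a minimum number of true variables; such an assignment exists precisely when $\mathcal{F}$ has some NAE truth assignment, and in that case the bounds are to be proved, so I work under that hypothesis throughout. For part (ii), I would argue exactly as in Claim~\ref{claim:mintruth}: the clause $F_{4g}=(a_g\vee b_g\vee c_g)$ is itself a clause of $\mathcal{F}$, so in a NAE assignment it has at least one true and at least one false literal, hence \emph{at most two} of $a_g,b_g,c_g$ are true; and if exactly two were true, then since one of $\alpha_g,\beta_g$ must already be true to satisfy $F_{1g},F_{2g},F_{3g}$ in the NAE sense (each of those clauses needs a true literal, and if both $\alpha_g$ and $\beta_g$ were false all three would be violated), we could flip one of the two true variables among $a_g,b_g,c_g$ to false without destroying NAE-satisfaction of any clause, contradicting minimality. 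A small point to check is that flipping $a_g$ (say) to false keeps $F_{4g}$ NAE-satisfied (it still has one true, namely the other of $b_g,c_g$, and now two false) and keeps $F_{1g}$ NAE-satisfied ($a_g$ only appears there, where $\alpha_g$ or $\beta_g$ is true, so there is still a true literal, and $a_g$ false gives the needed false literal) — so the flip is safe. Hence exactly one of $a_g,b_g,c_g$ is true.

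For part (i), the NAE constraint on the consistency clauses does the work. The four clauses $F_{1g},\dots,F_{4g}$ over $\{\alpha_g,\beta_g,a_g,b_g,c_g\}$ together force, in any NAE assignment, that $\alpha_g\neq\beta_g$: indeed the classical ``gadget'' reduction from NAE 3-SAT to M-NAE 3-SAT (exactly the one encoded in Table~\ref{table:red}) was designed so that these four clauses are NAE-satisfiable iff $\alpha_g$ and $\beta_g$ receive opposite truth values — this is the sentence in the paper stating ``the consistency clauses added for each variable guarantee that each pair of variables $\alpha_g$ and $\beta_g$ must have different values in a NAE truth assignment.'' I would simply invoke that, or give the two-line check: if $\alpha_g=\beta_g=$ true then $F_{4g}=(a_g\vee b_g\vee c_g)$ would need all of $a_g,b_g,c_g$ false (to keep $F_{1g},F_{2g},F_{3g}$ NAE, which have $\alpha_g,\beta_g$ true so need their third literal false), but then $F_{4g}$ has no true literal — contradiction; symmetrically $\alpha_g=\beta_g=$ false makes $F_{1g}$ have no true literal. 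So exactly one of $\alpha_g,\beta_g$ is true.

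Combining (i) and (ii): among the $N-1=5(n+m)$ variables other than $z$, they split into the $(n+m)$ groups $\{\alpha_g,\beta_g,a_g,b_g,c_g\}$, and in each group exactly one of $\alpha_g,\beta_g$ and exactly one of $a_g,b_g,c_g$ is true, i.e. exactly $2$ of the $5$ are true; this accounts for exactly $\frac{2(N-1)}{5}$ true variables among $U_3\setminus\{z\}$, regardless of the assignment. The only remaining freedom is the value of $z$, giving $\tnae(\mathcal{F})=\frac{2(N-1)}{5}$ if $z$ is false and $\frac{2(N-1)}{5}+1$ if $z$ is true — which is exactly the claimed inequality together with the ``iff'' characterisation of when each extremum is attained. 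To finish I should confirm the range is actually realisable (so that the ``iff'' is not vacuous): the maximum is witnessed by the assignment setting $z$ true, $\alpha_g$ true, $a_g$ true and everything else false (one checks the basic clauses $F_r,F'_r$ are NAE-satisfied here because they each contain $z$ or an $\alpha_{w_r}$ chosen appropriately — actually one should double-check $F_r=(\gamma_i\vee\gamma_j\vee\alpha_{w_r})$ has both a true and a false literal under this assignment); for the minimum one flips $z$ and, if needed by NAE-ness of the basic clauses, adjusts within the forced ``exactly 2 per group'' budget. I expect the main obstacle to be precisely this last realisability check on the \emph{basic} clauses $F_r,F'_r$ under a minimum assignment, i.e. showing that the forced count $\frac{2(N-1)}{5}$ (with $z$ false) is compatible with NAE-satisfying all of $\mathcal{F}$ — but this should follow from Claim~\ref{claim:acyclic} and Claim~\ref{claim:SATred}: if the original $\mathcal{C}$ is satisfiable then $\mathcal{F}$ has a NAE assignment, and the forced group-structure shows any such assignment with $z$ false has exactly $\frac{2(N-1)}{5}$ ones, so the infimum is attained; if $\mathcal{C}$ is unsatisfiable the statement is about a nonexistent assignment and there is nothing more to prove beyond the conditional phrasing already in the claim.
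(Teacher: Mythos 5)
Your proof is correct and follows essentially the same route as the paper's: part (i) from the consistency clauses plus the NAE requirement, part (ii) from the NAE constraint on $F_{4g}$ together with a minimality flip, and then the count of exactly two true variables per group of five with $z$ as the only remaining degree of freedom. Your closing worry about realisability of the lower bound is unnecessary, since the claim is conditional on the existence of a NAE assignment and only characterises when each extremum is attained; the paper does not address it either.
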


\begin{proof} Affirmation {\em i)} is due to the consistency constraints and the NAE requirements. For Affirmation {\em ii)}, 
clause $(a_g\vee  b_g\vee c_g)$ and NAE requirements imply that at least one and at most two of the variables 
are true. Moreover, if for instance $a_g$ and $b_g$ are true in a NAE truth assignment,  modifying $b_g$ to false 
yields another NAE truth assignment with smaller number of true variables, a contradiction. 

By affirmations {\em i)} and {\em ii)}, a NAE truth assignment with a minimum number of true variables defines
as true  at least $\frac{2(N-1)}{5}$ variables, namely one variable among each pair $\alpha_g,\beta_g$
and one variable among each triple $a_g, b_g,c_g$. The unique remaining variable in $U_3$, the variable set of $\mathcal{F}$, 
is $z$. The claim follows. \end{proof}

\begin{fait}
  $\frac{2(N-1)}{5}\leq \fa(G^\triangleleft(\mathcal{F}))=\tnae(\mathcal{F})\leq \frac{2(N-1)}{5}$.
  \label{claim:tnae=fa}
\end{fait}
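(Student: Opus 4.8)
The plan is to mirror the proof of Claim~\ref{claim:ts=f}: the equality $\fa(G^\triangleleft(\mathcal{F}))=\tnae(\mathcal{F})$ will come from the correspondence of Claim~\ref{claim:acyclic}, and the numerical bounds from Claim~\ref{claim:2N}. The only ingredient that is not a direct citation is the sharpened upper bound $\tnae(\mathcal{F})\leq\frac{2(N-1)}{5}$ (rather than $\frac{2(N-1)}{5}+1$), for which I would show that whenever $\mathcal{F}$ admits a NAE truth assignment it admits one in which $z$ is false.

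First I would record the equality. By Claim~\ref{claim:acyclic}, a set $S\subseteq U_3$ is the set of true variables of a NAE truth assignment of $\mathcal{F}$ if and only if $S$ is an acyclic FVS of $G^\triangleleft(\mathcal{F})$; this correspondence preserves cardinality, so $\mathcal{F}$ has a NAE truth assignment exactly when $G^\triangleleft(\mathcal{F})$ has an acyclic FVS, and in that case a NAE truth assignment with the fewest ones matches an acyclic FVS of minimum size and conversely. Hence $\fa(G^\triangleleft(\mathcal{F}))=\tnae(\mathcal{F})$, and the lower bound $\fa(G^\triangleleft(\mathcal{F}))=\tnae(\mathcal{F})\geq\frac{2(N-1)}{5}$ is immediate from items i) and ii) of Claim~\ref{claim:2N}, which hold for every NAE truth assignment.

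For the upper bound I would start from an arbitrary NAE truth assignment of $\mathcal{F}$. Its pointwise complement is again a NAE truth assignment (in each clause it merely exchanges the true and the false literals, and all literals are positive) and it flips the value of $z$, so we may assume we have a NAE truth assignment $\sigma$ with $z$ false. For each $g\in U_2\setminus\{z\}$ the four consistency clauses force $\sigma$ to set exactly one of $\alpha_g,\beta_g$ to true: if both were true, then $F_{1g},F_{2g},F_{3g}$ would force $a_g,b_g,c_g$ all false and $F_{4g}$ would fail to be NAE, and symmetrically if both were false. Finally, if $\sigma$ makes two of $a_g,b_g,c_g$ true, flipping one of them to false keeps $\sigma$ a NAE truth assignment -- the only clauses involved are $F_{4g}$, still NAE, and one of $F_{1g},F_{2g},F_{3g}$, which already contains both a true and a false literal among $\alpha_g,\beta_g$ -- and it strictly decreases the number of ones. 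Iterating these trims over all $g$ yields a NAE truth assignment with $z$ false, one true variable in each pair $\{\alpha_g,\beta_g\}$ and exactly one in each triple $\{a_g,b_g,c_g\}$, hence with exactly $\frac{2(N-1)}{5}$ ones; transporting it through Claim~\ref{claim:acyclic} gives an acyclic FVS of the same size, which finishes the argument. The step needing the most care is this clean-up: one must check that neither the global complementation (which moves $z$ to false) nor the trimming of the triples damages a basic clause $F_r$ or $F'_r$, which holds because $a_g,b_g,c_g$ occur only in the consistency clauses $F_{1g},\dots,F_{4g}$ and complementation preserves the NAE status of every clause.
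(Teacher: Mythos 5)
Your handling of the equality and of the lower bound is exactly the paper's: the equality $\fa(G^\triangleleft(\mathcal{F}))=\tnae(\mathcal{F})$ follows from the correspondence of Claim~\ref{claim:acyclic}, and the lower bound from Claim~\ref{claim:2N}. Where you diverge is the upper bound. The paper's entire proof is the pair of citations just mentioned, so the only upper bound it actually establishes is the $\frac{2(N-1)}{5}+1$ of Claim~\ref{claim:2N}; comparing with the twin Claim~\ref{claim:ts=f}, the absent ``$+1$'' in the statement here is almost certainly a typo, and the intended claim is the weaker one. You took the statement literally and built a real argument for the sharper bound: complement the assignment to make $z$ false (valid, since every clause of $\mathcal{F}$ is monotone and the NAE property of a monotone clause is invariant under global complementation), then trim each triple $a_g,b_g,c_g$ to a single true variable. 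I see no error in that argument. But precisely because it is correct, it proves more than the paper wants: it shows $\tnae(\mathcal{F})=\frac{2(N-1)}{5}$ for every NAE-satisfiable $\mathcal{F}$, so the value $\frac{2(N-1)}{5}+1$ is never attained. That contradicts the ``maximum reached iff $z$ is true'' part of Claim~\ref{claim:2N}, makes item \emph{i)} of Claim~\ref{claim:4equiv} hold whenever a NAE assignment exists at all, and collapses the two-valued promise on which step (C2) of Theorem~\ref{thm:2choice} relies for the NAE and acyclic variants. So rather than a harmless alternative route, your proof exposes an inconsistency between this claim as printed and the surrounding results; you should either adopt the ``$+1$'' reading (in which case the proof is just the paper's two citations and your extra work is unnecessary) or explicitly flag that the complementation argument undermines the later reductions.

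One further caveat applies to both your proof and the statement itself: everything presupposes that $\mathcal{F}$ admits a NAE truth assignment. If it does not, the paper's convention sets $\tnae(\mathcal{F})=N+1$, and no upper bound of the form $\frac{2(N-1)}{5}$ or $\frac{2(N-1)}{5}+1$ can hold; Claim~\ref{claim:2N} is explicitly conditioned on existence, whereas this claim (and your opening step ``start from an arbitrary NAE truth assignment'') is not.
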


\begin{proof} By Claim~\ref{claim:acyclic},  a minimum acyclic FVS $S$ corresponds to the minimum set of true variables in a NAE truth 
assignment of $\mathcal{F}$, and viceversa. Therefore 
$\fa(G^\triangleleft(\mathcal{F}))=\tnae(\mathcal{F})$.
By Claim~\ref{claim:2N},  $\tnae(\mathcal{F})$ has the required bounds. \end{proof}

\subsection{NP-completeness results}\label{subsect:NP}

The results in the previous section allow us to easily deduce the hardness of {\sc Acyclic FVS} for 3c-digraphs:
\bthm
{\sc Acyclic FVS} is NP-complete even for 3c-digraphs.
\label{thm:xclreconn}
\ethm

\begin{proof} The problem is in NP, since testing the acyclicity of a graph is done in polynomial time.
Then the theorem is proved by reduction from {\sc 3-SAT}. Given a set $\mathcal{C}$ of
3-literal clauses for 3-SAT, its M-NAE version $\mathcal{F}$ satisfies Claim~\ref{claim:SATred},
{\em i.e.} 3-SAT with input $\mathcal{C}$ has a solution iff M-NAE 3-SAT with input $\mathcal{F}$ has a
solution. The last affirmation holds, by Claim~\ref{claim:acyclic}, iff $G^\triangleleft(\mathcal{F})$ has
an acyclic FVS. Moreover, by Claim~\ref{claim:tiny}, $G^\triangleleft(\mathcal{F})$ is a 3c-digraph,
and the theorem is proved. \end{proof}
\bigskip

The following claim will allow us to prove the hardness results for the {\sc 2-Choice} restrictions 
we defined.

\begin{fait}
The following affirmations are equivalent:
\begin{itemize}
\item[i)] $\tnae(\mathcal{F})=\frac{2(N-1)}{5}$
\item[ii)] $\ts(\mathcal{F})=\frac{2(N-1)}{5}$ 
\item[iii)] $\f(G^\triangleleft(\mathcal{F}))=\frac{2(N-1)}{5}$
\item[iv)]  $\fa(G^\triangleleft(\mathcal{F}))=\frac{2(N-1)}{5}$
\end{itemize}
\label{claim:4equiv}
 \end{fait}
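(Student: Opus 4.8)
The plan is to prove the four-way equivalence by establishing a cycle of implications, relying mainly on the already-proved identities $\f(G^\triangleleft(\mathcal{F}))=\ts(\mathcal{F})$ (Claim~\ref{claim:ts=f}) and $\fa(G^\triangleleft(\mathcal{F}))=\tnae(\mathcal{F})$ (Claim~\ref{claim:tnae=fa}), together with the structural facts from Claims~\ref{claim:mintruth} and~\ref{claim:2N}. Indeed, the equivalences $ii)\Leftrightarrow iii)$ and $i)\Leftrightarrow iv)$ are immediate from those two identities, so the real content is linking the "standard" world to the "NAE" world, i.e. proving $i)\Leftrightarrow ii)$ (equivalently $iii)\Leftrightarrow iv)$).

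First I would record the easy direction. By Claims~\ref{claim:mintruth} and~\ref{claim:2N}, both $\ts(\mathcal{F})$ and $\tnae(\mathcal{F})$ lie in $\{\frac{2(N-1)}{5},\frac{2(N-1)}{5}+1\}$, and a NAE truth assignment is in particular a standard truth assignment, so $\ts(\mathcal{F})\le\tnae(\mathcal{F})$. Hence $i)\Rightarrow ii)$ is trivial: if $\tnae(\mathcal{F})=\frac{2(N-1)}{5}$ then $\ts(\mathcal{F})\le\frac{2(N-1)}{5}$, and the lower bound of Claim~\ref{claim:mintruth} forces equality.

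The substantive step is $ii)\Rightarrow i)$: from a standard truth assignment of $\mathcal{F}$ with exactly $\frac{2(N-1)}{5}$ ones I must produce a NAE truth assignment with the same number of ones. By Claim~\ref{claim:mintruth}\emph{ii)} such an optimal standard assignment sets exactly one of $a_g,b_g,c_g$ true for each $g$, and by the counting in that proof it therefore sets exactly one of $\alpha_g,\beta_g$ true for each $g$ and sets $z$ false (otherwise the count would be $\frac{2(N-1)}{5}+1$). I then claim this assignment is already NAE. The clauses $F_{4g}=(a_g\vee b_g\vee c_g)$ have exactly one true literal, hence a false one; the clauses $F_{ig}=(\alpha_g\vee\beta_g\vee \cdot)$ for $i\in\{1,2,3\}$ have exactly one of $\alpha_g,\beta_g$ true, hence one false; and for the basic clauses $F_r=(\gamma_i\vee\gamma_j\vee\alpha_{w_r})$, $F'_r=(\beta_{w_r}\vee\gamma_k\vee z)$ one uses that $z$ is false and that $\alpha_{w_r},\beta_{w_r}$ take opposite-ish values — here I need to check that the optimal assignment can be taken to satisfy $\alpha_{w_r}\ne\beta_{w_r}$, which follows because $\{\alpha_{w_r},\beta_{w_r}\}$ is one of the pairs $\{\alpha_g,\beta_g\}$ with $g=w_r\in U_2\setminus\{z\}$, so exactly one of them is true. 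Then each basic clause automatically contains a false literal ($z$, or the false one of $\alpha_{w_r}/\beta_{w_r}$) and, being satisfied in the standard sense, a true one as well. So the assignment is NAE with $\frac{2(N-1)}{5}$ ones, giving $\tnae(\mathcal{F})\le\frac{2(N-1)}{5}$ and hence $i)$.

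Finally I would assemble the chain: $iii)\Leftrightarrow ii)$ by Claim~\ref{claim:ts=f}, $ii)\Leftrightarrow i)$ by the two paragraphs above, and $i)\Leftrightarrow iv)$ by Claim~\ref{claim:tnae=fa}; since these share endpoints we get the full equivalence $i)\Leftrightarrow ii)\Leftrightarrow iii)\Leftrightarrow iv)$. The main obstacle I anticipate is the verification in $ii)\Rightarrow i)$ that a \emph{minimum} standard assignment is forced into the rigid "one-of-three, one-of-two, $z$ false" shape \emph{and} respects $\alpha_{w_r}\ne\beta_{w_r}$ for the indices $w_r$ coming from basic clauses; this is where one must be careful to quote exactly what Claim~\ref{claim:mintruth} gives (it asserts "at least one of $\alpha_g,\beta_g$" but the counting argument upgrades "at least one" to "exactly one" and pins down $z$), rather than re-deriving it, and to note that the $w_r$ are among the $g\in U_2\setminus\{z\}$ so the per-pair exactness applies to them too. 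Everything else is bookkeeping on the clause structure of~(\ref{eq:1}).
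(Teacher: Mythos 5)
Your overall architecture matches the paper's: $ii)\Leftrightarrow iii)$ and $i)\Leftrightarrow iv)$ from Claims~\ref{claim:ts=f} and~\ref{claim:tnae=fa}, the easy direction $i)\Rightarrow ii)$ from the bounds plus the fact that a NAE assignment is standard, and the substantive work concentrated in $ii)\Rightarrow i)$. You also correctly extract from the counting in Claim~\ref{claim:mintruth} that an optimal standard assignment has exactly one of $\alpha_g,\beta_g$ true, exactly one of $a_g,b_g,c_g$ true, and $z$ false. However, there is a genuine gap in your central claim that such an assignment ``is already NAE.'' The clause $F_r=(\gamma_i\vee\gamma_j\vee\alpha_{w_r})$ contains $\alpha_{w_r}$ but \emph{not} $\beta_{w_r}$, so knowing that exactly one of the pair $\alpha_{w_r},\beta_{w_r}$ is true does not place a false literal inside $F_r$: if $\alpha_{w_r}$ happens to be the true member of the pair and $\gamma_i,\gamma_j$ are also true, then $F_r$ has all three literals true and the assignment fails the NAE condition on that clause. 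Your parenthetical ``($z$, or the false one of $\alpha_{w_r}/\beta_{w_r}$)'' silently assumes the false member of the pair occurs in every basic clause, which is only true of $F'_r$ (which contains $\beta_{w_r}$ and the false $z$), not of $F_r$.

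The paper closes exactly this hole with an extra repair step that your proof is missing: for each $r$ such that all three literals of $F_r$ are true, swap the values of $\alpha_{w_r}$ and $\beta_{w_r}$ (set $\alpha_{w_r}$ false, $\beta_{w_r}$ true). This preserves the number of ones, keeps every clause satisfied, and makes every clause NAE, because $\alpha_{w_r}$ and $\beta_{w_r}$ occur only in $F_r$, $F'_r$ and the consistency clauses indexed by $w_r$ (where exactly one of the pair remains true), and one checks that no other clause type can be all-true under the rigid shape you derived. With that modification inserted, your argument becomes the paper's proof; without it, the step $ii)\Rightarrow i)$ does not go through as written.
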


 \begin{proof} By Claim~\ref{claim:ts=f}, $\f(G^\triangleleft(\mathcal{F}))=\ts(\mathcal{F})$ and thus Affirmations {\em ii)} and {\em iii)} are
 equivalent. Similarly, by Claim~\ref{claim:tnae=fa}, Affirmations {\em i)} and {\em iv)} are equivalent. It remains to show
 that {\em i)} and {\em ii)} are equivalent.

 $i) \Rightarrow ii)$: By Claim~\ref{claim:mintruth}, the hypothesis that $\tnae(\mathcal{F})=\frac{2(N-1)}{5}$ and since each NAE truth assignment is
 also a standard truth assignment, we have that $\frac{2(N-1)}{5}\leq \ts(F) \leq \tnae(\mathcal{F})=\frac{2(N-1)}{5}$
 and thus $\ts(\mathcal{F})=\frac{2(N-1)}{5}$. 
 
 $ii) \Rightarrow i)$: Since $\ts(\mathcal{F})=\frac{2(N-1)}{5}$, there exists a standard truth assignment $ta()$
 of $\mathcal{F}$ in which exactly one of $\alpha_g, \beta_g$ and exactly one of $a_g, b_g, c_g$, for each $g\in U_2\setminus\{z\}$, is true (by Claim~\ref{claim:mintruth} {\em i)} and {\em ii)}).
 Moreover $z$ is false  in this standard truth assignment, since the number of aforementioned true variables is  already $\frac{2(N-1)}{5}$.
 We now show that we can modify $ta()$ in order to obtain a NAE truth assignment containing the same number of true variables.
 To this end, notice first that if $\mathcal{F}$ contains clauses with three positive literals, these clauses cannot be among clauses
 $F_{1g}, F_{2g}, F_{3g}, F_{4g}$ and $F'_r$ since all of them contain at least one negative literal. 
 Then define the truth assignment $tb()$ for $\mathcal{F}$ as a variant of $ta()$ where, for each $r$ such that all literals are true
in the clause $F_r$, $\alpha_{w_r}$ is set to false and $\beta_{w_r}$ is set to true. Then $tb()$
is a NAE truth assignment with the same number of true literals as $ta()$, and thus by Claim~\ref{claim:2N} $tb(\mathcal{F})$ has minimum number of
true literals, thus gives the value of $\tnae(\mathcal{F})$. \end{proof}
\bigskip

We are now ready to prove the result concerning the  {\sc 2-Choice} variants we defined in Section \ref{sect:notations}.
Let us give the full name {\sc Strongly 3-Covered 2-Choice-Min1-M 3-SAT} to the problem {\sc 2-Choice-Min1-M 3-SAT}
where the input is restricted to clauses $\mathcal{C}$ admitting a 3-strongly dominated form, and
similarly for {\sc 2-Choice-Min1-M-NAE 3-SAT}. Then the representative graph of the set of clauses in the input is a 3c-digraph.

\bthm
The following problems are NP-complete:  
\begin{itemize}
 \item {\sc Strongly 3-Covered 2-Choice-Min1-M 3-SAT}
 \item {\sc Strongly 3-Covered  2-Choice-Min1-M-NAE 3-SAT}
 \item {\sc 2-Choice MFVS} on 3c-digraphs
 \item {\sc 2-Choice Acyclic MFVS} on 3c-digraphs.
\end{itemize}

\label{thm:2choice}
\ethm

\begin{proof} All problems belong to NP, since they are restrictions of problems belonging to NP.

The reduction is done from NAE 3-SAT, which is NP-complete. Let the input of NAE 3-SAT  be a set $\mathcal{C}$ of 
3-literal clauses over the variable set $X=\{x_1, x_2, \ldots, x_n\}$. As we did above for the instances of 3-SAT
(which are the same as those for 3-SAT), we build the M-NAE version $\mathcal{F}$ of $\mathcal{C}$. Then we have:
\bigskip

(C1) {\em {\sc NAE 3-SAT} with input $\mathcal{C}$ has a solution iff\, {\sc M-NAE 3-SAT}  with input $\mathcal{F}$ has a solution where $z$ is false.}
\bigskip

Indeed, if NAE 3-SAT with input $\mathcal{C}$ has a solution then each clause $(l_i\vee l_j\vee l_k)$ has at least one true literal (say $l_p$)
and at least one false literal (say $l_q$). Following the reductions in Table \ref{table:red}, recall 
the interpretation of the literals $h_u$: $l_u$ is true in the initial instance iff $h_u\neq z$ in the NAE 4-SAT instance. 
Then, if $z$ is false, $h_p$ is true and $h_q$ is false, implying that the clause $(h_i\vee h_j\vee h_k\vee z)$
of NAE 4-SAT is true. The other reductions successively transform this clause into equivalent sets clauses, implying that
there is a solution of M-NAE 3-SAT  where $z$ is false. Conversely, if $z$ is false in the solution of M-NAE 3-SAT for $\mathcal{F}$, 
then by the equivalence of the clauses at each successive reduction (taken backwards in Table \ref{table:red}) we deduce
that in each clause $(h_i\vee h_j\vee h_k\vee z)$ of NAE 4-SAT some $h_p$ must be true whereas some $h_q$ must be false.
The relationship between $l_u$ and $h_u$ then implies that $l_p$ is true and $l_q$ is false, therefore   NAE 3-SAT with 
input $\mathcal{C}$ also  has a solution.
\bigskip

(C2) {\em {\sc M-NAE 3-SAT} with input $\mathcal{F}$ has a solution where $z$ is false iff $\tnae(\mathcal{F})=\frac{2(N-1)}{5}$.}
\bigskip

This is an easy consequence of Claim~\ref{claim:2N}.
\bigskip

(C3)  {\em $\tnae(\mathcal{F})=\frac{2(N-1)}{5}$ iff {\sc Strongly 3-Covered 2-Choice-Min1-M-NAE 3-SAT} with input $\mathcal{F}$ and
$D=\frac{2(N-1)}{5}$ has a positive answer.}
\bigskip

We first notice that  $\mathcal{F}$ satisfies the hypothesis of {\sc Strongly 3-Covered 2-Choice Min1-M-NAE 3-SAT}.
Indeed $\mathcal{F}$ has only strongly 3-covered cycles 
by Claim~\ref{claim:tiny}, and $\frac{2(N-1)}{5}\leq \tnae(\mathcal{F})\leq \frac{2(N-1)}{5}+1$ by Claim~\ref{claim:2N}.
The affirmation is then immediate, by the definition of the problem and the bounds we have on $\tnae(\mathcal{F})$.
\bigskip

From the equivalences proved in (C1), (C2) and (C3), we deduce that {\sc Strongly 3-Covered 2-Choice-Min1-M-NAE 3-SAT}
is NP-complete. To deduce that the three other problems are NP-complete, the same approach is used. 
By  Claim~\ref{claim:4equiv}, (C2) holds even when in its second affirmation $\tnae()$ is replaced 
with $\ts()$, or $\f()$ or $\fa()$. And (C3) holds when 1) $\tnae()$ is replaced with $\ts()$, or $\f()$ or $\fa()$,
and 2) {\sc Strongly 3-Covered 2-Choice Min1-M-NAE 3-SAT} is respectively replaced with {\sc Strongly 3-Covered 2-Choice Min1-M 3-SAT}, or
{\sc 2-Choice MFVS} on 3c-digraphs, or {\sc 2-Choice Acyclic MFVS} on 3c-digraphs. \end{proof}

\section{Classes for which {\em Acyclic FVS} is polynomially solvable}\label{sect:classes}

According to Claim~\ref{claim:FVSxcl}, solving {\sc Acyclic FVS} on a particular class
of directed graphs is equivalent to testing in polynomial time, for the graphs in the class, whether an LR-order
exists. 

In this section, we present two classes of \xcls for which the LR-order always exists and may
be found in polynomial time.  {\sc Acyclic FVS} is therefore solvable in polynomial time, 
and a solution is computable in polynomial time. 

\subsection{ \consecs}

A directed simple graph $G=(V,E)$ is called a  \consec  if its 
adjacency matrix has the Consecutive Ones Property for Rows (C1PR). That means the columns of the matrix can be reordered
such that in each row of the matrix the ones are consecutive. An order of the columns that correctly positions the ones in each 
row is called a {\em good order}.
Then we easily have:

\begin{fait}
  \consecs are exactly the \xcls admitting an LR-order positioning the successors of each vertex on consecutive places. 

\end{fait}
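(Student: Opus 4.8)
The plan is to prove the two inclusions separately and to exhibit the relevant linear order explicitly in each case; no auxiliary gadget is needed, because the good order of the columns and the sought LR-order turn out to be one and the same object.

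First I would show that every \xcl admitting an LR-order in which the successors of each vertex occupy consecutive positions is a \consecsans. Let $\prec$ be such an LR-order of $G$, and index both the rows and the columns of the adjacency matrix of $G$ by $\prec$. The $1$-entries of the row of a vertex $v$ then sit exactly at the columns indexed by the vertices of $N^+(v)$, which by hypothesis form a block of consecutive positions; so with this column order every row has its ones consecutive, the matrix has the Consecutive Ones Property for Rows, and $\prec$ is a good order. Hence $G$ is a \consecsans.

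For the converse, let $G$ be a \consec and let $u_1\prec u_2\prec\cdots\prec u_n$ be a good order of its columns, which we read as a linear order on $V$. Fix a vertex $v=u_p$. In the row of $v$ the ones occupy a (possibly empty) set $J$ of consecutive column positions, and since $G$ is loopless we have $p\notin J$. The elementary point is that a nonempty set of consecutive integers that avoids $p$ lies entirely below $p$ or entirely above $p$: in the first case all vertices of $N^+(v)$ precede $v$ in the order, so $v$ is a left vertex; in the second case they all follow $v$, so $v$ is a right vertex; and if $J=\emptyset$ then $v$ has no successor and may be declared, say, a left vertex, a choice the definition of an \xcl leaves free. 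Hence this order is an LR-order of $G$ and, by construction, the successors of each vertex occupy consecutive places in it; so $G$ is an \xcl of the required kind.

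I do not expect a real obstacle. The single point requiring care is the discreteness argument in the converse: it is the integrality of the positions, together with $v\notin N^+(v)$, that upgrades ``$N^+(v)$ is consecutive'' to ``$N^+(v)$ is one-sided'', which is precisely what makes the good order an LR-order rather than merely an ordering that happens to render each $N^+(v)$ an interval. One should also note that using the good column order simultaneously as the row order is harmless, since the Consecutive Ones Property for Rows imposes no constraint on the order of the rows.
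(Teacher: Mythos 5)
Your proof is correct and follows essentially the same route as the paper: both directions rest on reading the good column order as a vertex order and using looplessness (the zero diagonal entry) to split the consecutive block of ones to one side of the diagonal, which is exactly the paper's argument. The only difference is cosmetic — you spell out the empty-successor case and the converse that the paper dismisses as straightforward.
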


\begin{proof} Let $G$ be a  \consec  and $A$ be its adjacency matrix. Let $A^{C1R}$ be the matrix
obtained from $A$ by reordering the columns according to the good order, and then reordering the
rows according to the good order. Note that the row reordering does not change the C1R property
realized by the column reordering. Then $A^{C1R}$ still defines $G$ but with the vertices ordered according
to the good order. In this order, the successors of each vertex $v$ (represented by the ones on the row
corresponding to $v$) are consecutive. Moreover, since $G$ is loopless, the element on the main diagonal
of  $A^{C1R}$ is 0 in each row, therefore the successors of $v$ are either all before or all after the diagonal.
Then $v$ is a left vertex if the ones on its row are before the diagonal, and a right vertex otherwise.

Conversely, it is straightforward that each \xcl admitting an LR-order positioning the successors of each vertex 
on consecutive places is a \consecsans. \end{proof}
\bigskip

Note that each  order  positioning the successors of each vertex on consecutive places necessarily is an LR-order. To
find such an order, we take advantage of the numerous studies on the C1P property (see \cite{Dom} for a survey). 
They show that a good order may be obtained in linear  time for such a graph, using for instance the algorithm in \cite{Booth}. 

\consecs may be defined in terms of intersections of intervals, as follows. Following \cite{das}, consider a family $V$ of
ordered pairs of intervals $(S_v,T_v)$, $v=1, 2, \ldots,n$, on the real line. The {\em intersection digraph} of the
family $V$  is the digraph with vertex set $\{1, 2, \ldots, n\}$ and whose arcset is defined by all the ordered pairs 
$vw$ such that $S_v\cap T_w\neq\emptyset$. 
An {\em interval digraph} is any intersection digraph of a family of ordered pairs of intervals.  When the intervals
$T_v$ are singletons, the interval digraph is called an {\em interval-point digraph}. Then we have:

\begin{fait}[\cite{das}]
 A digraph $D$ is an interval-point digraph iff its adjacency matrix has the Consecutive Ones Property for rows.
\end{fait}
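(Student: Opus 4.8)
The plan is to show that the two descriptions of a digraph $D=(V,E)$ — as an intersection digraph of ordered interval/point pairs, and as a $0/1$ matrix whose ones are consecutive in each row — encode the same combinatorial data, and to translate explicitly between them. I would also remark at the outset that relabeling rows is irrelevant: the Consecutive Ones Property for rows depends only on the order of the columns, so in both directions only the columns need to be permuted.

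First I would treat the forward implication. Suppose $D$ is an interval-point digraph, realized by a family $(S_v,T_v)$, $v=1,2,\ldots,n$, with each $T_v=\{t_v\}$ a single point of the real line, so that $vw$ is an arc iff $t_w\in S_v$. Reorder the columns of the adjacency matrix by the coordinate of the associated point, relabelling so that $t_{w_1}\le t_{w_2}\le\cdots\le t_{w_n}$. Since each $S_v$ is an interval $[a_v,b_v]$, the set of indices $j$ with $t_{w_j}\in[a_v,b_v]$ is a set of consecutive indices. After this column reordering, the ones of row $v$ sit exactly in the columns $w_j$ with $t_{w_j}\in S_v$, hence occupy consecutive positions, so the matrix has C1PR.

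For the converse, suppose the adjacency matrix has C1PR and reorder its columns as $c_1,c_2,\ldots,c_n$ so that in every row the ones are consecutive. Put $T_{c_j}=\{j\}$, a singleton on the real line. For each row $v$: if the row has no ones, set $S_v=\emptyset$ (or, if nonempty intervals are required, $S_v=[-1,-1]$, which contains no $t_{c_j}$ since $j\ge 1$); otherwise let the ones of row $v$ occupy the consecutive positions $l_v,l_v+1,\ldots,r_v$ and set $S_v=[l_v,r_v]$. Then $S_v\cap T_{c_j}\ne\emptyset$ iff $j\in[l_v,r_v]$ iff the $(v,c_j)$-entry of the matrix is $1$, so the intersection digraph of $\{(S_v,T_v)\}$ is precisely $D$; thus $D$ is an interval-point digraph.

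The argument is essentially a dictionary, so I do not expect a genuine obstacle; the only points that need care are the degenerate cases. These are: rows that are identically zero, handled by assigning an empty or out-of-range interval $S_v$; and, in the forward direction, distinct columns whose points share a coordinate, handled by observing that a closed interval either contains all points with a given coordinate or none of them, so the consecutiveness of the ones survives any fixed tie-break in the sorting of the $t_{w_j}$.
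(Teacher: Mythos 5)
Your proof is correct. The paper does not prove this claim itself---it is quoted from the cited reference \cite{das} without proof---and your two-way dictionary (sort columns by the point coordinates in one direction; use column positions as points and the runs of ones as intervals in the other) is exactly the standard argument, with the degenerate cases (all-zero rows, coincident points) handled properly.
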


As \consecs are simple digraphs, we deduce that \consecs are exactly the loopless interval-point digraphs, {\em
i.e.} those for which the ordered pairs of intervals $(S_v, \{t_v\})$ satisfy $t_v\not\in S_v$ (so that $vv$
is never an arc).

\subsection{Reducible flow graphs}

A directed graph $G=(V,E)$ is a {\em flow graph} if there exists a vertex $s$ such that any vertex in $V\setminus\{s\}$ is
reachable by a path from $s$. The flow graph is then denoted $(G,s)$. Flow graphs
are used to model program flows and are therefore important in practice \cite{Shamir}. Among the well-known subclasses of flow graphs,
rooted DAGs and series-parallel directed graphs are very well studied examples. We consider here the class of reducible flow graphs,
introduced in \cite{Hecht2} and studied for instance in \cite{Hecht,Tarjan,Aho}. This class is one of the 
(not so many) classes of directed graphs for which MFVS may be solved in polynomial time \cite{Shamir}.
However, the FVS with minimum size returned by the algorithm in \cite{Shamir} may not induce an
acyclic graph, as shown by the graph in Fig. \ref{fig:contrex} (which is inspired from an example proposed in \cite{Shamir}).

\begin{figure}
\centering
 \includegraphics[angle=0, width=5cm]{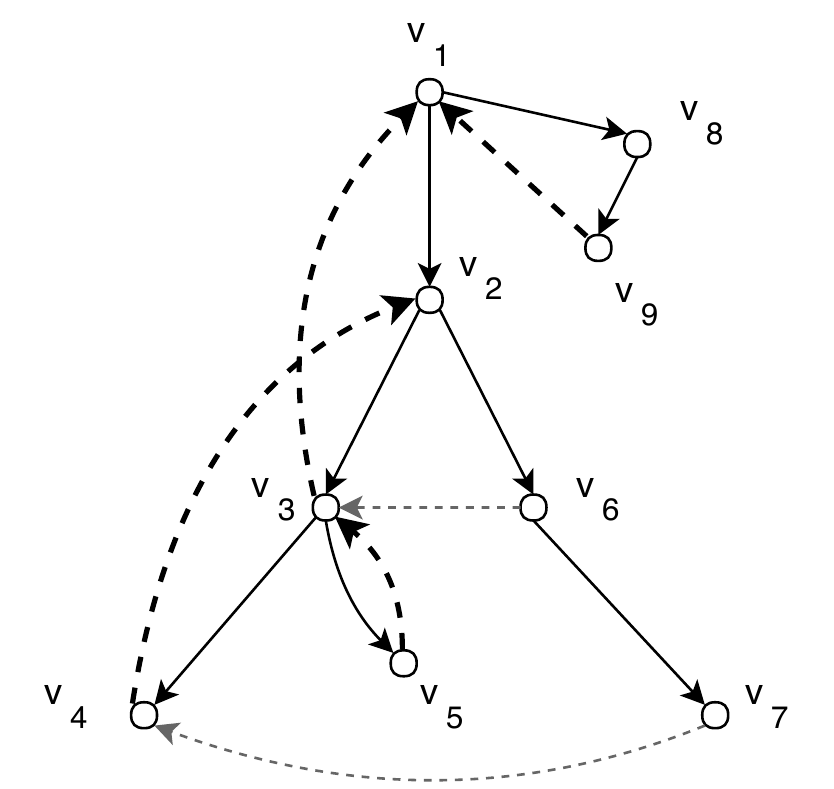}
\label{fig:contrex}
 \caption{Example of a graph for which the minimum FVS computed by the algorithm in \cite{Shamir}
 induces a cycle. Plain black, bold dashed and dotted arrows respectively represent tree, cycle and cross arcs. The minimum FVS
 returned by the algorithm in \cite{Shamir} is formed by the vertices $v_1, v_2$ and $v_3$, which induce a cycle.}
 \end{figure}

Most of the definitions we give below follow those in \cite{Tarjan}.
Let $(G,s)$, with $G=(V,E)$, be a flow graph. We assume $G$ is a simple directed graph, {\em i.e.} it does not have loops or 
multiple arcs. A {\em preorder numbering} of $G$ is an assignment $po()$ of
numbers to the vertices of $G$ according to a depth-first search traversal of $G$: $po(s)=1$ and 
$po(v)>po(w)$ iff $v$ is visited after $w$. Notice that we do not require consecutive $po()$ values.
The directed tree $T=(V,E(T))$ defined by the traversal is called a {\em depth-first spanning tree} (or {\em DFST}) of
$G$. The existence of a (directed) path in $T$ from $v$ to $w$ is denoted $v\pathh w$. 
Then the arc set $E(T)$ is partitioned into:

\begin{itemize}
 \item {\em tree arcs} $vw$, satisfying $vw\in E(T)$.
  \item {\em forward arcs} $vw$, satisfying $vw\not\in E(T)$ and $v\pathh w$.
 \item {\em cycle arcs} $vw$, satisfying $w\pathh v$.
 \item {\em cross arcs} $vw$, such that none of $v\pathh w$ and $w\pathh v$ holds, and $po(w)<po(v)$.
\end{itemize}

Several equivalent definitions of reducible flow graphs are proposed \cite{Hecht}. We chose to say that a flow 
graph $(G,s)$ is {\em reducible} if its acyclic subgraph induced by the union of  tree, forward and cross arcs 
is invariable when $T$ changes, {\em i.e.} has the same set of arcs for any DFST $T$ rooted at $s$. 
Reducible flow graphs have a lot of nice properties, that we need for our proof and that we recall below.
We use the notation $(G,s,T)$ for a flow graph $(G,s)$ provided with a DFST $T$. The associated preorder numbering
is denoted $po()$.

\subsubsection{Properties of reducible flow graphs}\label{sect:propredflow}

For this first property, say that a vertex $w$ {\it dominates} a vertex $v$ if $w\neq v$ and each path 
from $s$ to $v$ in $G$ contains $w$.

\begin{fait}[\cite{Hecht}]
Let $(G,s,T)$ be a flow graph. Then $(G,s)$ is reducible iff $w$ dominates $v$ for any cycle arc $vw$. 
\label{claim:domin}
\end{fait}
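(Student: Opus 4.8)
The plan is to characterize the cycle arcs of an arbitrary DFST in terms of dominators, a notion that does not depend on the tree. Write $D=\{vw\in E\mid w\hbox{ dominates }v\}$ and, for a DFST $T$, let $B_T$ be the set of cycle arcs of $(G,s,T)$. The easy starting observation is that dominators are always tree ancestors: if $w$ dominates $v$, then the tree path of $T$ from $s$ to $v$ is an $s$--$v$ path of $G$, hence contains $w$; since every vertex $\neq v$ on that tree path is a proper ancestor of $v$, we get $w\pathh v$ and $w\neq v$, so the arc $vw$ is a cycle arc. Thus $D\subseteq B_T$ for every DFST $T$, the acyclic subgraph of $(G,s,T)$ is always contained in $E\setminus D$, and reducibility is equivalent to saying this inclusion is an equality for every DFST, i.e. $B_T=D$ for all $T$.

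For the ($\Rightarrow$) direction I would argue directly. Suppose some cycle arc $vw$ of the given $T$ has $w$ not dominating $v$; then there is a simple $s$--$v$ path $\pi$ of $G$ avoiding $w$. Perform a depth-first search from $s$ in which the adjacency list of each vertex on $\pi$ lists its $\pi$-successor first; the resulting DFST $T'$ has $\pi$ as its tree path from $s$ to $v$, so $w$ is not an ancestor of $v$ in $T'$. By the four-way classification of arcs, $vw$ is then a tree, forward, or cross arc of $(G,s,T')$, hence belongs to the acyclic subgraph of $T'$ but not to that of $T$ (where it is a cycle arc); this contradicts reducibility. Hence, if $(G,s)$ is reducible, every cycle arc $vw$ of $T$ satisfies that $w$ dominates $v$.

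The ($\Leftarrow$) direction is the more delicate one. Assuming every cycle arc $vw$ of $T$ has $w$ dominating $v$ gives $B_T=D$ by the first observation. I would then prove two lemmas about \emph{irreducible loops}, meaning cycles $C$ possessing two distinct vertices each reachable from $s$ by a path meeting $C$ only at that vertex. First: if some DFST $T_1$ has a cycle arc $vw$ with $w$ not dominating $v$, then the cycle $C$ made of $vw$ together with the tree path $w\pathh v$ in $T_1$ is irreducible --- $w$ is one entry, via the $s$--$w$ tree path, which avoids all descendants of $w$ and hence all of $C\setminus\{w\}$, and the first vertex on $C$ of an $s$--$v$ path avoiding $w$ is a second, distinct entry. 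Second: if $G$ contains an irreducible loop $C$, then for every DFST $T$ one may pick a cycle arc $vw$ of $T$ inside $C$ (such an arc exists since the tree, forward, and cross arcs form a DAG, so every cycle carries a cycle arc), and then, using an entry $p\neq w$ of $C$ and the portion of $C$ from $p$ to $v$ not passing through $w$, build an $s$--$v$ path avoiding $w$, showing $w$ does not dominate $v$. To conclude: if $(G,s)$ were not reducible, two DFSTs would have different acyclic subgraphs, so some DFST $T_1$ would have $B_{T_1}\neq D$; since $D\subseteq B_{T_1}$, $T_1$ would contain a cycle arc $vw$ with $w$ not dominating $v$; the two lemmas then yield an irreducible loop in $G$ and hence a cycle arc of the given $T$ with the same defect, contradicting $B_T=D$. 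So $(G,s)$ is reducible.

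The step I expect to be the main obstacle is the depth-first reordering construction used in ($\Rightarrow$), together with the careful ancestor/descendant and ``two entries'' bookkeeping in the two lemmas: one must be precise about which vertices of an irreducible loop are ancestors or descendants in a given DFST and about extracting simple paths avoiding a prescribed vertex. These are exactly the arguments carried out in \cite{Hecht}; everything else is routine.
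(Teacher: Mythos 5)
The paper does not prove this claim at all: it is stated as an imported result with a citation to Hecht and Ullman, so there is no in-paper argument to compare yours against. Judged on its own, your reconstruction is correct and follows the classical line of proof. The forward direction via re-rooting the DFS along a simple $s$--$v$ path $\pi$ that avoids $w$ is sound (since $\pi$ is simple, a DFS that prioritizes $\pi$-successors does realize $\pi$ as the tree path to $v$, so $w$ ceases to be an ancestor of $v$ and $vw$ migrates into the acyclic part, breaking invariance). The backward direction via ``irreducible loops'' (cycles with two distinct entries) is also sound: your first lemma correctly exhibits $w$ and the first $C$-vertex of a $w$-avoiding $s$--$v$ path as two distinct entries, and your second lemma correctly uses the fact that tree, forward and cross arcs form a DAG to find a cycle arc $vw$ on $C$ in any DFST, then routes around $C$ from an entry $p\neq w$ to $v$ without meeting $w$ (possible because $w$ is the immediate successor of $v$ on $C$). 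Two cosmetic remarks: the early assertion that ``reducibility is equivalent to $B_T=D$ for all $T$'' is really the theorem itself and should be presented as a roadmap rather than an observation, since at that point only $D\subseteq B_T$ has been established; and the degenerate case $w=s$ deserves a sentence (it cannot occur among non-dominating cycle arcs, since $s$ trivially dominates every other vertex, which is what guarantees $s\notin C$ in your first lemma). Neither affects correctness.
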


The second property confirms an intuition issued from the definition:

\begin{fait}[\cite{Hecht}] 
The cycle arcs of a reducible flow graph $(G,s)$ are invariable for any DFST $T$ rooted at $s$.
\label{claim:cyclearcs}
\end{fait}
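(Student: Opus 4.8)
The plan is to prove that the cycle arcs of a reducible flow graph $(G,s)$ are the same for every DFST $T$ rooted at $s$, using the domination characterization from Claim~\ref{claim:domin}. The key observation is that domination is a purely graph-theoretic property — it depends only on $(G,s)$ and not on any particular DFST — so if I can show that an arc $vw$ is a cycle arc (with respect to some fixed DFST) if and only if $w$ dominates $v$, then the set of cycle arcs is characterized intrinsically and is therefore independent of $T$.

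First I would fix a DFST $T$ rooted at $s$ and recall that, by definition, $vw$ is a cycle arc precisely when $w \pathh v$ in $T$, i.e. $w$ is an ancestor of $v$ in the spanning tree. The forward direction is immediate from Claim~\ref{claim:domin}: if $vw$ is a cycle arc, then reducibility gives that $w$ dominates $v$. For the converse, suppose $vw$ is an arc of $G$ and $w$ dominates $v$; I want to conclude $vw$ is a cycle arc with respect to $T$. I would argue by eliminating the other three arc types. Since $vw$ is an arc, it is one of tree, forward, cycle, or cross. If $vw$ were a tree arc or a forward arc, then $v \pathh w$ in $T$, so $po(v) < po(w)$ and $v$ lies on the tree path from $s$ to $w$; but then there is a path from $s$ to $v$ (the tree path) that does not pass through $w$ — more carefully, $w$ cannot dominate $v$ because $v$ is visited strictly before $w$ in the DFS, hence reachable from $s$ without using $w$. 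If $vw$ were a cross arc, then $po(w) < po(v)$ and neither is an ancestor of the other; again $v$ is reached from $s$ by its tree path, which cannot contain $w$ since $w$ is not an ancestor of $v$, contradicting that $w$ dominates $v$. The only remaining possibility is that $vw$ is a cycle arc. This establishes the equivalence ``$vw$ is a cycle arc of $(G,s,T)$ $\iff$ $w$ dominates $v$ in $(G,s)$,'' and since the right-hand side does not mention $T$, the set of cycle arcs is the same for every DFST rooted at $s$.

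The main obstacle, and the step requiring the most care, is the argument that a tree, forward, or cross arc $vw$ cannot satisfy ``$w$ dominates $v$.'' The crux is the standard DFS fact that every vertex $v$ is reachable from $s$ along the unique tree path from $s$ to $v$, and this path consists of proper ancestors of $v$; so if $w$ is \emph{not} a proper ancestor of $v$ in $T$, then $w$ is not on that tree path, and $w$ does not dominate $v$. For tree and forward arcs we instead have that $w$ is a proper \emph{descendant} of $v$ (for forward arcs) or a child (for tree arcs), so $w$ is certainly not an ancestor of $v$; for cross arcs $w$ is neither ancestor nor descendant. In all three cases $w$ is not a proper ancestor of $v$, so the tree-path-from-$s$ argument applies and $w$ fails to dominate $v$. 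I would make sure to invoke the hypothesis that $G$ is simple (so $vw$ genuinely falls into exactly one of the four classes and, in particular, $v \neq w$, so domination is not vacuous) and to note that the partition of arcs into the four types is with respect to the fixed $T$, while domination and reducibility are not — which is exactly what makes the conclusion nontrivial and useful.
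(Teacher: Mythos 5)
Your proof is correct, and it is worth noting that the paper itself offers no proof at all here: the claim is imported from Hecht with a citation. Your route — establishing the intrinsic characterization ``$vw$ is a cycle arc with respect to $T$ if and only if $w$ dominates $v$'' and then observing that domination does not mention $T$ — is the classical argument, and both directions are handled properly: the forward implication is exactly Claim~\ref{claim:domin} applied to an arbitrary DFST of the reducible graph, and the converse is a $T$-by-$T$ DFS fact (the tree path from $s$ to $v$ consists of $v$ and its proper ancestors, so any dominator of $v$ must be a proper ancestor, forcing $w\pathh v$ and hence the cycle-arc classification). However, you should be aware that under the definition of reducibility that \emph{this} paper adopts — invariance of the subgraph formed by the union of tree, forward and cross arcs over all DFSTs rooted at $s$ — the claim is immediate by complementation: the four arc types partition $E$, so if the tree-forward-cross part is the same for every $T$, its complement in $E$, namely the set of cycle arcs, is the same as well. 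The citation to Hecht is needed only because Hecht defines reducibility differently (via collapsibility, as in Claim~\ref{claim:reds}). So your argument is not wrong, merely longer than necessary in this paper's framework; what it buys in exchange is the stronger, $T$-free characterization of cycle arcs as exactly the arcs whose head dominates their tail, which is independently useful (the paper in effect relies on it later, e.g.\ in Case~1 of the proof of Claim~\ref{claim:acyclic2}).
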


Let $w$ be a vertex of the flow graph $(G,s,T)$. Then define 
$C(w)=\{v\,|\, vw\,\, \hbox{is a cycle arc}\}$ and $P(w)=\{v\,|\, \exists z\in C(w)\,\, \hbox{such that there is
a path from}\, v\, \hbox{to}\, z\, \hbox{which avoids}\, w\}$. If $C(w)$ is non-empty and $w\neq s$, then
$w$ is called a {\em head}. This definition is correct, because of Claim~\ref{claim:cyclearcs}. Vertex $s$
is not called a head even if $C(w)$ is not empty since it has a special role, as root of $T$.
Furthermore,  say that a vertex $w$ satisfies the {\em $T$-path property} if for all $v\in P(w)$, $w\pathh v$.

\begin{fait}[\cite{Tarjan}]
The flow graph $(G,s,T)$ is reducible iff each head $w$ satisfies the $T$-path property.
\label{claim:Tarjan}
\end{fait}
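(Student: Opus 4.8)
The plan is to route the whole argument through the domination criterion already available as Claim~\ref{claim:domin}: for the fixed DFST $T$, $(G,s)$ is reducible precisely when $w$ dominates $v$ for every cycle arc $vw$. So I would reduce the statement to the equivalence ``every head satisfies the $T$-path property if and only if, for every cycle arc $vw$, $w$ dominates $v$'', all the notions (cycle arc, head, $P(\cdot)$, $\pathh$) being taken relative to this same $T$, which is consistent with the way Claim~\ref{claim:domin} is phrased. The one auxiliary fact I would isolate at the outset is that domination forces tree-ancestry: if $w$ dominates $u$, then the path from $s$ to $u$ inside $T$ is in particular an $s$-to-$u$ path of $G$, hence contains $w$, so $w\pathh u$.

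For the implication from the domination condition to the $T$-path property, I would fix a head $w$ and a vertex $u\in P(w)$, witnessed by some $z\in C(w)$ and a path $\pi$ from $u$ to $z$ avoiding $w$. Since $zw$ is a cycle arc, the hypothesis gives that $w$ dominates $z$; in particular no path from $s$ to $z$ avoids $w$, which already excludes $u=s$ (a path from $s$ to $u$ avoiding $w$ prolonged by $\pi$ would be such a path), while $u=w$ is impossible because a path issued from $w$ cannot avoid $w$. Now if $w$ did not dominate $u$ there would be a path from $s$ to $u$ avoiding $w$; concatenating it with $\pi$ and extracting a simple subpath yields a path from $s$ to $z$ avoiding $w$, a contradiction. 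Hence $w$ dominates $u$, and the auxiliary fact gives $w\pathh u$; as $u$ was arbitrary in $P(w)$, $w$ satisfies the $T$-path property.

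For the converse I would take an arbitrary cycle arc $vw$ and prove $w$ dominates $v$. If $w=s$ this is immediate, since every path from $s$ to $v$ starts at $s$ and $v\neq s$. Otherwise $v\in C(w)\neq\emptyset$ and $w\neq s$, so $w$ is a head and therefore satisfies the $T$-path property; and if $w$ failed to dominate $v$, a path from $s$ to $v$ avoiding $w$ would, because $v\in C(w)$, witness that $s\in P(w)$, and then the $T$-path property would force $w\pathh s$, i.e. $s$ a descendant of $w$ in $T$ --- impossible since $s$ is the root of $T$ and $w\neq s$. So $w$ dominates $v$, and Claim~\ref{claim:domin} concludes. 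I do not expect a serious obstacle in this proof; the points that need attention are invoking Claim~\ref{claim:domin} with respect to the same fixed $T$ throughout, the boundary case $w=s$, and the small but decisive observation that ``$s\in P(w)$'' is exactly the obstruction created when domination of a cycle-arc target fails.
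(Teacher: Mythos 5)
Your argument is correct. Note, however, that the paper itself offers no proof of this statement: it is quoted verbatim from Tarjan's paper (hence the citation attached to the claim), just as Claim~\ref{claim:domin} is quoted from Hecht--Ullman, so there is no in-paper proof to compare against. What you have done is derive the Tarjan characterization from the Hecht--Ullman domination criterion, and every step checks out: the auxiliary observation that domination forces tree-ancestry (because the $T$-path from $s$ to $u$ is itself an $s$--$u$ path of $G$), the walk-to-path extraction when concatenating the $s$--$u$ path with the $u$--$z$ path, the exclusion of the degenerate cases $u=s$ and $u=w$, and the decisive point in the converse that failure of domination for a cycle arc $vw$ puts $s$ into $P(w)$ and hence would force $w\pathh s$, which is impossible for $w\neq s$ since all tree arcs point away from the root. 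The boundary case $w=s$ is also handled correctly (no loops, so $v\neq s$, and every $s$--$v$ path trivially contains $s$). This is a clean, self-contained reduction of one cited fact to the other, consistent with the fixed DFST $T$ throughout.
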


Notice that $s$ trivially satisfies the $T$-path property, by the definition of $T$. 
Let $(G,s)$ be a flow graph and $v, w$ with $v\neq s$ be two of its vertices. {\em Collapsing
$v$ into $w$} in $G$ is the operation that ``shrinks'' $v$ and $w$ into a single vertex called $w$,
removing loops and redundant arcs. More formally, arcs are added from $w$ to each vertex in $N^+(v)\setminus N^{+}[w]$ and
from each vertex in $N^-(v)\setminus N^-[w]$ to $w$. Then $v$ and its incident arcs are removed. {\em Collapsing a set 
$S\subseteq V\setminus \{s\}$ into $w$} in $G$ is the operation that successively collapses each $v\in S$
into $w$. Given two vertices $v, w$, a {\em reduction} collapses $v$ into $w$ if
$N^-(v)=\{w\}$ and $v\neq s$.

\begin{fait}[\cite{Hecht}]
 A flow graph $(G,s)$ is reducible iff it can be transformed into the graph $(\{s\},\emptyset)$ by a series of reductions.
\label{claim:reds}
 \end{fait}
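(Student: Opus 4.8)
The plan is to prove Claim~\ref{claim:reds} by establishing the two implications separately, using the collapsing operations and the characterizations of reducibility recalled in Claims~\ref{claim:domin}, \ref{claim:cyclearcs} and \ref{claim:Tarjan}.

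First I would handle the easy direction: if $(G,s)$ can be reduced to $(\{s\},\emptyset)$ by a series of reductions, then $(G,s)$ is reducible. The key observation is that a single reduction (collapsing $v$ into $w$ when $N^-(v)=\{w\}$) preserves reducibility \emph{in the reverse direction}: if the graph obtained after one reduction is reducible, then so was the graph before. This follows because a vertex $v$ with a unique predecessor $w$ is trivially dominated by $w$ along every path from $s$ (every path to $v$ must use its only in-arc $wv$), so any cycle arc into $v$ satisfies the domination condition of Claim~\ref{claim:domin}; and the reduction does not disturb the domination relations among the surviving vertices. Since $(\{s\},\emptyset)$ is trivially reducible, an induction on the number of reductions gives that $(G,s)$ is reducible.

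The harder direction is the converse: every reducible flow graph admits a sequence of reductions to $(\{s\},\emptyset)$. Here I would argue by induction on $|V|$. If $|V|>1$, it suffices to exhibit one vertex $v\neq s$ with $N^-(v)=\{w\}$ for some $w$; then collapse $v$ into $w$, check that the resulting flow graph is still reducible (reductions preserve reducibility in the forward direction, which follows again from Claim~\ref{claim:domin} since collapsing a vertex with a unique in-neighbour cannot create a cycle arc violating domination), and apply the induction hypothesis. So the crux is: \emph{a reducible flow graph with more than one vertex always has a non-root vertex whose in-neighbourhood is a singleton.} To find it, take a DFST $T$ and consider a deepest leaf $v$ of $T$, or more carefully a vertex $v\neq s$ all of whose in-arcs in $G$ come from a single source; I would use the $T$-path property (Claim~\ref{claim:Tarjan}) and the structure of cross/forward/cycle arcs. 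Concretely, if $v$ is a vertex of maximum preorder number $po(v)$, then $v$ has no forward, cross, or tree arcs leaving toward higher-numbered vertices, so every arc into $v$ is either the tree arc from its parent or a cycle arc $uv$; but a cycle arc $uv$ forces $v$ to dominate $u$, and combined with $v$ being deepest this can be pushed to a contradiction unless $v$'s parent is its unique predecessor. Some case analysis on whether $v$ is a head is needed.

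The main obstacle I expect is precisely this existence step — locating a reducible vertex $v$ with $N^-(v)$ a singleton — since one must carefully exploit the interplay between the DFST numbering, the classification of arcs, and the domination/$T$-path characterizations, and naively chosen "leaf" vertices may still receive cross or cycle arcs. The two preservation lemmas (reductions preserve reducibility in both directions) are routine once the domination criterion of Claim~\ref{claim:domin} is invoked, so most of the real work is in that one combinatorial claim about the existence of a collapsible vertex.
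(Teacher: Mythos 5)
The paper does not actually prove this claim: like the other facts in Section~\ref{sect:propredflow}, it is imported verbatim from the literature (\cite{Hecht}), so there is no internal proof to compare yours against. Judged on its own terms, your sketch has genuine gaps rather than just unpolished details.

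The decisive gap is in the hard direction. Everything rests on your existence lemma (``a reducible flow graph with more than one vertex has a non-root vertex whose in-neighbourhood is a singleton''), and the witness you propose --- the vertex $v$ of maximum preorder number --- does not work. You assert that every arc into such a $v$ is either the tree arc from its parent or a cycle arc; this overlooks forward arcs (and, for non-extremal choices of $v$, cross arcs). For instance, $s\to a$, $a\to b$, $b\to c$, $a\to c$ is reducible, $c$ has maximum $po()$, and $N^-(c)=\{a,b\}$. So the ``case analysis'' you defer is not a technicality: the collapsible vertex must be located differently (e.g.\ the second vertex in a topological order of the invariant DAG when it has no incoming cycle arcs, with a separate inductive treatment of the region dominated by a head when it does), and that is where essentially all of the work in Hecht and Ullman's argument lives. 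Second, both preservation statements --- that a single reduction preserves reducibility forwards and backwards --- are declared routine consequences of Claim~\ref{claim:domin}, but they are not: collapsing changes the set of DFSTs, the classification of arcs into tree/forward/cross/cycle, and the dominator relation, and one must verify (as Claim~\ref{claim:resteflow} does, at some length, for a different collapsing operation) that cycle arcs of the new graph correspond to cycle arcs of the old one with the domination condition intact. Finally, a smaller slip: for a cycle arc $uv$, Claim~\ref{claim:domin} requires that $v$ (the head) dominate $u$ (the tail); your observation that a vertex with unique predecessor $w$ is dominated by $w$ concerns arcs \emph{into} that vertex and does not by itself discharge the condition for the cycle arcs whose tail is the collapsed vertex and which the reduction rewires.
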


In order to show that reducible flow graphs are \xclssans, we need to explain the algorithm in \cite{Tarjan} for 
recognizing reducible flow graphs and constructing the sequence of reductions (according to Claim~\ref{claim:reds}).  

\subsubsection{Tarjan's reducibility algorithm}\label{sect:Tarjansalg}

The basic idea of the algorithm is issued from  Claim~\ref{claim:Tarjan}: $(G,s)$ is reducible iff each head $w$ has the 
$T$-path property.  The algorithm proposed in \cite{Tarjan} uses this idea in an optimized way, by
successively collapsing sets of vertices, so that subsequent tests are easier to perform. This is possible
since the collapsed graph inherits useful properties and constructions of the initial graph:

\begin{fait}[\cite{Tarjan}]
 Let $(G,s,T)$ be a flow graph and $w_1$ be its head with largest $po()$ value. Let $G'$ be the graph resulting from $G$
 by collapsing $P(w_1)$ into $w_1$. Then:
 \begin{itemize}
 \item[i)] $(G',s)$ is a flow graph.
  \item[ii)] For each arc $v'u'$ in $G'$, either $v'u'$ is an arc of $G$, or $v'=w_1$ and there is a vertex $u$ in $G$ such that
  $uu'$ is an arc of $G$ and  $w_1\pathh u$ in $T$ ($v'u'$ in $G'$ then {\em corresponds} to the arcs $v'u'$ of $G$ in the first case, 
  resp. to the arc $uu'$ of $G$ in the second case). 
  \item[iii)] The subgraph $T'$ of $G'$ whose arcs correspond to arcs in $T$  is a DFST of $G'$, with the preorder numbering given by the
  $po()$ values (restricted to the vertices in $G'$).
  \item[iv)] Cycle, forward, cross arcs of $G'$ respectively correspond to cycle, forward, cross arcs of $G$.
  \item[v)] If $C'(w)$ and $P'(w)$ are defined in $(G',s,T')$ similarly to $(G,s,T)$, then the heads of $G'$ are the
  same as the heads of $G$ except $w_1$. Moreover, for each head $w$ of $G'$, $w$ has the $T'$-path property in $G'$ iff $w$
  has the $T$-path property in $G$.
 \end{itemize}
 \label{claim:resteflow}
\end{fait}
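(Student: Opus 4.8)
The whole claim hinges on the right picture of $P(w_1)$ inside the tree $T$, so I would establish that first. I work under the hypothesis that $w_1$ satisfies the $T$-path property; this holds whenever the collapse is invoked in Tarjan's algorithm (the head of largest $po$ is processed first, and the algorithm checks this property before collapsing), and it is also implied by reducibility via Claim~\ref{claim:domin}. Then $w_1\pathh v$ in $T$ for every $v\in P(w_1)$; in particular $s\notin P(w_1)$ (otherwise $w_1\pathh s$, impossible as $s$ is the root), so collapsing $P(w_1)$ into $w_1$ is legitimate, and $po(v)>po(w_1)$ for every $v\in P(w_1)$. Moreover $P(w_1)$ is closed under taking tree-ancestors down to $w_1$: if $v\in P(w_1)$ and $u$ lies strictly between $w_1$ and $v$ on the tree path, then the tree path from $u$ to $v$ avoids $w_1$, and prefixing it to a $w_1$-avoiding path from $v$ to some $z\in C(w_1)$ shows $u\in P(w_1)$. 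Hence $P(w_1)\cup\{w_1\}$ induces a subtree $T_{w_1}$ of $T$ rooted at $w_1$, in which $w_1$ carries the least $po$-value. Finally, no $v\in P(w_1)$ is a head, since $po(v)>po(w_1)$ while $w_1$ is the head of largest $po$.

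Given this, (i) and (ii) are short. For (i): every path of $G$ from $s$ projects --- under the map sending each vertex of $P(w_1)$ to $w_1$ and every other vertex to itself --- to a walk of $G'$ from $s$, since each arc of $G$ becomes an arc or a loop of $G'$ (loops being skipped); as $s$ survives, $(G',s)$ is a flow graph. For (ii): an arc of $G'$ that is already an arc of $G$ needs no argument; the arcs of $G'$ not present in $G$ are precisely the arcs $w_1u'$ created while collapsing an arc $vu'$ of $G$ with $v\in P(w_1)$, and since $w_1\pathh v$ in $T$ one may take $u:=v$. The same closure property shows that every in-neighbour in $G$ of a vertex of $P(w_1)$ already lies in $P(w_1)\cup\{w_1\}$, so collapsing creates no new arc into $w_1$, which is why (ii) only needs to discuss out-arcs of $w_1$.

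Parts (iii) and (iv) are the heart of the matter: forming $G'$ is nothing but contracting the subtree $T_{w_1}$ of $T$ to its root. One obtains $T'$ from $T$ by deleting the tree arcs lying inside $T_{w_1}$ and rerouting to $w_1$ the tree arcs leaving $T_{w_1}$; since $w_1$ has the least $po$ in $T_{w_1}$, keeping the value $po(w_1)$ for the contracted vertex and every other $po$-value unchanged gives a valid preorder numbering under which $T'$ is a DFST of $G'$ --- one verifies the characterizing condition, that $G'$ has no arc $x'y'$ with $po(y')>po(x')$ and $y'$ not a $T'$-descendant of $x'$, by pulling each arc of $G'$ back to a corresponding arc of $G$ via (ii) and invoking the same condition for $T$, using that such an arc had its tail moved \emph{up} to the minimum-$po$ vertex of the contracted subtree. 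With $T'$ and this numbering fixed, the arc-type classification transfers because $T'$-ancestry is exactly $T$-ancestry with $T_{w_1}$ contracted and the $po$-order is unchanged; in particular a cycle arc $zw_1$ with $z\in C(w_1)\subseteq P(w_1)$ becomes a loop and disappears, while cycle / forward / cross arcs of $G'$ correspond to cycle / forward / cross arcs of $G$. The delicate point, which I expect to be the main obstacle, is that $T_{w_1}$ need not be $po$-contiguous (other subtrees of $w_1$ may interleave with it in the DFS), so one must check that contracting such a scattered subtree still preserves the status of every non-tree arc; the minimality of $po(w_1)$ within $T_{w_1}$ is precisely what makes this go through.

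Part (v) then follows. Cycle arcs correspond by (iv), and every old cycle arc into $w_1$ came from $C(w_1)\subseteq P(w_1)$ and turned into a loop, so $C'(w_1)=\emptyset$ whereas $C'(w)=C(w)$ for every other $w$; together with the fact that no vertex of $P(w_1)$ is a head, this says the heads of $G'$ are exactly the heads of $G$ other than $w_1$. For such a head $w$ (so $w\neq w_1$, and $w\notin P(w_1)$ since $w$ is a head), translating $w$-avoiding paths through the collapse (lifting a path of $G'$ passing through $w_1$ by inserting a tree path inside $T_{w_1}$, which never reintroduces $w$; and projecting a $w$-avoiding path of $G$ to a $w$-avoiding walk of $G'$) identifies $P'(w)$ with the image of $P(w)$; since moreover $w$ is a $T$-ancestor of some vertex of $P(w_1)$ iff it is a $T$-ancestor of $w_1$, the requirement that every $v'\in P'(w)$ be a $T'$-descendant of $w$ is equivalent to the requirement that every $v\in P(w)$ be a $T$-descendant of $w$, i.e. $w$ has the $T'$-path property in $G'$ iff it has the $T$-path property in $G$.
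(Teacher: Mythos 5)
The paper offers no proof of this claim at all --- it is imported as a black box from \cite{Tarjan} --- so there is no in-paper argument to measure yours against; I can only assess your reconstruction on its own terms. Your structural lemma (under the $T$-path property for $w_1$, which you rightly add as a working hypothesis, the set $P(w_1)\cup\{w_1\}$ is an ancestor-closed subtree of $T$ rooted at $w_1$, contains no head other than $w_1$, and has $w_1$ as its minimum-$po$ element) is correct, and it does carry (i), (ii), (iii), (v) and the cycle-arc half of (iv). In particular, your observation that every $G$-in-neighbour of a vertex of $P(w_1)$ already lies in $P(w_1)\cup\{w_1\}$, so that the collapse creates no new arc \emph{into} $w_1$, is exactly the right explanation of why (ii) only needs to discuss out-arcs of $w_1$.

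The step you yourself flag as delicate in (iv), however, does not go through as you assert. A cross arc $uu'$ of $G$ with $u\in P(w_1)$ can become a \emph{forward} arc $w_1u'$ of $G'$, namely when $u'$ is a $T$-descendant of $w_1$ lying outside $P(w_1)$. Concretely, take $V=\{s,w_1,a,u',u\}$ with arcs $sw_1$, $w_1a$, $au'$, $w_1u$, $uu'$, $uw_1$, and the DFS that explores $a$ before $u$: then $C(w_1)=P(w_1)=\{u\}$, the flow graph is reducible, $uu'$ is a cross arc of $G$, yet after collapsing $u$ into $w_1$ the surviving arc $w_1u'$ satisfies $w_1\pathh u'$ (via $a$) and is a forward arc of $G'$. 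The minimality of $po(w_1)$ in the contracted subtree guarantees only that no \emph{illegal} arc is created --- which is all (iii) needs, since when $u'$ is not a descendant of $w_1$ the fact that descendants are visited consecutively forces $po(u')<po(w_1)$, so the moved arc is a legitimate cross arc --- but it does not preserve the forward/cross distinction. What your argument actually proves, and what the paper actually uses (e.g.\ in Claim~\ref{claim:p*max}), is the coarser statement that cycle arcs of $G'$ correspond to cycle arcs of $G$ and non-cycle arcs to non-cycle arcs; for the literal forward/cross correspondence your proof (and arguably the statement itself) fails on the example above. A minor secondary imprecision: in (v) you write $C'(w)=C(w)$ for $w\neq w_1$, whereas an element of $C(w)\cap P(w_1)$ is replaced by $w_1$ in $C'(w)$; only non-emptiness is preserved, which is all that headship requires.
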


In the Reducibility algorithm (Algorithm \ref{algo:reducibility}),  heads $w$ are ordered in decreasing order of their
$po()$ value (step 1). For each $w$ in this order, the algorithm tests the $T$-path property  in the {\em current graph} 
using the sets  $P^*(w)$ of vertices (step 5 to 10), collapses $P^*(w)$ into $w$ and continues with the next head.  
In order to recall the head a vertex $v$ is collapsed into, parameter $hn(v)$ is defined to be $po(w)$ for all  
$v\in P^*(w)$ (step 8). In order to homogenize the presentation, we define $hn(v)=1$ for all $v\neq s$ that 
belong to no $P^*(w)$ and we denote $P^*(s)=\{v\in V\, |\, hn(v)=1\}$. Then $P^*(s)$ is collapsed into $s$
(step 13), and thus when the graph is reducible, in the end the resulting graph is $(\{s\},\emptyset)$.
Note that the reductions are not detailed in this algorithm, but collapsing $P^*(w)$ into $w$ at each step  
means that a sequence of reductions is able to successively collapse  each $v$ from $P^*(w)$ into $w$, as explained later.  

\begin{algorithm}[t]
\caption{Reducibility algorithm \cite{Tarjan}}
\begin{algorithmic}[1]
\REQUIRE A flow graph $(G,s)$.
\ENSURE  Answer ``No'' if $(G,s)$ is not reducible; otherwise, answer ``Yes'' and for each vertex $v\neq s$ the value $hn(v)$.

\STATE Let $w_1, w_2, \ldots, w_k$ be the heads of $G$, ordered by decreasing value of $po()$.
\STATE {\bf for} $v\in V\setminus\{s\}$ {\bf do} $hn(v)=1$ {\bf endfor}
\STATE $G^{w_0}\leftarrow G$ \hfill{\sl //$w_0$ does not exist, this is a simple notation}
\FOR{$i=1, 2\ldots, n$}
\STATE Let $P^*(w_i)$ be defined as $P(w_i)$ in the current graph $G^{w_{i-1}}$
\FOR{$v\in P^*(w_i)$}
\STATE {\bf if} not $w_i\pathh v$ {\bf then} return ``No'' {\bf endif}
\STATE $hn(v)\leftarrow po(w_i)$
\ENDFOR
\STATE Collapse $P^*(w_i)$ into $w_i$, and call $G^{w_i}$ the resulting graph
\ENDFOR
\STATE Let $P^*(s)=\{v\in V\setminus\{s\}\, |\, hn(v)=1\}$
\STATE Collapse $P^*(s)$ into $s$.
\STATE Return ``Yes''
\end{algorithmic}
\label{algo:reducibility}
\end{algorithm}

\begin{rmk}
 Sets $C^*(w)$ and $P^*(w)$ are subsets of $V$, although they are defined in a modified graph. Moreover, sets $P^*(w)$
 are disjoint.
 
 \label{rmk:p*}
\end{rmk}

\bex
Consider the flow graph in Fig. \ref{fig:filrouge}. The four heads $w_1, w_2, w_3$ and $w_4$ are indicated on the
figure. Algorithm Reducibility computes $P^*(w_1)=\{e,f\}$, $P^*(w_2)=\{c,d\}$, $P^*(w_3)=\{w_1,w_2\}$, $P^*(w_4)=\{a,b,w_3\}$
and $P^*(s)=\{w_4\}$. The values $hn()$ are computed on this basis, and are indicated on the figure (second integer in the triple associated with
each vertex).
\eex

The following relationship may be established between $P^*(w)$ and $P(w)$:

\begin{fait}
Let $(G,s,T)$ be a reducible graph. Then $v\in P^*(w_i)$ iff 
$v\in P(w_i)$ and $po(w_i)=\max\{po(w_j)\, |\, v\in P(w_j)\}$. 
Moreover, the vertices $w_j$ such that $v\in P(w_j)$ appear  
on the path from $s$ to $v$, in decreasing order of their value $j$.
\label{claim:p*max}
\end{fait}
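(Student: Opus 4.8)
The statement relates the algorithm-computed sets $P^*(w_i)$ to the ``static'' sets $P(w_i)$ defined from the original DFST $T$, so the natural tool is induction on the index $i$ in the main loop of Algorithm~\ref{algo:reducibility}, tracking how the current graph $G^{w_{i-1}}$ differs from $G$. The plan is to prove, by induction on $i$, the combined statement: (a) for $j\ge i$ the set $C(w_j)$ in $G$ and the set $C^*(w_j)$ (i.e. $C(w_j)$ computed in $G^{w_{i-1}}$) coincide as subsets of $V$ after we account for collapsing, and more importantly (b) $v\in P^*(w_i)$ iff $v\in P(w_i)$ and $i$ is the largest index with $v\in P(w_j)$. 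The base case $i=1$ is immediate: $w_1$ has the largest $po()$ value among heads, $G^{w_0}=G$, so $P^*(w_1)=P(w_1)$; and by Claim~\ref{claim:domin} (reducibility via domination) together with the $T$-path property from Claim~\ref{claim:Tarjan}, any $v\in P(w_1)$ satisfies $w_1\pathh v$, which one uses to argue $w_1$ is indeed the maximal index for such $v$ (a vertex reachable below $w_1$ in $T$ that lies on a $w$-avoiding path to some $z\in C(w)$ forces, via the nesting structure of heads on root-to-$v$ paths, that no head with larger $po()$ can have $v$ in its $P$-set — but $w_1$ is already the largest, so this direction is vacuous at the base).

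For the inductive step I would use Claim~\ref{claim:resteflow} heavily: part (v) tells us that collapsing $P^*(w_i)$ into $w_i$ leaves the heads $w_{i+1},\dots,w_k$ unchanged and preserves the $T'$-path property, and parts (ii)--(iv) tell us exactly how arcs, the DFST, and the $po()$ values transfer to $G^{w_i}$. The key point is that an arc $v'u'$ of $G^{w_i}$ either was already an arc of $G$, or is a ``lifted'' arc with source $w_i$ replacing an original source $u$ with $w_i\pathh u$ in $T$. So a path in $G^{w_i}$ from $v$ to some $z\in C^*(w_{i+1})$ that avoids $w_{i+1}$ corresponds, by expanding each lifted arc back through the collapsed subtree rooted at $w_i$, to a path in $G$ from $v$ to $z$ avoiding $w_{i+1}$ (using that $w_{i+1}$ cannot lie inside the collapsed subtree of $w_i$ because $po(w_{i+1})<po(w_i)$ and the collapsed vertices all satisfy $w_i\pathh \cdot$, hence have larger $po()$ than $w_i$, hence larger than $w_{i+1}$ — wait, that's backwards; the collapsed vertices have $po$ values between... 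I would pin this down carefully using that $P^*(w_i)\subseteq\{v:w_i\pathh v\}$). This gives $P^*(w_{i+1})\subseteq P(w_{i+1})$. Conversely, if $v\in P(w_{i+1})$ and $i+1$ is the largest such index, then $v$ was not removed by any earlier collapse (none of $P^*(w_1),\dots,P^*(w_i)$ contained it, by the induction hypothesis characterizing those sets), so $v$ is still present in $G^{w_i}$, and the witnessing $w_{i+1}$-avoiding path to $z\in C(w_{i+1})$ survives (possibly in lifted form) in $G^{w_i}$, giving $v\in P^*(w_{i+1})$.

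For the ``moreover'' clause — that the heads $w_j$ with $v\in P(w_j)$ occur on the $s$-to-$v$ path in $T$ in decreasing order of $j$ — I would argue directly from the structure of reducible flow graphs. If $v\in P(w_j)$ then there is $z\in C(w_j)$ and a $w_j$-avoiding path from $v$ to $z$; since $zw_j$ is a cycle arc, $w_j\pathh z$, and by Claim~\ref{claim:domin} $w_j$ dominates $z$, so every $s$-to-$z$ path — in particular one routed through $v$ if $v$ can reach $z$ avoiding $w_j$... one needs instead: $w_j$ has the $T$-path property (Claim~\ref{claim:Tarjan}), so $w_j\pathh v$, i.e. $w_j$ is an ancestor of $v$ in $T$, i.e. $w_j$ lies on the $s$-to-$v$ path. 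Since the $po()$ values along a root-to-$v$ path in $T$ are strictly increasing, and $j\mapsto po(w_j)$ is decreasing by the ordering in step~1, the ancestors among the $w_j$'s that contain $v$ in their $P$-set appear in increasing $po()$, hence decreasing $j$, order. Then ``$v\in P^*(w_i)$ iff $v\in P(w_i)$ and $po(w_i)=\max\{po(w_j):v\in P(w_j)\}$'' is exactly the statement that $v$ is assigned to the \emph{deepest} such ancestor head, which matches the greedy top-down processing.

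\textbf{Main obstacle.} I expect the delicate part to be the arc-correspondence bookkeeping in the inductive step: making fully rigorous the claim that $w_j$-avoiding paths in the collapsed graph $G^{w_i}$ lift to $w_j$-avoiding paths in $G$ and vice versa, when lifted arcs out of $w_i$ secretly pass through the entire collapsed subtree. One must check that reinserting that subtree cannot introduce an unavoidable visit to $w_j$ (which is fine, since $j>i$ and the subtree sits below $w_i$ in $T$, while $w_j$ is not below $w_i$) and, in the other direction, that a genuine $w_j$-avoiding path in $G$ through a vertex later collapsed into $w_i$ still yields an avoiding path in $G^{w_i}$ — this is exactly what Claim~\ref{claim:resteflow}(ii),(v) are designed to supply, so the proof is really an exercise in citing those parts at the right granularity rather than a new idea.
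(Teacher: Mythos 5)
Your overall strategy matches the paper's: both arguments hinge on Claim~\ref{claim:resteflow} to transport $w_i$-avoiding paths back and forth between $G$ and the collapsed graphs $G^{w_{i-1}}$, and both derive the ``moreover'' clause from the $T$-path property (Claim~\ref{claim:Tarjan}) plus the fact that $po()$ increases along a root-to-$v$ tree path while $j\mapsto po(w_j)$ decreases. The paper organizes this as three steps (first $P^*(w_i)\subseteq P(w_i)$ by unwinding the collapses, then $\Leftarrow$, then $\Rightarrow$ by contradiction) rather than one explicit induction on $i$, but that difference is cosmetic. Your worry about the lifted arcs resolves exactly as you suspected: every vertex reinserted when expanding an arc out of $w_i$ is a $T$-descendant of $w_i$, hence has $po()$ value greater than $po(w_i)>po(w_j)$ for $j>i$, so it cannot equal $w_j$.

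Two things need repair. First, you repeatedly phrase the target as ``$i$ is the \emph{largest} index with $v\in P(w_j)$''; since the heads are indexed in decreasing order of $po()$, the condition $po(w_i)=\max\{po(w_j)\mid v\in P(w_j)\}$ means $i$ is the \emph{smallest} such index --- the head processed first, equivalently the deepest such ancestor on the $s$-to-$v$ path (your ``moreover'' paragraph gets this right, so it is a slip, but read literally your inductive statement is the negation of the truth and is inconsistent with your own base case $P^*(w_1)=P(w_1)$: a vertex lying in both $P(w_1)$ and $P(w_2)$ is collapsed at step $1$ and never reaches $P^*(w_2)$). Second, your inductive step establishes $P^*(w_{i+1})\subseteq P(w_{i+1})$ and the reverse inclusion under the maximality hypothesis, but never argues why $v\in P^*(w_{i+1})$ \emph{forces} $po(w_{i+1})$ to be maximal. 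The paper closes this gap in one line: if some $w_k$ with $v\in P(w_k)$ had strictly larger $po()$, the already-proved $\Leftarrow$ direction would place $v$ in $P^*(w_k)$, contradicting the disjointness of the sets $P^*(w_j)$ (Remark~\ref{rmk:p*}). You have all the ingredients for this step, but it must be said explicitly for the biconditional to be proved.
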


\begin{proof}  We first show that if $v\in P^*(w_i)$ then $v\in P(w_i)$.
Since $v\in P^*(w_i)$, there exists a path $P$ in $G^{w_{i-1}}$  avoiding $w_i$ 
which joins $v$ to a vertex $z\in C^*(w_i)$. By Claim~\ref{claim:resteflow}{\em ii)} applied to
$w_{i-1}, w_{i-2}, \ldots, w_1$, each arc of $P$ may be successively replaced by paths in 
$G^{w_{i-2}}, G^{w_{i-3}}, \ldots, G^{w_1}, G$ so as to obtain a path in $G$ from $v$ to $z$.
Moreover, in  Claim~\ref{claim:resteflow}{\em ii)}, the path in $T$  from $w_1$ to $u$ cannot 
contain any $w_i$, $i>1$, since then the head $w_1$ would not have the largest value $po()$. 
This implies that the resulting path in $G$ also avoids $w_i$, and thus $v\in P(w_i)$. 

We are now ready to prove the claim.

$\Leftarrow$:  By Claim~\ref{claim:resteflow}{\em v}) $w_i$ remains a 
head in $G^{w_1}, \ldots,G^{w_{i-1}}$. Moreover, by the property we just proved, $v$ cannot belong to $P^*(w_1), \ldots, P^*(w_{i-1})$,
therefore $v$ is a vertex of $G^{w_{i-1}}$. Now, $v\in P(w_i)$ implies the existence of a path $P$ in $G$ avoiding $w_i$
and joining $v$ to some $z$ in $C(w_i)$. The resulting paths in $G^{w_1}, \ldots, G^{w_{i-1}}$ also avoid $w_i$
(since no vertex is collapsed into $w_i$) and join $v$ to $z$ (or the vertex $z$ collapses into).
But then $v\in P^*(w_i)$ since $C^*(w_i)$ contains $z$ (or the vertex $z$ collapses into), as the arc from
$z$ (or the vertex it collapses into) to $w_i$ remains a cycle arc (Claim~\ref{claim:resteflow}{\em iv})). 

$\Rightarrow:$ By the property at the beginning of the proof, we have that $v\in P(w_i)$.
If by contradiction $po(w_i)\neq\max\{po(w_j)\, |\, v\in P(w_j)\}$, then let $w_k$ be the head reaching the maximum value. 
According to the $\Leftarrow$ part of the claim, $v\in P^*(w_k)$. This is in contradiction
with  Remark \ref{rmk:p*} indicating that the sets $P^*(w_j)$ are disjoint. 

To finish the proof, by Claim~\ref{claim:Tarjan}, the property $w_j\pathh v$ holds for all $j$ such that
$v\in P(w_j)$. Since there is a unique path from $s$ to $v$ in $T$, all these heads $w_j$ are on this
path. As $po(w_j)>po(w_l)$ iff $j<l$ by definition, it follows that the order of the heads $w_j$ on the
path from $s$ to $v$ is the decreasing order of the index $j$. \end{proof}

\begin{figure}[t]
\centering
 \includegraphics[angle=0, width=7cm]{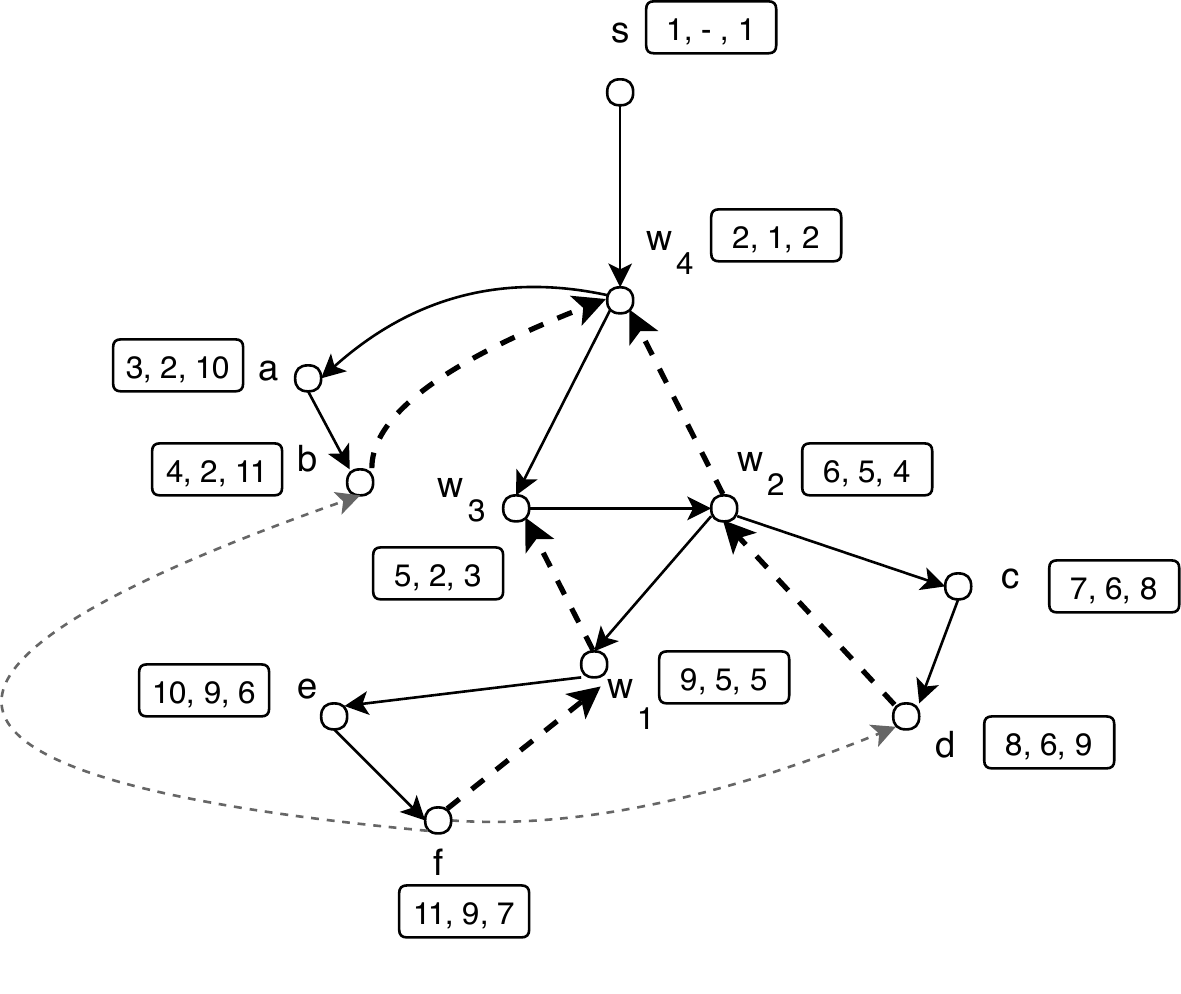}

 \caption{\label{fig:filrouge}An example of flow graph. The three integer values next to each vertex respectively
represent its $po(), hn()$ and $sn()$ numbers (the definitions are given in the text when we need them). As before,
plain black, bold dashed and dotted arrows respectively represent tree, cycle and cross arcs.}
 \end{figure}
 
%
\bigskip

Once Algorithm Reducibility is applied, the reduction order is computed as follows. First, perform a preorder traversal of $T$ 
(already built) by considering the children of each vertex in decreasing order of their $po()$ numbers. This new
preorder traversal assigns new numbers $sn(v)$ to the vertices $v$ such that \cite{Tarjan}:

\begin{equation}
 \label{eq:0}
sn(v)<sn(w)\, \hbox{for an arc}\,  vw\, \hbox{iff}\, vw\, \hbox{is a tree, forward or cross arc.}
\end{equation}

Then the reduction order $\alpha$ is established using the couples of values
$(hn(v),sn(v))$, for each $v\in V$, as follows: vertex $v$ appears before vertex $t$ in the reduction order
iff either $hn(v)>hn(t)$, or $hn(v)=hn(t)$ and $sn(v)<sn(t)$. Equivalently, all vertices collapsed into $w_1$
are before all the vertices collapsed into $w_2$ and so on. Vertices collapsed into $s$ (those with $hn(v)=1$)
are at the end of the reduction sequence. Moreover, vertices collapsed into the same vertex $w_i$
are ordered according to their increasing value $sn()$. Then, each vertex represents the reduction with its father 
in the tree $T'$ resulting from the previous reductions. This father is  exactly $w$ such that $v\in P^*(w)$.

\bex
On the example in Fig. \ref{fig:filrouge}, the resulting order $\alpha$ gives the sequence of vertices:
$e,f,c,d$, $w_2,w_1, w_3,$ $a, b, w_4,s$.
\label{ex:alpha}
\eex

\begin{rmk}
In the following, we call Reducibility+ the algorithm obtained from Reducibility by adding the necessary instructions
allowing it to output the heads $w_i$ ordered by decreasing value of $po()$, as well as the sets $P^*(s)$ and, for all $i$,  $P^*(w_i)$.
\end{rmk}

\subsubsection{Our algorithm for finding an LR-order}

A cycle arc $vw$ of $(G,s)$ is called {\em solved} if $v$ is collapsed
into $w$ (i.e. if $hn(v)=po(w)$) by Algorithm Reducibility, and {\em unsolved} otherwise. In the latter case, 
$w$ is also a head, and there exists a path in $T$ from $w$ to $hn(v)$ (since 
the cycle arcs from $v$ to $w$ and to $hn(v)$ means they are both on the path from $s$ to $v$; by definition,
$hn(v)$ is the one with higher $po()$ number, so the lower one). 

The LR-ordering algorithm (Algorithm \ref{algo:LR}) attempts to affect the vertices of $G$ either to 
the set $R$ (right vertices) or to the set $L$ (left vertices) following the idea that for each $w$ 
which is either a head or $s$, the vertices in $P^*(w)$, on the one hand, and $w$, on the other hand, should
be in different parts of the partition $(R,L)$. Instead of working with vertices, the algorithm starts by working
with {\em blocks}, that are sets of vertices. The blocks are the singletons containing one vertex $w$ each 
(where $w$ is a head or $s$) and the non-empty sets $P^*(w)\setminus W$ (so that each vertex of $G$ belongs to a unique block).
The algorithm thus builds (steps 3 to 5) the {\em undirected} graph $B$ whose vertices are the blocks and whose
edges join vertices that must be in different parts $(R,L)$. This graph turns out to be a tree (see Claim~\ref{claim:tree}),
and thus admits a partition in two sets of blocks $R'$ and $L'$ with no internal edges, computed in step 6.
Once the partition is found, $R$ (resp. $L$) collects all the vertices in some block of $R$ (resp. of $L$).
The graphs $G[R]$ and $G[L]$ turn out to be acyclic (Claim~\ref{claim:acyclic2} below), and thus each of
them is an acyclic FVS for $G$. In order to find an LR-order of $G$, it is sufficient to order
the vertices in each part according to the topological order, to reverse the order found for $G[L]$,
and to concatenate the two resulting sequences of vertices (steps 9 to 11).
%

\bex
On the example in Fig. \ref{fig:filrouge}, the boxes are $\{w_1\}$,  $\{w_2\}$, $\{w_3\}$, $\{w_4\}$,
$\{e,f\}$(=$P^*(w_1)\setminus W$),  $\{c,d\}$(=$P^*(w_2)\setminus W$) and $\{a,b\}$(=$P^*(w_4)\setminus W$).
The set $P^*(w_3)\setminus W$ is empty, and is not a vertex of $B$. The edges of $B$ are therefore joining
$\{w_1\}$ to $\{e,f\}$ (due to $w_1$ and $P^*(w_1)$), $\{w_2\}$ to $\{c,d\}$ (due to $w_2$), $\{w_3\}$ to $\{w_1\}$
and $\{w_{2}\}$ (due to $w_3$), $\{w_4\}$ with $\{w_3\}$ and $\{a,b\}$ (due to $w_4$), and $\{s\}$ to $\{w_4\}$
(due to $s$). The partition $(R',L')$ of $B$ is thus:

$R'=\{\{w_1\}, \{w_2\},\{w_4\}\}$

$L'=\{\{e,f\},\{c,d\},\{w_3\}, \{a,b\}, \{s\}\}$

\noindent or viceversa. Then $R=\{w_1, w_2,w_4\}$ and $G[R]$  has only two arcs, from $w_2$ to the other vertices, 
meaning that  one can choose for instance the topological order $R^*$ given by $w_2, w_1,w_4$. Similarly,
$L=\{e,f,c,d,w_3, a,b, s\}$, with arcs $ab,ef,cd, fb, fd$. The topological order on $G[R]$ may be
chosen to be, for instance, $s, e$, $f, a, b, c,$ $d, w_3$ which yields, after a complete reversal,
$L^*:w_3, d, c, b, a, f, e, s$.  The LR-order $U$ is then $w_2,w_1,w_4$, $w_3,d,$ $c,b,a,f,e, s$, with the three first vertices being
right vertices and the remaining ones being left vertices.
\eex

Note that, in order to avoid confusions, we always use the term {\em box} to designate the vertices of $B$,
and we reserve the term {\em vertex} for the vertices of $G$. 
The edges of the undirected graph $B$ then 
join each box $\{w\}$ (corresponding to some $w\in W$) with the boxes $\{w'\}$ whose unique vertex belongs to $P^*(w)$, as well as 
to the box $P^*(w)\setminus W$ containing the other elements in $P^*(w)$. 

\begin{algorithm}[t]
\caption{LR-ordering algorithm}
\begin{algorithmic}[1]
\REQUIRE A reducible flow graph $(G,s,T)$.\\
\ENSURE A sequence $U$ of the vertices in $G$, defining an LR-order of $V$.\\

\STATE Apply Algorithm Reducibility+ on $(G,s,T)$.
\STATE $W\leftarrow \{w_1, \ldots, w_k,s\}$
\STATE $V'\leftarrow \cup_{w\in W} \{\{w\}, P^*(w)\setminus W\}$ \hfill // if $P^*(w)\setminus W=\emptyset$, do not use it
\STATE $E'\leftarrow\{\{w\}X\,|\, w\in W, X=P^*(w)\setminus W\, \hbox{or}\,X=\{w'\}\,\, \hbox{with}\,\, w'\in W\cap P^*(w)\}$ \hfill //edges, not arcs
\STATE $B\leftarrow (V',E')$ 
\STATE $(R',L')\leftarrow$ partition of $V'$ such that each of $B[R'], B[L']$ is edgeless
\STATE $R\leftarrow \{x\in V\,|\, x\, \hbox{belongs to a box in}\, R'\}$
\STATE $L\leftarrow \{w\in V\,|\, x\, \hbox{belongs to a box in}\, L'\}$
\STATE $R^*\leftarrow $  a topological order of the vertices in $G[R]$.
\STATE $L^*\leftarrow $  a reversed topological order of the vertices in $G[L]$.
\STATE $U\leftarrow$ concatenate $R^*$ and $L^*$ in this order.
\STATE Return $U$.
\end{algorithmic}
\label{algo:LR}
\end{algorithm}

\bthm
Algorithm LR-ordering computes in polynomial time an LR-order of a reducible flow graph $(G,s)$.
\label{thm:reducible}
\ethm 

\begin{proof}  It is clear that the algorithm runs in polynomial time, since all the operations it 
performs are polynomial. The proof is organized in three claims, showing that the behavior of the algorithm
is the one we expected in our previous explanations.

\begin{fait}
 The (undirected) graph $B$ defined in steps 3-5 of Algorithm LR-ordering is a tree.
\label{claim:tree}
\end{fait}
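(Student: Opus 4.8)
The plan is to show that $B$ is connected and acyclic, hence a tree. I will use the structure revealed by Claim~\ref{claim:p*max}: the heads $w_1, w_2, \ldots, w_k$ together with $s$ form a hierarchy dictated by the DFST $T$, since the heads on the tree path from $s$ to any vertex $v$ appear in decreasing order of index, and each vertex $v$ lies in exactly one set $P^*(w)$ (Remark~\ref{rmk:p*}). So every box of $B$ is either a singleton $\{w\}$ for $w\in W$, or the ``leftover'' set $P^*(w)\setminus W$ for some $w\in W$.

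First I would count boxes and edges. Each $w\in W$ contributes exactly one box $\{w\}$, and possibly one box $P^*(w)\setminus W$ (omitted when empty). Each $w\in W$ also contributes, via step 4, exactly one edge for every element of $P^*(w)$ that lies in $W$, plus one edge to the box $P^*(w)\setminus W$ when the latter is nonempty. The key combinatorial observation is that the edges incident to the box $\{w\}$ ``from above'' (i.e.\ as the second endpoint $X=\{w\}$ when some other $w''$ has $w\in P^*(w'')$) number exactly one: since the $P^*(\cdot)$ are disjoint and cover $V\setminus\{s\}$, every $w\in W\setminus\{s\}$ belongs to $P^*(w'')$ for a unique $w''$, namely the one with $hn(w)=po(w'')$. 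Thus each box $\{w\}$ with $w\neq s$ has exactly one ``parent'' edge, and $\{s\}$ has none. A short computation then shows $|E'| = |V'| - 1$: summing over $w\in W$, the number of edges created at step~4 when processing $w$ equals $|\{w'\in W\cap P^*(w)\}| + [\,P^*(w)\setminus W\neq\emptyset\,]$, which is exactly the number of boxes other than $\{s\}$ that have $w$ as parent; since every box except $\{s\}$ has exactly one parent, the total is $|V'|-1$.

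Next I would establish connectedness. I claim one can reach the box $\{s\}$ from any box by repeatedly following parent edges. Starting from a box $X$: if $X=P^*(w)\setminus W$ for some $w$, the edge $\{w\}X\in E'$ moves us to $\{w\}$; if $X=\{w\}$ with $w\neq s$, then $w$ belongs to the unique $P^*(w'')$, and the edge $\{w''\}\{w\}$ (present in $E'$ because $w\in W\cap P^*(w'')$) moves us to $\{w''\}$. Each such step strictly increases the $po()$ value of the associated head (by Claim~\ref{claim:p*max}, $hn(w) = po(w'')< po(w)$ would be wrong — rather $po(w'')>po(w)$ since $w\in P^*(w'')$ forces $w''\pathh w$ and $po$ increases along $T$), so the process terminates, and it can only terminate at $\{s\}$. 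Hence $B$ is connected. A connected graph with $|V'|-1$ edges is a tree, which is the claim.

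The step I expect to be the main obstacle is the bookkeeping in the edge count — making precise that step~4 creates, for each $w\in W$, exactly one edge per box having $w$ as ``parent'', with no double counting and no omitted edges, and that this matches the ``every non-$\{s\}$ box has a unique parent'' statement. In particular one must be careful about the box $P^*(w)\setminus W$: it is created by $w$ (so $w$ is its parent) but only when it is nonempty, and this is exactly the conditional term in the edge count; and one must check that no edge in $E'$ joins two boxes neither of which is a singleton $\{w\}$, which is immediate from the definition in step~4 since every edge has the form $\{w\}X$. Once these are pinned down, connectedness plus the edge count gives the tree property; alternatively, I could argue acyclicity directly from the strictly-increasing-$po()$ property of parent edges together with the fact that each box has at most one parent, and then use connectedness, but the edge-counting route is cleanest.
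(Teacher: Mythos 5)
Your proof is correct and rests on the same observation as the paper's: every box other than $\{s\}$ is contained in a unique $P^*(w)$ (by Remark~\ref{rmk:p*}), so it has exactly one father, and the edges of $B$ are exactly the box--father pairs; you simply make explicit the edge count $|E'|=|V'|-1$ and the connectedness via the father chain, which the paper leaves implicit. One small slip: the direction of monotonicity is reversed --- if $w\in P^*(w'')$ then $w''\pathh w$ in $T$, and preorder numbers \emph{increase} away from the root, so $po(w'')<po(w)$ and following father edges strictly \emph{decreases} $po()$ (terminating at $\{s\}$, which has the minimum value $po(s)=1$); this does not affect the validity of your termination argument, only its justification.
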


\begin{proof} Each box $X$ in $B$ but $\{s\}$ is collapsed exactly once in Algorithm Reducibility, since by definition
each box $X$ is a subset of some $P^*(w_{i(X)})$. According to the definition of $E'$, the edges of $B$ are exactly the pairs 
$\{w_{i(X)}\}X$. Then each box $X$ but $\{s\}$ has exactly one father $w_{i(X)}$, and $B$ is a tree. \end{proof}.

\begin{fait}
 The graphs $G[R]$ and $G[L]$ are acyclic.  
\label{claim:acyclic2}
 \end{fait}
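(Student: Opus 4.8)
The plan is to show that a directed cycle in $G[R]$ (the argument for $G[L]$ being symmetric) would force a ``bad'' edge of $B$ inside $R'$, contradicting the fact that $B[R']$ is edgeless. First I would recall the structure of the arcs of $G$ relative to the reduction: by Claim~\ref{claim:domin}, $(G,s)$ reducible means every cycle arc $vw$ has $w$ dominating $v$, and by Equation~(\ref{eq:0}) every tree, forward or cross arc $vw$ satisfies $sn(v)<sn(w)$. Hence any directed cycle of $G$ must use at least one cycle arc. So if $G[R]$ contains a directed cycle $Z$, then $Z$ contains a cycle arc $vw$ with both $v,w\in R$.

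Next I would analyse where $v$ and $w$ sit with respect to the blocks and the sets $P^*(\cdot)$. The key point is Claim~\ref{claim:p*max} (together with Claim~\ref{claim:Tarjan}): if $vw$ is a cycle arc then $w$ is a head (or $s$), $v\in P(w)$, and $w$ appears on the tree path from $s$ to $v$. Let $w_i=w$ if the arc is solved, i.e.\ $hn(v)=po(w)$, so $v\in P^*(w)$ and then the edge $\{w\}X$ of $B$, where $X$ is the box containing $v$, would have both endpoints in $R'$ — contradiction, since $w$ is its own box and $v$'s box $X\neq\{w\}$. If the cycle arc $vw$ is unsolved, then $hn(v)=po(w')$ for some other head $w'$ with $w$ on the tree path from $w'$ to $v$; here I would need to trace the chain of blocks: $v$ lies in the box $P^*(w')\setminus W$ (or $\{v\}$ if $v\in W$), $w'$ lies (via $w'\in P^*(w'')$ for the next head up, etc.) eventually connecting by a path of $B$-edges to $\{w\}$, and I would argue that consecutive boxes along this $B$-path alternate between $R'$ and $L'$ while $v$'s box and $\{w\}$ must (because $v,w\in R$) both be in $R'$ — but a path in a properly 2-colored tree between two like-colored vertices has even length, and I would derive a parity/structural contradiction from the specific way $B$ is built, namely that every edge of $B$ is of the form $\{w_{i(X)}\}X$ and the ``father'' map makes $B$ a tree rooted at $\{s\}$ (Claim~\ref{claim:tree}). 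Concretely, the honest route is: the tree $B$ has a unique 2-coloring up to swap, so $R'$ and $L'$ are exactly the two color classes; two vertices of $B$ receive the same color iff their tree-distance in $B$ is even; and I would show that for a cycle arc $vw$ (solved or not) the box of $v$ and the box $\{w\}$ are at odd $B$-distance, hence get different colors, hence $v$ and $w$ cannot both be in $R$.

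For the odd-distance claim, in the solved case the distance is exactly $1$. In the unsolved case, by Claim~\ref{claim:p*max} the heads $w=w_{j_0}$ with $v\in P(w)$ appear on the $s$--$v$ path in $T$ in decreasing index order, and $hn(v)=po(w_{j_1})$ is the largest among them, so $w_{j_1}$ is the father (in $B$) of $v$'s box; the remaining such heads $w_{j_0}=w, w_{j_2},\dots$ sit along the path, and I would use Claim~\ref{claim:resteflow}(v) plus the structure of $P^*$ to show that the $B$-path from $v$'s box up to $\{w\}$ passes alternately through a box and the head it was collapsed into, giving odd length. The cleanest way to organize this is probably to observe that $B$ is the ``reduction tree'': $v$'s box is a child of $\{hn(v)\}$, $\{hn(v)\}$'s box in the next layer is a child of $\{hn(hn(v))\}$ if $hn(v)\neq s$, and so on, and that $w$ lies on this ancestor chain with the arc $vw$ going from a descendant to a strict ancestor — an odd number of $B$-edges away precisely because each head is its own singleton box, inserting exactly one extra step per level.

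The main obstacle I anticipate is exactly this unsolved-cycle-arc bookkeeping: making the ``$v$'s box and $\{w\}$ are at odd distance in $B$'' statement precise and airtight, since it requires carefully relating the path in $T$ from $s$ to $v$, the nested sequence of $P^*$ sets containing $v$, the $hn(\cdot)$ values along the way, and the edge set $E'$ of $B$. I expect the solved case and the reduction-to-``a cycle must contain a cycle arc'' step to be routine, and the acyclicity of $G[L]$ to follow by the mirror argument (left vertices, arcs oriented toward lower $sn$/toward the root), so essentially all the work is in handling one unsolved cycle arc and extracting the contradiction with $B[R']$ being edgeless.
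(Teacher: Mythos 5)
There is a genuine gap, and it sits exactly where you anticipated it: the key claim that for every cycle arc $vw$ the box of $v$ and the box $\{w\}$ are at odd distance in $B$ is false for unsolved cycle arcs, and with it the whole strategy of deriving a contradiction from a single cycle arc having both endpoints in $R$ collapses. Concretely, take tree arcs $s\to w_3\to w_1\to e$ and cycle arcs $ew_1$ and $ew_3$. This flow graph is reducible ($w_1$ and $w_3$ both dominate $e$), the algorithm computes $P^*(w_1)=\{e\}$, $P^*(w_3)=\{w_1\}$, $P^*(s)=\{w_3\}$, so $B$ is the path on the boxes $\{e\},\{w_1\},\{w_3\},\{s\}$ and the unique $2$-coloring puts $\{e\}$ and $\{w_3\}$ in the same class. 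Thus the unsolved cycle arc $ew_3$ has both endpoints in the same part, at even $B$-distance $2$ --- yet both induced subgraphs are still acyclic, because any directed cycle using $ew_3$ must return to $e$ through $w_1$, which lies in the other part. So a cycle arc inside one part is not by itself contradictory; what must be excluded is an \emph{entire} cycle inside one part, and your reduction to ``one cycle arc with both endpoints in $R$'' discards precisely the information (the non-cycle arcs of the putative cycle) needed to get the contradiction.

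The paper's proof keeps that information. It establishes a monotonicity property for \emph{every} arc $xy$ of $G[R]$: $hn(x)\geq hn(y)$ always, with strict inequality when $xy$ is a cycle arc (this is where Claim~\ref{claim:domin} and Claim~\ref{claim:p*max} are used, to locate the head that $x$ collapses into on the tree path through the head that $y$ collapses into), while $sn(x)<sn(y)$ for tree, forward and cross arcs. Around a cycle of $G[R]$ the value $hn$ must then be constant, so the cycle contains no cycle arc, and then $sn$ strictly increases around it --- a contradiction. Your opening step (every directed cycle of $G$ contains a cycle arc) and your solved case ($B$-distance $1$, hence different colors) are both correct; to complete the argument you would have to replace the odd-distance claim by something equivalent to this $(hn,sn)$ monotonicity along all arcs of $G[R]$.
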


\begin{proof} The proof is similar for $G[R]$ and $G[L]$. We therefore present it only for $G_R$.
Let $xy$ be an arc of $G[R]$.

\bigskip

(D1)  {\em If $xy$ is a cycle arc then $sn(x)>sn(y)$ and $hn(x)>hn(y)$. Otherwise,  $sn(x)<sn(y)$ and $hn(x)\geq hn(y)$.}

\bigskip

{\bf Case 1.} Consider first the case where $y$ collapses into a head $w\neq s$. Then $y\in P^*(w)$ and
thus by Claim~\ref{claim:p*max} we have $y\in P(w)$. Let $wb_1\ldots b_i(=y)\ldots b_l(=v)$ be 
the cycle of $G$ obtained by concatenating the path  from $w$ to $y$ in $T$ (see Claim~\ref{claim:Tarjan}),
 a path avoiding $w$ that joins $y$ with some $v\in C(w)$ and the arc $vw$. We show that $w$ dominates $x$ in $G$. 
Indeed, if a path $P_1$ from $s$ to $x$ avoiding $w$ existed,  then the path $P=P_1b_i(=y)b_{i+1}\ldots b_l(=v)$ would be a path 
from $s$ to $v$ avoiding $w$, a contradiction to Claim~\ref{claim:domin}.
Then $w$ dominates $x$, and thus $w$ is on the path $P'$ in $T$ joining $s$ to $x$. Then $x$ collapses into one of the vertices in $P'$
(by Claim~\ref{claim:p*max}). If it collapses into one of the vertices $z$ between $w$ (non-included) and $x$, then $hn(x)>hn(y)$ since
$po(z)>po(w)$. Otherwise, $x$ belongs to $P^*(w)$ since there is a path from $x$ to $v(=b_l)$, namely $xb_i(=y)b_{i+1}\ldots b_l(=v)$,
and $w$ satisfies $po(w)=\max\{po(w)\, |\, x\in P(w)\}$ (see Claim~\ref{claim:p*max}). 
In this case, $hn(x)=hn(y)$. Thus in all cases $hn(x)\geq hn(y)$. However, the equality cannot occur when $xy$ is a cycle arc. Indeed, if $xy$ is a cycle arc, then $y$
is on the path in $T$ from $s$ to $x$ (by the definition of a cycle arc) and moreover $xy$ is unsolved. Then $x$ cannot collapse 
into $y$ and 
must collapse into a vertex $w'$ situated on the path from $y$ to $x$ in $T$. But then $hn(x)>hn(y)$ since $po(w')>po(w)$.
Property (\ref{eq:0}) of $sn()$ finishes the proof in this case.

{\bf Case 2.} In this case, $y$ collapses into $s$, thus necessarily $hn(y)=1$ and thus $hn(x)\geq hn(y)$. Again, for an unsolved cycle arc 
$xy$ the equality cannot occur since then $x$ must collapse into some $w$ which necessarily has larger $po()$ than $s$, and thus 
has strictly larger $hn()$.
According to property (\ref{eq:0}), $sn(x)<sn(y)$ for an arc $xy$ iff $xy$ is a tree, forward or cross arc and we are done.

\bigskip

(D2) {\em $G[R]$ cannot contain a cycle $A=a_1a_2\ldots a_t$.}

\bigskip

By (D1), if such a cycle exists, arcs $a_ja_{j+1}$ ($j=1,\ldots, t-1$) imply that $hn(a_1)\geq hn(a_2)\geq \ldots \geq hn(a_t)$.
If there is at least one strict inequality, we deduce $hn(a_1)>hn(a_t)$ and thus the arc $a_ta_1$ does not satisfy (D1), a contradiction.
Therefore, none of the arcs  $a_ja_{j+1}$ ($j=1,\ldots, t$), where by convention $a_{t+1}=a_1$, is a cycle arc, as cycle arcs
satisfy $hn(a_j)> hn(a_{j+1})$. But then again by (D1) we have that $sn(a_1)< sn(a_2)< \ldots < sn(a_t)$, and the arc
 $a_ta_1$ does not satisfy (D1). \end{proof}
\medskip

\begin{fait}
 The sequence $U$ gives a standard LR-order of $G$.
\end{fait}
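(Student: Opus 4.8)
The plan is to verify that the sequence $U$ output in step 11 is indeed a standard LR-order of $G$, i.e.\ that there is a linear order on $V$ in which every vertex has all its successors on one side. First I would recall the setup: by Claim~\ref{claim:tree} the graph $B$ is a tree, hence $2$-colorable, so the partition $(R',L')$ in step 6 is well defined, and consequently $(R,L)$ is a partition of $V$. By Claim~\ref{claim:acyclic2} both $G[R]$ and $G[L]$ are acyclic, so the topological order $R^*$ of $G[R]$ and the reversed topological order $L^*$ of $G[L]$ exist. It then suffices to show that in the concatenation $U=R^*L^*$, every vertex of $R$ is a right vertex (all its successors come after it in $U$) and every vertex of $L$ is a left vertex (all its successors come before it).

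Next I would split the verification into the two symmetric cases. Take $x\in R$. Its successors inside $R$ all appear after $x$ in $R^*$ by the definition of topological order, hence after $x$ in $U$; its successors inside $L$ all appear in $L^*$, which is placed entirely after $R^*$, hence again after $x$. So $x$ is a right vertex. Symmetrically, take $x\in L$. Its successors inside $L$ all appear \emph{before} $x$ in $L^*$ because $L^*$ is the \emph{reversed} topological order of $G[L]$; its successors inside $R$ all appear in $R^*$, which precedes $L^*$, hence before $x$. So $x$ is a left vertex. This shows $U$ is an LR-order whose set of right vertices is $R$ and whose set of left vertices is $L$. It is moreover standard, since by construction all right vertices precede all left vertices.

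The one point that genuinely needs the structure of $B$ (rather than being a pure bookkeeping argument about topological orders) is hidden in the claim that $(R,L)$ is a \emph{legitimate} partition — i.e.\ that $R$ and $L$ are both nonempty and that no cycle of $G$ lies entirely inside $R$ or entirely inside $L$. Nonemptiness of both sides follows because $B$ has at least one edge, namely $\{s\}X$ for some box $X$ (as $G$ has cycles, there is at least one head, so $W$ has at least two elements and $B$ has at least two boxes joined by an edge); a $2$-coloring of a graph with an edge uses both colors. The acyclicity of $G[R]$ and $G[L]$ is exactly Claim~\ref{claim:acyclic2}, which I am allowed to invoke. So the main obstacle here is really just assembling these pieces cleanly; the substantive combinatorial work ($hn$/$sn$ monotonicity along arcs) has already been discharged in Claim~\ref{claim:acyclic2}, and the reduction "acyclic FVS $\Leftrightarrow$ LR-order" is Claim~\ref{claim:FVSxcl}. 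I would therefore conclude by remarking that, by Claim~\ref{claim:FVSxcl}, $R$ (equivalently $L$) is an acyclic FVS of $G$, so $U$ witnesses that $(G,s)$ is an \xcl and Algorithm LR-ordering is correct.
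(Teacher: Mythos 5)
Your argument is correct and is exactly the reasoning the paper compresses into its one-line proof (``obvious now, since $G[R]$ and $G[L]$ are acyclic, and $U$ is built using their topological orders''): you verify that each $x\in R$ is a right vertex and each $x\in L$ is a left vertex via the placement of $R^*$ before the reversed order $L^*$. The extra remarks on nonemptiness and Claim~\ref{claim:FVSxcl} are harmless additions beyond what the statement requires.
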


\begin{proof} This is obvious now, since $G[R]$ and $G[L]$ are acyclic, and $U$ is built using their topological orders.
\end{proof}

Theorem~\ref{thm:reducible} is now proved. \end{proof}

\begin{rmk}
On the example in Fig. \ref{fig:contrex}, Algorithm LR-ordering finds $R=\{v_1, v_3, v_4, v_6,v_7\}$ and $L=\{v_2, v_5, v_8, v_9\}$,
each of which induce an acyclic graph. Note that none of the acyclic FVS $R$ and $L$ solves {\sc Acyclic MFVS}, since a
minimum size acyclic FVS with three vertices exists (take for instance $\{v_2, v_5, v_8\}$). 
\end{rmk}

\section{Conclusion}\label{sect:conclusion}

In this paper, we investigated feedback vertex set problems, both from the viewpoint of their hardness and by proposing 
easier particular cases. We have shown close relationships between these problems, in their standard or acyclic variant,
and the 3-SAT problems, in their standard or NAE variant. As a result, we showed the NP-completeness of {\sc Acyclic FVS}
even on the class of 3c-digraphs. We have also shown close relationships between the minimal variants of the aforementioned  problems,
and deduced that even the choice between two proposed values that are one unit far from each other is NP-hard.
And this holds even on the class of 3c-digraphs (which are the input graphs in the case of FVS problems, and the
representative graphs of the clauses provided with an order of the literals, in the case of 3-SAT problems).

Many questions remain open however. Is the class of 3c-digraphs a hard case for other NP-complete problems? If so,
what structural properties justify it? Can we extend the NP-hardness of the {\sc 2-Choice} variants we have studied
to smaller classes of graphs? Are the \xcls an important or useful class of graphs, for which - for instance - 
other NP-hard problems than {\sc Acyclic FVS} have polynomial solutions? To start with, is it possible to solve
MFVS or {\sc Acyclic MFVS} in polynomial time on \xclssans? 
Or are there many other classes of graphs ({\em e.g.} the
cyclically-reducible,  the quasi-reducible graphs or the completely contractible graphs that we cited in the
Introduction) that are subclasses of \xclssans? 


\bibliographystyle{plain}
\bibliography{XClass}
\end{document}